\definecolor{linkcolor}{rgb}{0,0,0.6}
\definecolor{mygreen}{rgb}{0.32,1.65,0.50}
\definecolor{myblue}{rgb}{0.08, 0.38, 0.74}
\date{}
\newcommand{\hs}{\mathfrak{hs}}
\newcommand{\so}{\mathfrak{so}}
\newcommand{\gl}{\mathfrak{gl}}
\newcommand{\iso}{\mathfrak{iso}}
\newcommand{\Pd}[1]{\mathcal{P}^{(#1)}}
\newcommand*\Bell{\ensuremath{\boldsymbol\ell}}
\newcommand{\blambda}{\boldsymbol{\lambda}}
\newcommand{\bmu}{\boldsymbol{\mu}}
\newcommand{\balpha}{\boldsymbol{\alpha}}
\newcommand{\brho}{\boldsymbol{\rho}}
\newcommand{\C}{\mathcal{C}}
\newcommand{\D}{\mathcal{D}}
\newcommand{\W}{\mathcal{W}}
\newcommand{\fW}{\mathfrak{W}}
\newcommand{\N}{\mathbb{N}}
\newcommand{\Z}{\mathbb{Z}}
\newcommand{\Y}{\mathbb{Y}}
\newcommand{\V}{\mathcal{V}}
\newcommand{\U}{\mathcal{U}}
\newcommand{\Sn}{\mathcal{S}}
\newcommand{\flimit}{\underset{\lambda \rightarrow 0}{\longrightarrow}}
\newcommand{\rac}{\mathrm{Rac}}
\newcommand{\di}{\mathrm{Di}}
\newcommand{\rank}{\mathrm{rank}}
\newcommand{\Ann}{\mathrm{Ann}}
\newcommand{\ez}{\epsilon_0}
\newcommand{\eoh}{\epsilon_{1/2}}
\newcommand{\zero}{\boldsymbol{0}}
\newcommand{\half}{\boldsymbol{\tfrac12}}
\newcommand{\blb}{\boldsymbol{(}}
\newcommand{\brb}{\boldsymbol{)}}
\newcommand{\+}{\boldsymbol{+}}
\newcommand{\ogl}{\underset{\gl(d-1)}{\otimes}}
\newcommand{\branching}{\overset{\so(d)}{\underset{\so(d-1)}{\downarrow}}}
\newcommand{\branchingmod}[2]{\overset{\so(#1)}{\underset{\so(#2)}{\downarrow}}}
\newcommand{\branch}{\overset{\so(2,d)}{\underset{\so(2,d-1)}{\downarrow}}}
\newcommand{\branchiso}{\overset{\so(2,d)}{\underset{\iso(1,d-1)}{\downarrow}}}
\newcommand{\branchmod}[2]{\overset{\so(2,#1)}{\underset{\so(2,#2)}{\downarrow}}}
\newtheorem{theorem}{Theorem}[section] \newtheorem*{theorem*}{Theorem}
 \newtheorem*{lemma*}{Lemma}
\newtheorem{proposition}[theorem]{Proposition}
\theoremstyle{definition}
\newtheorem{definition}{Definition}[section]
\newtheorem*{definition*}{Definition}
\definecolor{theRed}{rgb}{0.56,0,0}
\newcommand{\myfontbackref}[1]{
    \hspace*{\fill} \mbox{\textsl {\small #1}}
}
\renewcommand*{\backref}[1]{}
\renewcommand*{\backrefalt}[4]{%
  \ifcase #1 \myfontbackref{(Not cited in the text)}
  \or        \myfontbackref{(Cited on page #2)}
  \else      \myfontbackref{(Cited on pages #2)}
  \fi
}
\newenvironment{remark}
{ \begin{changemargin}{1.2cm}{0.5cm} \begin{mdframed}[style=rem]
      \underline{\it \bfseries Remark.} }
{ \end{mdframed} \end{changemargin} }
\newenvironment{example}
{ \begin{center}\rule{\textwidth}{1.5pt}\end{center} 
    \underline{\sc Example:} \it }
{ \begin{center}\rule{\textwidth}{1.5pt}\end{center} }
\begin{document}

\thispagestyle{empty}
\pagenumbering{gobble}

\begin{center}
  \rule{0.4\textwidth}{.9pt} \bf \\
  \vspace{.3cm}
  \Large A note on rectangular partially massless fields\\ 
  \rule{0.4\textwidth}{.9pt} 
\end{center}

\vspace*{15pt}

\begin{center}
Thomas Basile\footnote{E-mail address:
  \href{mailto:thomas.basile@umons.ac.be}{\tt
    thomas.basile@khu.ac.kr}}
\end{center}

\vspace*{.5cm}

\begin{footnotesize} 
  \begin{minipage}[c]{0.45\textwidth}
    \begin{center}
      Laboratoire de Math\'ematiques et Physique Th\'eorique\\ Unit\'e
      Mixte de Recherche $7350$ du CNRS\\ F\'ed\'eration de Recherche
      $2964$ Denis Poisson\\ Universit\'e Fran\c{c}ois Rabelais, Parc
      de Grandmont\\ 37200 Tours, France \\
      \vspace{2mm}
    \end{center}
  \end{minipage}
  \hfill
  \begin{minipage}[c]{0.45\textwidth}
    \begin{center}
      Groupe de M\'ecanique et Gravitation\\ Service de Physique
      Th\'eorique et Math\'ematique\\ Universit\'e de Mons --
      UMONS\\ 20 Place du Parc\\ 7000 Mons, Belgique\\
    \vspace*{.3cm}
    \end{center}
  \end{minipage}
  \begin{center}
    Department of Physics and Research Institute of Basic Science,\\
    Kyung Hee University, Seoul 02447, Korea
  \end{center}
\end{footnotesize}

\vspace*{25pt}

\begin{abstract}
  We study a class of non-unitary $\so(2,d)$ representations (for even
  values of $d$), describing mixed-symmetry partially massless fields
  which constitute natural candidates for defining higher-spin
  singletons of higher order. It is shown that this class of
  $\so(2,d)$ modules obeys natural generalisations of a couple of
  defining properties of unitary higher-spin singletons. In
  particular, we find out that upon restriction to the subalgebra
  $\so(2,d-1)$, these representations branch onto a sum of modules
  describing partially massless fields of various depths. Finally,
  their tensor product is worked out in the particular case of $d=4$,
  where the appearance of a variety of mixed-symmetry partially
  massless fields in this decomposition is observed.
\end{abstract}

\vspace{20pt}

\tableofcontents

\vfill

\pagebreak
\setcounter{footnote}{0}
\pagenumbering{arabic}

%%%%%%%%%%%%%%%%%%%%%%
\section{Introduction}
\label{sec:intro}
%%%%%%%%%%%%%%%%%%%%%%
The completion of the Bargmann-Wigner program in anti-de Sitter (AdS)
spacetime\,\footnote{In the sense that for every $\so(2,d)$ UIRs of
  the lowest energy type, i.e. in the discrete series of
  representations, a realisation on a space of solutions of a wave
  equation is known. It should be noted however that the recent works
  \cite{Metsaev:2016lhs, Metsaev:2017ytk} uncovered the existence of
  ``continuous spin'' fields on AdS which do not fall into the
  discrete series, but might correspond to $\so(2,d)$ module induced
  from the non-compact subalgebra $\so(1,1) \oplus \so(1,d-1)$
  (instead of the maximal compact subalgebra $\so(2)\oplus\so(d)$) as
  noticed in \cite{Bekaert:2017khg}.} lead to some surprising lessons
concerning the definition of masslessness in other backgrounds than
Minkowski space. If nowadays, the most common way to discriminate
between massless and massive fields in AdS is whether or not they
enjoy some gauge symmetry, other proposals which involve a particular
kind of $\so(2,d)$ representations known as ``singletons'', were put
forward \cite{Angelopoulos:1980wg, Angelopoulos:1997ij}. Indeed, the
proposed notions of ``conformal masslessness'' and ``composite
masslessness'' both rely on two crucial properties of singletons,
namely\,\footnote{See e.g. \cite{Iazeolla:2008ix} where those
  properties were studied using the harmonic expansion of fields in
  (A)dS spacetimes.}:
\begin{itemize}
\item They are unitary and irreducible representations (UIRs) of
  $\so(2,d)$ that remain irreducible when restricted to UIRs of
  $\so(2,d-1)$, or in other words they correspond to the class of
  elementary particles in $d$-dimensional anti-de Sitter space which
  are conformal. This property is the very definition of a conformally
  massless UIR, and it turns out that the singletons are precisely the
  $\so(2,d)$ UIRs to which conformally massless $\so(2,d-1)$ UIRs can
  be lifted.
\item The tensor product of two $\so(2,3)$ singletons contains all
  conformally massless fields in AdS$_4$ \cite{Flato:1978qz}. In any
  dimensions however, the representations appearing in the
  decomposition of the tensor product of two $\so(2,d)$ singletons (of
  spin 0 or $\tfrac12$) are no longer conformally massless but make
  up, by definition, all of the composite massless UIRs of $\so(2,d)$
  \cite{Angelopoulos:1999bz, Vasiliev:2004cm, Dolan:2005wy,
    Iazeolla:2008ix}. In other words, composite massless UIRs are
  those modules which appear in the decomposition of the tensor
  product of two singletons.
\end{itemize}

This tensor product decomposition, called the Flato-Fr\o{}nsdal
theorem, is crucial in the context of Higher-Spin Gauge Theories and
can be summed up as follows (in the case of two scalar singletons):
\begin{equation}
  \rac \otimes \rac = \text{Massive scalar } \oplus
  \bigoplus_{s=1}^\infty \text{Gauge field of spin } s\, ,
  \label{FlatoFronsdalRac}
\end{equation}
where $\rac$ denotes the $\so(2,d)$ scalar singleton. Notice that the
spin-$s \geqslant 1$ gauge fields are both ``composite massless'' by
definition, as well as massless in the modern sense (as they enjoy
some gauge symmetry), whereas the scalar field is considered massive
in the sense of being devoid of said gauge symmetries, despite the
fact that it is also ``composite massless''. The $\so(2,d)$ UIRs on
the right hand side make up the spectrum of fields of Vasiliev's
higher-spin gravity \cite{Vasiliev:1988xc, Vasiliev:1990en,
  Vasiliev:1992av, Vasiliev:2003ev} (see e.g. \cite{Bekaert:2010hw}
and \cite{Bekaert:2005vh, Didenko:2014dwa} for, respectively,
non-technical and technical reviews). This decomposition can also be
interpreted in terms of operators of a free $d$-dimensional Conformal
Field Theory (CFT) as on the left hand side, the tensor product of two
scalar singletons can be thought of as a bilinear operator in the
fundamental scalar field and the right hand side as the various
conserved currents that this CFT possesses. This dual interpretation
of \eqref{FlatoFronsdalRac} is by now regarded as a first evidence in
favor of the AdS/CFT correspondence \cite{Maldacena:1997re,
  Witten:1998qj, Gubser:1998bc} in the context of Higher-Spin theory
\cite{Sezgin:2002rt, Klebanov:2002ja}. This duality relates (the type
A) Vasiliev's bosonic (minimal) higher-spin gravity to the free $U(N)$
($O(N)$) vector model and has passed several non-trivial checks since
it has been proposed, from the computation and matching of the
one-loop partition functions \cite{Giombi:2013fka, Giombi:2014yra} to
three point functions \cite{Sezgin:2003pt, Giombi:2009wh,
  Giombi:2012ms} on both sides of the duality (see
e.g. \cite{Giombi:2016ejx, Giombi:2016pvg, Sleight:2016hyl,
  Sleight:2017krf} and references therein for reviews of this
duality). The possible existence of such an equivalence between the
type-A Higher-Spin (HS) theory and the free vector model opened the
possibility of probing interactions in the bulk using the knowledge
gathered on the CFT side, a program which was tackled in
\cite{Bekaert:2014cea} (improving the earlier works \cite{Ruhl:2004cf,
  Manvelyan:2004ii, Manvelyan:2005fp}). This lead to the derivation of
all cubic vertices and the quartic vertex for four scalar fields in
the bulk \cite{Sleight:2016dba, Bekaert:2015tva}, as dictated by the
holographic duality, while \cite{Boulanger:2015ova} also raising
questions on the locality properties of the bulk HS theory (see
e.g. \cite{Skvortsov:2015lja, Vasiliev:2016xui, Taronna:2017jeq,
  Sleight:2017pcz, Bonezzi:2017vha} and references therein for more
details). \\

The fact that the prospective CFT dual to a HS theory in AdS$_{d+1}$
is free can be understood retrospectively thanks to the
Maldacena-Zhiboedov theorem \cite{Maldacena:2011jn} and its
generalisation \cite{Alba:2013yda, Alba:2015upa}. Indeed, it was shown
in \cite{Maldacena:2011jn} that if a $3$-dimensional CFT which is
unitary, obeys the cluster decomposition axiom and has a (unique)
Lorentz covariant stress-tensor plus at least one higher-spin current,
then this theory is either a CFT of free scalars or free spinors. This
was generalised to arbitrary dimensions in \cite{Alba:2013yda,
  Alba:2015upa}\,\footnote{See also \cite{Boulanger:2013zza} where the
  authors derived the bulk counterpart of this theorem.}, where the
authors showed that this result holds in dimensions $d \geqslant 3$,
up to the additional possibility of a free CFT of
$(\tfrac{d-2}2)$-forms in even dimensions. These free conformal fields
precisely correspond to the singleton representations of spin $0$,
$\tfrac12$ and $1$ in arbitrary dimensions \cite{Siegel:1988gd,
  Angelopoulos:1997ij}\,\footnote{Singletons of spin-$s>1$ do not
  appear here due to the fact that the free CFT based on these fields
  do not possess a conserved current of spin-$2$, i.e. a stress-energy
  tensor and therefore are not covered by the theorem derived in
  \cite{Alba:2015upa}. This can be seen for instance from the
  decomposition of the tensor product of two spin-$s$ singletons
  spelled out in \cite{Dolan:2005wy}, where one can check that there
  are no currents of spin lower than $2s$.}. Due to the fact that,
according to the standard AdS/CFT dictionary, the higher-spin gauge
field making up the spectrum of the Higher-Spin theory in the bulk are
dual to higher-spin conserved current on the CFT side, this CFT should
be free\,\footnote{Notice that by changing the boundary condition of
  fields in the bulk opens the possibility of having an HS theory dual
  to an interacting CFT, see for instance \cite{Klebanov:2002ja,
    Leigh:2003gk, Sezgin:2003pt, Skvortsov:2015pea,
    Giombi:2016ejx}. From the CFT point of view, this corresponds to
  the modification of the previously mentioned Maldacena-Zhiboedov
  theorem studied by the same authors in \cite{Maldacena:2012sf}.
  Instead of the existence of at least one conserved higher spin
  current, it is assumed that the CFT possesses a parameter $N$
  together with a tower of single trace, approximately conserved
  currents of all even spin $s \geqslant 4$, such that the
  conservation law gets corrected by terms of order $1/N$. As a
  consequence, the anomalous dimensions of these higher spin currents
  are of order $1/N$, which translates into the fact that the dual
  higher spin fields in the bulk acquire masses through radiative
  corrections, thereby leading to changes in their boundary
  conditions.}  as it falls under the assumption of the previously
recalled results of \cite{Maldacena:2011jn, Alba:2013yda,
  Alba:2015upa}. Hence, the algebra generated by the set of charges
associated with the conserved currents of the CFT whose fundamental
field is a spin-$s$ singletons corresponds to the HS algebra of the HS
theory in the bulk with a spectrum of field given by the decomposition
of the tensor product of two spin-$s$ singletons. These HS algebras
can be defined as follows:
\begin{equation}
  \hs_s^{(d)} \cong \frac{\U\big(\so(2,d)\big)}{\Ann\big(\D_s^{\rm
      sing.}\big)}\, ,
  \label{def_hs_alg}
\end{equation}
where $\hs_s^{(d)}$ stands for the HS algebra associated with the
spin-$s$ singleton in AdS$_{d+1}$, and $\U\big(\so(2,d)\big)$ denotes
the universal enveloping algebra of $\so(2,d)$ whereas $\D_s^{\rm
  sing.}$ denotes the spin-$s$ singleton module of $\so(2,d)$ and
$\Ann\big(\D_s^{\rm sing.}\big)$ the annihilator of this module. For
more details, see e.g. \cite{Iazeolla:2008ix, Boulanger:2011se,
  Joung:2014qya} where the construction of HS algebras (and their
relation with minimal representations of simple Lie algebras
\cite{Joseph1976}) is reviewed and \cite{Bekaert:2009fg} where HS
algebras associated with HS singletons were studied. Although
Vasiliev's Higher-Spin theory is based on the HS algebra $\hs_0^{(d)}$
and the HS theory based on $\hs_s^{(d)}$ with $s=\tfrac12, 1, \dots$,
is unknown\,\footnote{Notice that this is true only for $d>3$. For the
  AdS$_4$ case however, due to the fact that the annihilator of the
  scalar and the spin-$\tfrac12$ singleton are isomorphic
  (i.e. $\Ann(\D_0^{\rm sing.}) \cong \Ann(\D_{1/2}^{\rm sing.})$, see
  e.g. \cite{Iazeolla:2008ix} for more details), the HS algebra
  $\hs_{1/2}^{(3)}$ is isomorphic to $\hs_0^{(3)}$ and is therefore
  known. This translates into the fact that the two corresponding HS
  theory have almost the same spectrum of fields, the only difference
  being the mass of the bulk scalar field.}, the latter algebras are
quite interesting as they all describe a spectrum containing
mixed-symmetry fields. Even though this last class of massless field
is well understood at the free level (in flat space as well as in AdS)
\cite{Labastida:1987kw, Metsaev:1995re, Metsaev:1997nj,
  Metsaev:1998xg, Burdik:2000kj, Burdik:2001hj, Alkalaev:2003qv,
  Bekaert:2003az, Bekaert:2003zq, Boulanger:2008up, Boulanger:2008kw,
  Skvortsov:2008vs, Alkalaev:2008gi, Skvortsov:2009nv,
  Skvortsov:2009zu, Campoleoni:2009je, Alkalaev:2009vm,
  Alkalaev:2011zv, Campoleoni:2012th}, little is known about their
interaction (see e.g. \cite{Metsaev:2005ar, Metsaev:2007rn,
  Alkalaev:2010af, Boulanger:2011se, Boulanger:2011qt} on cubic
vertices and \cite{Alkalaev:2012rg, Alkalaev:2012ic,
  Chekmenev:2015kzf} where mixed-symmetry fields have been studied in
the context of the AdS/CFT correspondence). \\

A possible extension of the HS algebras associated with singletons can
be obtained by applying the above construction \eqref{def_hs_alg} with
a generalisation of the singleton representations $\D_s^{\rm sing.}$,
referred to as ``higher-order'' singletons. The latter are also
irreducible representations of $\so(2,d)$, which share the property of
describing fields ``confined'' to the boundary of AdS with the usual
singletons but which are non-unitary (as detailed in
\cite{Bekaert:2013zya}). This class of higher-order singletons, which
are of spin $0$ or $\tfrac12$, is labelled by a (strictly) positive
integer $\ell$. In the case of the scalar singleton of order $\ell$,
such a representation describes a conformal scalar $\phi$ obeying the
polywave equation:
\begin{equation}
  \Box^\ell \phi = 0\, .
\end{equation}
When $\ell=1$ one recovers the usual singleton (free, unitary
conformal scalar field), whereas $\ell>1$ leads to non-unitary
CFT. Such CFTs were studied in \cite{Brust:2016zns} for instance, and
were proposed to be dual to HS theories \cite{Bekaert:2013zya} whose
spectrum consists, on top of the infinite tower of (totally symmetric)
higher-spin massless fields, also partially massless (totally
symmetric) fields of arbitrary spin (theories which have been studied
recently in \cite{Alkalaev:2014nsa, Brust:2016xif, Brust:2016gjy}),
thereby extending the HS holography proposal of
Klebanov-Polyakov-Sezgin-Sundell to the non-unitary case. The
corresponding HS algebras were studied in \cite{Eastwood2008} for the
simplest case $\ell=2$ (as the symmetry algebra of the Laplacian
square, thereby generalising the previous characterisation of
$\hs_0^{(d)}$ as the symmetry algebra of the Laplacian
\cite{Eastwood:2002su}) and for general values of $\ell$ in
\cite{Gover2012, Michel2014, Joung:2015jza}. As we already mentionned,
the interesting feature of such HS algebras is that their spectrum,
i.e. the set of fields of the bulk theory, contains partially massless
(totally symmetric) higher-spin fields \cite{Bekaert:2013zya}
(introduced originally in \cite{Deser:1983mm, Deser:1983tm,
  Higuchi:1986wu, Deser:2001us}, and whose free propagation was
described in the unfolded formalism in
\cite{Skvortsov:2006at}). Although non-unitary in AdS background,
partially massless fields of arbitrary spin are unitary in de Sitter
background \cite{Basile:2016aen}, and hence constitute a particularly
interesting generalisation of HS gauge fields to
consider\,\footnote{Notice that partially massless HS fields have been
  shown to also appear in the spectrum of fields resulting from
  various compactifications of anti-de Sitter spacetime
  \cite{Gwak:2016sma, Gwak:2017cyu}.}. Partially massless fields, both
totally symmetric and of mixed-symmetry, also appear in the spectrum
of the HS algebra based on the order-$\ell$ spinor singleton
\cite{Basile:2014wua}. It seems reasonable to expect that the known
spectrum of the HS algebras based on a spin-$s$ singleton is enhanced,
when considering the HS algebras based on their higher-order
extension, with partially massless fields of the same symmetry type as
already present in the case of the original singleton. Therefore, a
natural question is whether or not there exist higher-order
higher-spin singletons. This question is adressed in the present note,
in which we study a class of $\so(2,d)$ modules which is a natural
candidate for defining a higher-order higher-spin singleton. \\

This paper is organised as follows: in
\hyperref[sec:notations]{Section \ref{sec:notations}} we start by
introducing the various notations that will be used throughout this
note, then in \hyperref[sec:hs_singletons]{Section
  \ref{sec:hs_singletons}} we first review the defining properties of
the well-known (unitary) higher-spin singletons before introducing
their would-be higher-order extension and spelling out the counterpart
of the previously recalled characteristic properties. Finally, the
tensor product of two such representations is decomposed in the
low-dimensional case $d=4$ in \hyperref[sec:flato_fronsdal]{Section
  \ref{sec:flato_fronsdal}}. Technical details on the branching and
tensor product rule of $\so(d)$ can be found in
\hyperref[app:sod]{Appendix \ref{app:sod}} while details of the proofs
of Propositions \hyperref[prop:decompo_ho]{\ref{prop:decompo_ho}} and
\hyperref[prop:branching_rule_ho]{\ref{prop:branching_rule_ho}} and
are relegated to \hyperref[app:technical]{Appendix
  \ref{app:technical}}.

%%%%%%%%%%%%%%%%%%%%%%%%%%%%%%%%%%
\section{Notation and conventions}
\label{sec:notations}
%%%%%%%%%%%%%%%%%%%%%%%%%%%%%%%%%%
In the rest of this paper, we will use the following symbols:
\begin{itemize}
\item A $\so(2,d)$ (generalised Verma) module is characterised by the
  $\so(2) \oplus \so(d)$ lowest weight $[\Delta; \Bell]$, where
  $\Delta$ is the $\so(2)$ weight (in general a real number, and more
  often in this paper, a positive integer) corresponding to the
  minimal energy of the AdS field described by this representation and
  $\Bell$ is a dominant integral $\so(d)$ weight corresponding to the
  spin of the representation. If irreducible, those modules will be
  denoted $\D(\Delta; \Bell)$, whereas if reducible (or
  indecomposable), they will be denoted $\V(\Delta; \Bell)$.
\item The spin $\Bell \equiv \blb \ell_1, \dots, \ell_r \brb$, with
  $r:=\rank(\so(d)) \equiv [\tfrac d2]$ (and where $[x]$ is the
  integer part of $x$), is a $\so(d)$ integral dominant weight. The
  property that the weight $\Bell$ is integral means that its
  components $\ell_i\, ,\, i=1,\dots,r$ are either {\it all} integers
  or {\it all} half-integers. The fact that $\Bell$ is dominant means
  that the components are ordered in decreasing order, and all
  positive except for the component $\ell_r$ when $d=2r$. More
  precisely,
  \begin{equation}
    \ell_1 \geqslant \ell_2 \geqslant \dots \geqslant \ell_{r-1}
    \geqslant \ell_r \geqslant 0\, , \quad \text{for} \quad
    \so(2r+1)\,,
  \end{equation}
  and
  \begin{equation}
    \ell_1 \geqslant \ell_2 \geqslant \dots \geqslant \ell_{r-1}
    \geqslant \rvert \ell_r \rvert\, , \quad \text{for} \quad
    \so(2r)\,.
  \end{equation}
\item In order to deal more efficiently with weight having several
  identical components, we will use the notation:
  \begin{equation}
    \blb \underbrace{\ell_1, \dots, \ell_1}_{h_1\, \text{times}},
    \underbrace{\ell_2, \dots, \ell_2}_{h_2\, \text{times}}, \dots,
    \underbrace{\ell_k, \dots, \ell_k}_{h_k\, \text{times}} \brb
    \equiv \blb \ell_1^{h_1}, \ell_2^{h_2}, \dots, \ell_k^{h_k} \brb\,
    , \quad \text{with} \quad k \leqslant r\, ,
  \end{equation}
  in other words the number $h$ of components with the same value
  $\ell$ appears as the exponent of the latter. For the special cases
  where all components of the highest weight are equal either to $0$
  or to $\tfrac12$, we will use bold symbols (and forget about the
  brackets), i.e.
  \begin{equation}
    \zero := \blb 0, \dots, 0 \brb\, , \quad \text{and} \quad \half :=
    \blb \tfrac12, \dots, \tfrac12 \brb\, .
  \end{equation}
  We will also write only the non-vanishing components of the various
  $\so(d)$ weight encountered in this paper, i.e.
  \begin{equation}
    \blb s_1, \dots, s_k \brb := \blb s_1, \dots, s_k, \underbrace{0,
      \dots, 0}_{r-k} \brb\, , \quad \text{for} \quad 1 \leqslant k
    \leqslant r\, .
  \end{equation}
\item If the spin is given by an irrep of an even dimensional
  orthogonal algebra, i.e. $\so(d)$ for $d=2r$, the last component
  $\ell_r$ of this highest weight (if non-vanishing) can either be
  positive or negative. Whenever the statements involving such a
  weight does not depend on this sign, we will write $\Bell = \blb
  \ell_1, \dots, \ell_r \brb$ with the understanding that $\ell_r$ can
  be replaced by $-\ell_r$. However, if the sign of the component
  $\ell_r$ matters, we will distinguish the two cases by writting:
  \begin{equation}
    \Bell_\epsilon \equiv \blb \ell_1, \dots, \ell_{r-1}, \ell_r
    \brb_\epsilon := \blb \ell_1, \dots, \ell_{r-1}, \epsilon \ell_r
    \brb\, , \quad \text{with} \quad \epsilon=\pm1\, .
  \end{equation}
  It will also be convenient to consider the direct sum of two modules
  labelled by $\Bell_+$ and $\Bell_-$, which we will denote by:
  \begin{equation}
    \D\big(\Delta; \Bell_0\big) := \D\big(\Delta; \Bell_+\big) \oplus
    \D\big(\Delta; \Bell_-\big)\, .
  \end{equation}
\item Finally, a useful tool that we will use throughout this paper is
  the character $\chi_{\V(\Delta;\Bell)}^{\so(2,d)}(q, \vec x)$ of a
  (possibly reducible) generalised Verma module $\V\big(\Delta\,;\,
  \Bell\big)$:
  \begin{equation}
    \chi_{\V(\Delta;\Bell)}^{\so(2,d)}(q, \vec x) = q^\Delta
    \chi^{\so(d)}_{\Bell}(\vec x) \Pd d (q, \vec x)\, ,
  \end{equation}
  where $q:=e^{-\mu_0}$ and $x_i:=e^{\mu_i}$ for $i=1,\dots,r$ with
  $\{\mu_0, \dots, \mu_r\}$ a basis of the weight space of $\so(2,d)$,
  and
  \begin{equation}
    \Pd d (q, \vec x) := \frac{1}{(1-q)^{d-2r}} \prod_{i=1}^r
    \frac{1}{(1-qx_i)(1-qx_i^{-1})}\, ,
  \end{equation}
  i.e. the prefactor $\tfrac{1}{1-q}$ is absent for $d=2r$, and
  $\chi^{\so(d)}_{\Bell}(\vec x)$ is the character of the irreducible
  $\so(d)$ representation $\Bell$. Any irreducible generalised Verma
  module $\D\big(\Delta\,;\, \Bell\big)$ can be defined as the
  quotient of the (freely generated) generalised Verma module
  $\V\big(\Delta\,;\, \Bell\big)$ by its maximal submodule
  $\D\big(\Delta'\,;\, \Bell'\big)$ (for some $\so(2)\oplus\so(d)$
  weight $[\Delta'\,;\, \Bell']$ related to $[\Delta\,;\, \Bell]$). An
  important property that will be used extensively in this work is the
  fact that given two representation spaces $V$ and $W$ of the same
  algebra with respective characters $\chi_V$ and $\chi_W$, the
  characters of the tensor product, direct sum or quotient of these
  two spaces obey:
  \begin{equation}
    \chi_{V \otimes W} = \chi_V \cdot \chi_W\, , \quad \chi_{V \oplus
      W} = \chi_V + \chi_W\, , \quad \chi_{V/W} = \chi_V - \chi_W\, .
  \end{equation}
  As a consequence, the character of an irreducible generalised Verma
  module $\D(\Delta;\Bell)$ takes the form:
  \begin{equation}
    \chi_{[\Delta;\Bell]}^{\so(2,d)}(q,\vec x) =
    \chi_{\V(\Delta;\Bell)}^{\so(2,d)}(q, \vec x) -
    \chi_{\D(\Delta';\Bell')}^{\so(2,d)}(q, \vec x) \, , 
  \end{equation}
  whenever $\V(\Delta;\Bell)$ possesses a submodule
  $\D(\Delta';\Bell')$. For more details on characters of $\so(2,d)$
  generalised Verma modules, see e.g. \cite{Dolan:2005wy,
    Beccaria:2014jxa}.
\end{itemize}

%%%%%%%%%%%%%%%%%%%%%%%%%%%%%%%%%%%%%%%%%%%%%
\section{Higher-order higher-spin singletons}
\label{sec:hs_singletons}
%%%%%%%%%%%%%%%%%%%%%%%%%%%%%%%%%%%%%%%%%%%%%
In this section, we start by reviewing the definition of the usual
(unitary) higher-spin singletons (about which more details can be
found in the pedagogical review \cite{Bekaert:2011js}, and in
\cite{Fernando:2015tiu} where they were studied from the point of view
of minimal representations), before moving on to the proposed
higher-order extension which is the main focus of this paper.

\paragraph{Unitary higher-spin singletons.}
Higher-spin singletons have been first considered by Siegel
\cite{Siegel:1988gd}, as making up the list of unitary and irreducible
representations of the conformal algebra $\so(2,d)$ which can lead to
a free conformal field theory. They were later identified by
Angelopoulos and Laoues \cite{Angelopoulos:1980wg,
  Angelopoulos:1997ij, Laoues:1998ik, Angelopoulos:1999bz} as being
part of the same class of particular representations first singled out
by Dirac \cite{Dirac:1963ta}, which is what is understood by
singletons nowadays. Initially, what lead Dirac to single out the
$\so(2,3)$ representations $\D\big(\tfrac12; \blb 0 \brb\big)$ and
$\D\big(1; \blb \tfrac12 \brb\big)$ studied in \cite{Dirac:1963ta} as
``remarkable'' is the fact that, contrarily to the usual UIRs of {\it
  compact} orthogonal algebras, the former are labelled by an highest
weight whose components are not both integers or both half-integers
but rather one is an integer and the other is an half-integer. In
other words, the highest weight defining this representation is not
integral dominant. On top of that, the other intriguing feature of
these representations, which was later elaborated on significantly by
Flato and Fr\o{}nsdal, is the fact they correspond respectively to a
scalar and a spinor field in AdS which do not propagate any local
degree of freedom in the bulk. This last property is the most striking
from a field theoretical point of view.  Indeed, the fact that
representations of the $\so(2,d)$ algebra can be interpreted both as
fields in AdS$_{d+1}$, i.e. the bulk, and as conformal fields on
$d$-dimensional Minkowski, i.e. the (conformal) boundary of
AdS$_{d+1}$ is at the core of the AdS/CFT correspondence. This last
characteristic translates into a defining property of the $\so(2,d)$
singleton modules, namely that they remain irreducible when restricted
to either one of the subalgebras $\so(2,d-1)$, $\so(1,d)$ or
$\iso(1,d-1)$. This is reviewed below after we define the singletons
as unitary and irreducible $\so(2,d)$ modules.\\

First, let us recall that the unitarity conditions for generalised
Verma modules of $\so(2,d)$ (i.e. in its discrete series of
representations) induced from the compact subalgebra
$\so(2)\oplus\so(d)$ were derived independently in
\cite{Metsaev:1995re, Metsaev:1997nj, Metsaev:1998xg} and in
\cite{Ferrara:2000nu} (where the more general result of
\cite{Enright1983} giving unitarity conditions for highest weight
modules of Hermitian algebras was applied to $\so(2,d)$). The outcome
of these analyses is that the irreducible modules $\D\big(\Delta\,;\,
\Bell\big)$ which are unitary are:
\begin{itemize}\label{unitarity_condition}
\item $\Bell=\zero$: modules with $\Delta \geqslant \tfrac{d-2}2$;
\item $\Bell=\half$: modules with $\Delta \geqslant \tfrac{d-1}2$;
\item $\Bell = \blb s^p, s_{p+1}, \dots, s_r \brb$ with $1 \geqslant s
  > s_{p+1} \geqslant \dots \geqslant s_r$: modules with $\Delta
  \geqslant s + d - p -1$.
\end{itemize}

With these unitarity bounds in mind for $\so(2,d)$ generalised Verma
modules, let us move onto the definition of {\it unitary} singletons:

\begin{definition}[Singleton]\label{def:unitary_singletons}
  A spin-$s$ singleton is defined as the $\so(2,d)$ module:
  \begin{equation}
    \D\big(s+\tfrac d2-1; \blb s^r \brb \big)\, ,
  \end{equation}
  for $s=0,\tfrac12$ in arbitrary dimensions, and $s \in N$ when
  $d=2r$. Introducing the minimal energies of the scalar and spinor singleton
  \begin{equation}
    \ez := \frac{d-2}2\, , \quad \text{and} \quad \eoh := \frac{d-1}2
    \equiv \ez +\tfrac12\, ,
  \end{equation}
  all of the above modules can be denoted as $\D\big( \ez+s\,;\, \blb
  s^r \brb \big)$. Depending on the value of $s$ the structure of the
  the above module changes drastically:
  \begin{itemize}
  \item If $s=0$ or $\tfrac12$, then
    \begin{equation}
      \rac := \D\big( \ez; \zero \big) \cong \frac{\V\big( \ez; \zero
        \big)}{\D\big( d-\ez; \zero \big)}\, , \quad \di := \D\big(
      \eoh; \half \big) \cong \frac{\V\big( \eoh; \half \big)}{\D\big(
        d-\eoh; \half \big)}\, ,
    \end{equation}
    where $\D\big( d-\ez; \zero \big) = \V\big( d-\ez; \zero \big)$
    and $\D\big( d-\eoh; \half \big) = \V\big( d-\eoh; \half \big)$.
    Their character read \cite{Flato:1978qz, Dolan:2005wy}:
    \begin{equation}
      \chi^{\so(2,d)}_{\rac}(q, \vec x) = q^{\ez}\, (1-q^2)\, \Pd d
      (q, \vec x)\, , \quad \text{and} \quad \chi^{\so(2,d)}_{\di}(q,
      \vec x) = q^{\eoh}\, (1-q)\, \chi^{\so(d)}_{\half}(\vec x)\, \Pd
      d (q, \vec x)\, .
    \end{equation}
  \item If $s\geqslant 1$ (and $d=2r$), then:
    \begin{equation}
      \D\big( s + \tfrac d2 - 1\,;\, \blb s^r \brb\big) \cong
      \frac{\V\big( s + \tfrac d2 - 1\,;\, \blb s^r \brb\big)}{\D\big(
        s + \tfrac d2\,;\, \blb s^{r-1}, s-1 \brb\big)}\, .
      \label{module_hs_singletons}
    \end{equation}
    In this case, the structure of the maximal submodule $\D\big( s +
    \tfrac d2\,;\, \blb s^{r-1}, s-1 \brb \big)$ is more involved, the
    maximal submodule $\D\big( s+\tfrac d2\,;\, \blb s^{r-1}, s-1 \brb
    \big)$ can be defined through the sequence of quotients of
    generalized Verma modules:
    \begin{equation}
      \D\big(s+\tfrac d2 - 1 + k\,;\, \blb s^{r-k}, (s-1)^{k} \brb
      \big) := \frac{\V\big(s+\tfrac d2 - 1 + k\,;\, \blb s^{r-k},
        (s-1)^{k} \brb \big)}{\D\big(s+\tfrac d2 + k\,;\, \blb
        s^{r-k-1}, (s-1)^{k+1} \brb \big)}\, ,
    \end{equation}
    with $k=1, \dots, r-1$ and $\D\big(s+\tfrac d2 + r-1\,;\, \blb
    (s-1)^{r} \brb \big) \equiv \V\big(s+\tfrac d2 + r-1\,;\, \blb
    (s-1)^{r} \brb \big)$ is an irreducible module. For more details
    on the structure of irreducible generalised $\so(2,d)$ Verma
    module, see the classification displayed in
    \cite{Shaynkman:2004vu}. Their character read \cite{Dolan:2005wy}:
    \begin{equation}
      \chi^{\so(2,d)}_{[s+\tfrac d2 -1;(s^r)]}(q, \vec x) =
      q^{s+\tfrac d2 -1} \Big( \chi^{\so(d)}_{(s^r)}(\vec x) +
      \sum_{k=1}^{r} (-)^{k} q^{k}
      \chi^{\so(d)}_{(s^{r-k},(s-1)^{k})}(\vec x) \Big) \Pd d (q, \vec
      x)\, .
      \label{character_singleton}
    \end{equation}
  \end{itemize}
\end{definition}

\begin{remark}
  As advertised, all of the above modules corresponding to singletons
  are unitary. One can notice that they actually saturate the
  unitarity bound and are all irreps of twist $\ez$ (the twist $\tau$
  being the absolute value of the difference between the minimal
  energy $\Delta$ and the spin of $s$ of a $\so(2,d)$ irrep,
  $\tau:=\rvert \Delta-s \rvert$).
\end{remark}

All of the above $\so(2,d)$ modules share a couple of defining
properties recalled below:

\begin{theorem}[Properties of the singletons \cite{Angelopoulos:1980wg, Angelopoulos:1997ij}]\label{th:uni_singleton}
  A singleton on AdS$_{d+1}$ is a module $\D\big(s+\tfrac d2 - 1; \blb
  s^r \brb \big)$ of $\so(2,d)$ with $s=0,\tfrac12$ for any values $d$
  and $s\in \N$ for $d=2r$, enjoying the following properties:
  \begin{itemize}
  \item[(i)] It decomposes into a (infinite) single direct sum of
    $\so(2) \oplus \so(d)$ (finite-dimensional) modules in which each
    irrep of $\so(d)$ appears only once (is multiplicity free) and
    with a different $\so(2)$ weight, i.e.
    \begin{equation}
      \D\big(s+\tfrac d2-1; \blb s^r \brb \big) \cong
      \bigoplus_{\sigma=0}^\infty \D_{\so(2) \oplus \so(d)} \big(
      s+\tfrac d2-1+\sigma; \blb s+\sigma,s^{r-1} \brb \big)\, .
      \label{decompo_hs_sing}
    \end{equation}
    This was proven originally in \cite{Ehrman1957} for the $d=3$ case
    where only the spin-$0$ and spin-$\tfrac12$ singletons exist, and
    extended to arbitrary dimensions and for singletons of arbitrary
    spin in \cite{Angelopoulos:1997ij}.
  \item[(ii)] It branches into a single irreducible
    module\,\footnote{Or at most two, which is the case of the scalar
      singleton as will be illustrated later (in
      \hyperref[prop:dirac_l]{Proposition \ref{prop:dirac_l}}).} of
    the subalgebras $\iso(1,d-1)$, $\so(1,d)$ or $\so(2,d-1)$, in
    which case this branching rule reads:
    \begin{equation}
      \D\big(s+r-1\,;\, \blb s^r \brb \big) \quad \branch \quad
      \D\big( s+r-1\,;\, \blb s^{r-1} \brb \big)\, , \quad \text{for}
      \quad d=2r \quad \text{and} \quad s \geqslant 1\, ,
      \label{branching_uni}
    \end{equation}
    and where the $\so(2,d-1)$ module $\D\big( s+r-1\,;\, \blb s^{r-1}
    \brb \big)$ correspond to a massless field of spin $\blb s^{r-1}
    \brb$ in AdS$_d$. Conversely, singletons can be seen as the only
    $\iso(1,d-1)$, $\so(1,d)$ or $\so(2,d-1)$ modules that can be
    lifted to a module of $\so(2,d)$ (which is $\D\big(s+\tfrac d2 -
    1; \blb s^r \brb \big)$). From this point of view, this property
    can be restated as ``Singletons are the only (massless) particles,
    or gauge fields, in $d$-dimensional Minkowski, de Sitter or
    anti-de Sitter spacetime which also admit conformal symmetries'',
    as they are the only representations of the isometry algebra of
    the $d$-dimensional maximally symmetric spaces that can be lifted
    to a representation of the conformal algebra in
    $d$-dimensions. Again, this property was first proven in $d=3$ in
    \cite{Angelopoulos:1980wg} and later extended to arbitrary
    dimensions in \cite{Angelopoulos:1997ij, Metsaev:1995jp}. This was
    revisited recently in \cite{Barnich:2015tma}.
  \end{itemize}
\end{theorem}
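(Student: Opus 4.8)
The plan is to establish both properties at the level of $\so(2)\oplus\so(d)$ characters, using the explicit singleton character \eqref{character_singleton} together with the $\so(d)$ tensor-product and branching rules recorded in Appendix \ref{app:sod}. For property (i) I would read off the $\so(2)\oplus\so(2r)$ content directly from \eqref{character_singleton}. Since $\Pd{2r}(q,\vec x)=\prod_{i=1}^r\big[(1-qx_i)(1-qx_i^{-1})\big]^{-1}$ is exactly the generating function $\sum_{n\geqslant0}q^n\,\mathrm{ch}_{\so(2r)}\,\mathrm{Sym}^n(\mathbb{C}^{2r})$ for the symmetric powers of the vector representation, the coefficient of $q^{s+r-1+\sigma}$ in the singleton character is the signed sum $\sum_{k=0}^{\min(r,\sigma)}(-1)^k\,\chi^{\so(2r)}_{\blb s^{r-k},(s-1)^k\brb}\cdot\mathrm{ch}_{\so(2r)}\,\mathrm{Sym}^{\sigma-k}(\mathbb{C}^{2r})$. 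The task is then to prove the finite identity that this signed sum collapses to the single irreducible character $\chi^{\so(2r)}_{\blb s+\sigma,s^{r-1}\brb}$. I would do this with the orthogonal Pieri rule from Appendix \ref{app:sod} — tensoring by a symmetric power of the vector adds a horizontal strip up to trace (modification) terms — and organise the result through a sign-reversing involution so that every Young diagram other than $\blb s+\sigma,s^{r-1}\brb$ cancels in pairs. Because the surviving terms carry pairwise distinct $\so(2)$-weights $s+r-1+\sigma$ and pairwise distinct $\so(2r)$-labels, multiplicity-freeness and the single-tower form of \eqref{decompo_hs_sing} are then immediate.

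For property (ii) I would pass from $\so(2,2r)$ to $\so(2,2r-1)$ by the character specialisation $x_r\to1$ that implements the restriction on the compact part. Two ingredients make this explicit: first, the elementary identity $\Pd{2r}(q,x_1,\dots,x_{r-1},1)=\tfrac{1}{1-q}\,\Pd{2r-1}(q,x_1,\dots,x_{r-1})$, obtained by evaluating the $i=r$ factor; second, the classical interlacing branching $\so(2r)\downarrow\so(2r-1)$ from Appendix \ref{app:sod}, under which $\chi^{\so(2r)}_{\Bell}\big|_{x_r=1}=\sum_{\bmu}\chi^{\so(2r-1)}_{\bmu}$ with $\bmu$ interlacing $\Bell$. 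Substituting these into the specialisation of \eqref{character_singleton} produces a power series in $q$ with $\so(2r-1)$-character coefficients multiplying $\Pd{2r-1}$, i.e. a bona fide $\so(2,2r-1)$ character.

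It remains to recognise this restricted character as that of the single irreducible module $\D\big(s+r-1;\blb s^{r-1}\brb\big)$ describing the massless spin-$\blb s^{r-1}\brb$ field in AdS$_{2r}$, and this identification is the crux. The cleanest cross-check is to compute the restricted content a second way via property (i): branching each $\so(2r)$-type $\blb s+\sigma,s^{r-1}\brb$ by interlacing yields $\bigoplus_{t=0}^{\sigma}\blb s+t,s^{r-2}\brb$, so after reindexing the $\so(2)\oplus\so(2r-1)$ content becomes $\bigoplus_{t\geqslant0}\bigoplus_{\sigma\geqslant t}\D_{\so(2)\oplus\so(2r-1)}\big(s+r-1+\sigma;\blb s+t,s^{r-2}\brb\big)$, which I would match against the independently known $\so(2)\oplus\so(2r-1)$ decomposition of the massless field to confirm \eqref{branching_uni}. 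The main obstacle is precisely this last step: unlike the left-hand side, the massless module on the right is \emph{not} a single multiplicity-free tower, so one needs its module structure and character — supplied by the classification of $\so(2,d)$ generalised Verma modules cited in \cite{Shaynkman:2004vu} and the mixed-symmetry massless analysis of \cite{Metsaev:1995re} — to certify that the restricted character is genuinely irreducible (two pieces only in the scalar case, per the footnote) rather than a sum of several massless constituents. The subalgebras $\iso(1,d-1)$ and $\so(1,d)$ would be treated by the same specialise-and-match scheme applied to the corresponding restricted characters.
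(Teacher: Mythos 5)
Your proposal follows essentially the same route as the paper's proof: property (i) is obtained by expanding $\Pd{d}(q,\vec x)$ into one-row $\so(d)$ characters and collapsing the alternating sum in the singleton character \eqref{character_singleton} through tensor-product cancellations, and property (ii) is obtained by branching the resulting $\so(2)\oplus\so(d)$ towers onto $\so(2)\oplus\so(d-1)$ and matching against the independently derived $\so(2)\oplus\so(d-1)$ content of the massless module $\D\big(s+r-1;\blb s^{r-1}\brb\big)$. This is exactly the paper's strategy (done explicitly for $d=4$ via $\so(4)\cong\so(3)\oplus\so(3)$ in the main text, and in Appendix \ref{app:unitary_hs_singleton} in general, where the cancellations you package as a sign-reversing involution are carried out by hand on consecutive terms of the alternating sum). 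Your identification of the crux --- that the massless module's decomposition must be supplied independently, since it is not a single multiplicity-free tower --- is also precisely where the paper invests its effort.

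The one point that would not go through as written is your closing sentence. The restriction to $\so(1,d)$ cannot be treated by the same specialise-and-match scheme: lowest-weight-type $\so(1,d)$ modules are induced from, and decompose under, $\so(1,1)\oplus\so(d-1)$ rather than $\so(2)\oplus\so(d-1)$, so there is no common compact-character bookkeeping in the variable $q$ to match; the paper explicitly notes this obstruction and defers that case to \cite{Angelopoulos:1997ij}. Similarly, the $\iso(1,d-1)$ case is handled in the paper not by character specialisation but by assuming commutativity of the diagram \eqref{com_diagram} and invoking the Brink--Metsaev--Vasiliev contraction of the $\so(2,d-1)$ massless module. Neither issue affects the core of your argument for \eqref{decompo_hs_sing} and \eqref{branching_uni}, which is sound.
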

\begin{proof}
  Considering that the couple of defining properties of the singletons
  are already known, we will only sketch the idea of their proofs --
  that can be found in the original papers -- by focusing on the
  simpler, low-dimensional, case of $\so(2,4)$ spin-$s$ singletons,
  leaving the general case in arbitrary dimensions to
  \hyperref[app:unitary_hs_singleton]{Appendix
    \ref{app:unitary_hs_singleton}}.
  \begin{itemize}
  \item[(i)] This decomposition can be proven by showing that the
    character of the module $\D\big(s+r-1; \blb s^r \brb \big)$ can be
    rewritten in the form:
    \begin{equation}
      \chi_{[s+r-1;(s^r)]}^{\so(2,d)}(q, \vec x) =
      \sum_{\sigma=0}^\infty q^{s+r-1+\sigma} \,
      \chi^{\so(d)}_{(s+\sigma, s^{r-1})}(\vec x)\, ,
    \end{equation}
    which is indeed the character of the direct sum of $\so(2) \oplus
    \so(d)$ modules displayed in \eqref{decompo_hs_sing}. This was
    proven in \cite{Dolan:2005wy}, and in practice the idea is simply
    to use the property of the ``universal'' function $\Pd d (q, \vec
    x)$ that it can be rewritten as:
    \begin{equation}
      \Pd d (q, \vec x) = \sum_{\sigma,n=0}^\infty q^{\sigma+2n}\,
      \chi^{\so(d)}_{(\sigma)}(\vec x)\, ,
      \label{p_function}
    \end{equation}
    and then perform the tensor product between the $\so(d)$
    characters appearing in the character
    $\chi_{[s+r-1;(s^r)]}^{\so(2,d)}(q, \vec x)$ with
    $\chi^{\so(d)}_{(s)}(\vec x)$. Let us to do that explicitely for
    $d=4$, where we can take advantage of the exceptional isomorphism
    $\so(4) \cong \so(3) \oplus \so(3)$ to deal with $\so(4)$ tensor
    products:
    \begin{eqnarray}
      \chi^{\so(2,4)}_{[s+1; (s,s)]}(q, \vec x) & = & q^{s+1}\Big(
      \chi^{\so(4)}_{(s,s)}(\vec x) - q \chi^{\so(4)}_{(s,s-1)}(\vec
      x) +q^2 \chi^{\so(4)}_{(s-1,s-1)}(\vec x) \Big) \Pd 4 (q, \vec
      x) \\ & = & \sum_{n=0}^\infty q^{s+1+2n} \Big(
      \sum_{\sigma=0}^{2s} q^{\sigma} \sum_{k=0}^\sigma
      \chi^{\so(4)}_{(s+k,s+k-\sigma)}(\vec x) +
      \sum_{\sigma=2s+1}^\infty q^\sigma \sum_{k=0}^{2s}
      \chi^{\so(4)}_{(\sigma+k-s,k-s)}(\vec x) \\ && -
      \sum_{\sigma=0}^{2s-1} q^{\sigma+1} \sum_{k=0}^{\sigma}
      \chi^{\so(4)}_{(s+k,s+k-\sigma-1)}(\vec x) -
      \sum_{\sigma=1}^{2s-1} q^{\sigma+1} \sum_{k=0}^{\sigma}
      \chi^{\so(4)}_{(s+k-1,s+k-\sigma)}(\vec x) \\ && \qquad -
      \sum_{\sigma=2s}^\infty q^{\sigma+1} \sum_{k=0}^{2s-1} \Big[
        \chi^{\so(4)}_{(\sigma+k+1-s,k-s)}(\vec x) +
        \chi^{\so(4)}_{(\sigma+k-s,k+1-s)}(\vec x) \Big] \\ && +
      \sum_{\sigma=0}^{2s-2} q^{\sigma+2} \sum_{k=0}^{\sigma}
      \chi^{\so(4)}_{(s+k-1,s+k-\sigma-1)}(\vec x) +
      \sum_{\sigma=2s-1}^\infty q^{\sigma+2} \sum_{k=0}^{2s-2}
      \chi^{\so(4)}_{(\sigma+k+1-s,k+1-s)}(\vec x) \Big) \nonumber\\ &
      = & \sum_{n=0}^\infty q^{s+1+2n} \Big( \sum_{\sigma=0}^\infty
      q^{\sigma} \chi^{\so(4)}_{(s+\sigma,s)}(\vec x) -
      \sum_{\sigma=1}^\infty q^{\sigma+1}
      \chi^{\so(4)}_{(s+\sigma-1,s)}(\vec x) \Big) \\ & = &
      \sum_{\sigma=0}^\infty q^{s+1+\sigma}
      \chi^{\so(4)}_{(s+\sigma,s)}(\vec x)\, .
    \end{eqnarray}

    This decomposition can be illustrated by drawing a ``weight
    diagram'', representing the $\so(2)$ weight of
    $\so(2)\oplus\so(d)$ modules as a function of the first component
    of their $\so(d)$ weights, see
    \hyperref[fig:weight_diagram]{Figure \ref{fig:weight_diagram}}
    below.

    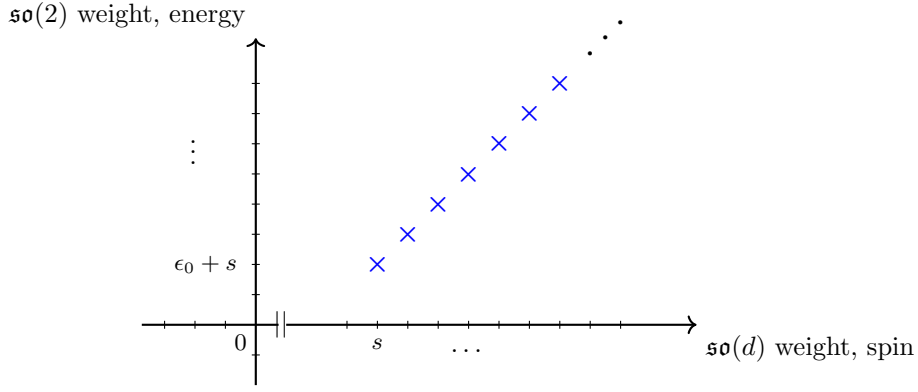
\begin{figure}[!ht]
      \center
      \begin{tikzpicture}
      \draw[thick] (-1.5,0) -- (0.3,0) node {$\,| |$};
      \draw[thick,->] (0.4,0) -- (5.8,0) node[anchor=north west]
           {$\so(d)$ weight, spin}; \draw[thick,->] (0,-0.8) --
           (0,3.8) node[anchor=south east] {$\so(2)$ weight, energy};

        \draw (-0.2, 0) node[below] {\small $0$};
        \draw (-0.4, 1.5pt) -- (-0.4, -1.5pt);
        \draw (-0.8, 1.5pt) -- (-0.8, -1.5pt);
        \draw (-1.2, 1.5pt) -- (-1.2, -1.5pt);
        \draw (1.2, 1.5pt) -- (1.2, -1.5pt);
        \draw (1.6, 1.5pt) -- (1.6, -1.5pt) node[below=0.2] {\small $s$};
        \draw (2, 1.5pt) -- (2, -1.5pt);
        \draw (2.4, 1.5pt) -- (2.4, -1.5pt);
        \draw (2.8, 1.5pt) -- (2.8, -1.5pt) node[below=4] {$\dots$};
        \draw (3.2, 1.5pt) -- (3.2, -1.5pt);
        \draw (3.6, 1.5pt) -- (3.6, -1.5pt);
        \draw (4, 1.5pt) -- (4, -1.5pt);
        \draw (4.4, 1.5pt) -- (4.4, -1.5pt);
        \draw (4.8, 1.5pt) -- (4.8, -1.5pt);

        \draw (-1.5pt, 0.4) -- (1.5pt, 0.4);
        \draw (-1.5pt, 0.8) -- (1.5pt, 0.8) node[left=5] {\small $\ez+s$};
        \draw (-1.5pt, 1.2) -- (1.5pt, 1.2);
        \draw (-1.5pt, 1.6) -- (1.5pt, 1.6);
        \draw (-1.5pt, 2) -- (1.5pt, 2);
        \draw (-1.5pt, 2.4) -- (1.5pt, 2.4) node[left=20] {$\vdots$};
        \draw (-1.5pt, 2.8) -- (1.5pt, 2.8);
        \draw (-1.5pt, 3.2) -- (1.5pt, 3.2);
        \draw (-1.5pt, -0.4) -- (1.5pt, -0.4);

        \draw (1.6, 0.8) node[color=blue] {$\boldsymbol{\times}$};
        \draw (2, 1.2) node[color=blue] {$\boldsymbol{\times}$};
        \draw (2.4, 1.6) node[color=blue] {$\boldsymbol{\times}$};
        \draw (2.8, 2) node[color=blue] {$\boldsymbol{\times}$};
        \draw (3.2, 2.4) node[color=blue] {$\boldsymbol{\times}$};
        \draw (3.6, 2.8) node[color=blue] {$\boldsymbol{\times}$};
        \draw (4, 3.2) node[color=blue] {$\boldsymbol{\times}$};
        \draw (4.4, 3.6) node[color=black] {$\boldsymbol{\cdot}$};
        \draw (4.6, 3.8) node[color=black] {$\boldsymbol{\cdot}$};
        \draw (4.8, 4) node[color=black] {$\boldsymbol{\cdot}$};
        
      \end{tikzpicture}
      \caption{Weight diagram of the spin-$s$ singleton.}
      \label{fig:weight_diagram}
    \end{figure}

    The fact the weight diagram of singletons is made out of a single
    line, noticed in the case of the Dirac singletons of $\so(2,3)$ in
    \cite{Ehrman1957} and later extended to singletons in arbitrary
    dimensions\,\footnote{Actually, it was even adopted as a
      definition of singletons in \cite{Angelopoulos:1999bz}.} in
    \cite{Angelopoulos:1997ij}, is the reason for the name
    ``singletons'' \cite{Bekaert:2011js}.

  \item[(ii)] In order to prove the branching rule from $\so(2,d)$ to
    $\so(2,d-1)$, we will compare the $\so(2) \oplus \so(d-1)$
    decomposition of the $\so(2,d)$ spin-$s$ singleton on the one
    hand, obtained by branching\,\footnote{The branching rules for
      $\so(d)$ irreps are recalled in
      \hyperref[app:branching]{Appendix \ref{app:branching}}.} the
    $\so(d)$ components of the $\so(2) \oplus \so(d)$ of these modules
    displayed in the previous item onto $\so(d-1)$, to the $\so(2)
    \oplus \so(d-1)$ decomposition of the $\so(2,d-1)$ module
    $\D\big(s+r-1; \blb s^{r-1} \brb\big)$ describing a massless field
    with spin $\blb s^{r-1} \brb$. For the sake of brevity, we will
    only detail the low dimensional case of $\so(2,4)$ spin-$s$
    singletons which captures the idea of the proof, and leave the
    treatment of the arbitrary dimension case to
    \hyperref[app:unitary_hs_singleton]{Appendix
      \ref{app:unitary_hs_singleton}}.\\

    \noindent Let us start by deriving the $\so(2) \oplus \so(3)$
    decomposition of the $\so(2,4)$ spin-$s$ singleton module $\D\big(
    s+1; \blb s, s \brb\big)$:
    \begin{eqnarray}
      \D\big( s+1; \blb s, s \brb \big) & \cong &
      \bigoplus_{\sigma=0}^\infty \D_{\so(2) \oplus \so(4)}\big(
      s+1+\sigma; \blb s+\sigma, s \brb \big) \\ & \branchmod{4}{3} &
      \bigoplus_{\sigma=0}^\infty \bigoplus_{k=0}^\sigma
      \D_{\so(2)\oplus\so(3)}\big( s+1+\sigma; \blb s+k \brb \big)
      \\ & \cong & \bigoplus_{\sigma,n=0}^\infty \D_{\so(2) \oplus
        \so(3)}\big( s+1+\sigma+n; \blb s+\sigma \brb \big)
    \end{eqnarray}
    Next, we need to derive the $\so(2) \oplus \so(3)$ of a massless
    spin-$s$ field corresponding to the $\so(2,3)$ module $\D\big(
    s+1; \blb s \brb\big)$. To do so, we will rewrite its character in
    a way that makes this decomposition explicit:
    \begin{eqnarray}
      \chi^{\so(2,3)}_{[s+1;(s)]}(q, x) & = & q^{s+1} \Big(
      \chi^{\so(3)}_{(s)}(x) - q \chi^{\so(3)}_{(s-1)}(x) \Big) \, \Pd
      d (q, x) \\ & = & \sum_{\sigma,n=0}^\infty q^{s+1+\sigma+2n}
      \Big( \sum_{\tau=|s-\sigma|}^{s+\sigma}
      \chi^{\so(3)}_{(\tau)}(x) - q
      \sum_{\tau=|s-1-\sigma|}^{s-1+\sigma} \chi^{\so(3)}_{(\tau)}(x)
      \Big) \\ & = & \sum_{n=0}^\infty q^{s+1+2n}\, \Big(
      \chi^{\so(3)}_{(s)} + \sum_{\sigma=1}^\infty q^\sigma \big[
        \sum_{\tau=|s-\sigma|}^{s+\sigma} \chi^{\so(3)}_{(\tau)}(x) -
        \sum_{\tau=|s-\sigma|}^{s+\sigma-2} \chi^{\so(3)}_{(\tau)}(x)
        \big] \Big) \\ & = & \sum_{n=0}^\infty q^{s+1+2n}\, (1+q)
      \sum_{\sigma=0}^\infty q^\sigma \chi^{\so(3)}_{(s+\sigma)}(x) =
      \sum_{\sigma, n=0}^\infty q^{s+1+\sigma+n}\,
      \chi^{\so(3)}_{(s+\sigma)}(x)\, ,
    \end{eqnarray}
    where we used the property \eqref{p_function} of the function $\Pd
    4 (q, \vec x)$, namely
    \begin{equation}
      \Pd 4 (q, \vec x) = \sum_{s,n=0}^\infty q^{s+2n}
      \chi^{\so(4)}_{(s)}(\vec x)\, .
    \end{equation}
    This proves that the decomposition of the $\so(2,3)$ module of a
    massless spin-$s$ field in AdS$_4$ reads:
    \begin{equation}
      \D\big( s+1; \blb s \brb \big) \cong
      \bigoplus_{\sigma,n=0}^\infty \D_{\so(2)\oplus\so(3)}\big(
      s+1+\sigma+n; \blb s+\sigma \brb \big)\, ,
      \label{decompo_massless_ads4}
    \end{equation}
    which coincide with the $\so(2) \oplus \so(3)$ decomposition
    obtained after branching the $\so(2,4)$ spin-$s$ singleton module
    onto $\so(2,3)$, i.e. we indeed have
    \begin{equation}
      \D\big( s+1; \blb s, s \brb \big) \quad \branchmod{4}{3} \quad
      \D\big( s+1; \blb s \brb \big)\, .
    \end{equation}
    This can be graphically seen by implementing the branching rule of
    the weight diagram in
    \hyperref[fig:weight_diagram]{Fig. \ref{fig:weight_diagram}}. Indeed,
    the branching rule for the $\so(2r)$ irrep $\blb s+\sigma, s^{r-1}
    \brb$ is:
    \begin{equation}
      \blb s+\sigma, s^{r-1} \brb \quad \branch \quad
      \bigoplus_{k=0}^\sigma \blb s+k, s^{r-2} \brb\, ,
    \end{equation}
    which means that one should add on each line of the weight diagram
    (representing the $\so(d)$ modules appearing at fixed energy, or
    $\so(2)$ weight) in
    \hyperref[fig:weight_diagram]{Fig. \ref{fig:weight_diagram}} a dot
    at each value of $\ell_1$ to the left of the orignal one until
    $\ell_1=s$ is reached. By doing so, an infinite wedge whose tip
    has coordinates $(E=s+\ez,\ell_1=s)$ precisely corresponding to
    the weight diagram of a massless field of spin given by a
    rectangular Young diagram on maximal height and length $s$ as can
    be seen from \eqref{decompo_massless_ads4} for $d=3$ and in
    \hyperref[app:unitary_hs_singleton]{Appendix
      \ref{app:unitary_hs_singleton}} for arbitrary odd values of $d$.
    \begin{figure}[!ht]
      \begin{minipage}[c]{0.4\textwidth}
        \center
        \begin{tikzpicture}
          \draw[thick] (-1.5,0) -- (0.3,0) node {$\,| |$};
          \draw[thick,->] (0.4,0) -- (5.8,0) node[anchor=north west]
               {$\ell_1$}; \draw[thick,->] (0,-0.8) --
               (0,3.8) node[anchor=south east] {$E$};
               
               \draw (-0.2, 0) node[below] {\small $0$};
               \draw (-0.4, 1.5pt) -- (-0.4, -1.5pt);  
               \draw (-0.8, 1.5pt) -- (-0.8, -1.5pt); 
               \draw (-1.2, 1.5pt) -- (-1.2, -1.5pt); 
               \draw (1.2, 1.5pt) -- (1.2, -1.5pt);
               \draw (1.6, 1.5pt) -- (1.6, -1.5pt) node[below=0.2] {\small $s$};
               \draw (2, 1.5pt) -- (2, -1.5pt);
               \draw (2.4, 1.5pt) -- (2.4, -1.5pt);
               \draw (2.8, 1.5pt) -- (2.8, -1.5pt) node[below=4] {$\dots$};
               \draw (3.2, 1.5pt) -- (3.2, -1.5pt);
               \draw (3.6, 1.5pt) -- (3.6, -1.5pt);
               \draw (4, 1.5pt) -- (4, -1.5pt);
               \draw (4.4, 1.5pt) -- (4.4, -1.5pt);
               \draw (4.8, 1.5pt) -- (4.8, -1.5pt);

               \draw (-1.5pt, 0.4) -- (1.5pt, 0.4);

               \draw (-1.5pt, 0.8) -- (1.5pt, 0.8) node[left=5] {\small $\ez+s$};
               \draw (-1.5pt, 1.2) -- (1.5pt, 1.2);
               \draw (-1.5pt, 1.6) -- (1.5pt, 1.6);
               \draw (-1.5pt, 2) -- (1.5pt, 2);
               \draw (-1.5pt, 2.4) -- (1.5pt, 2.4) node[left=20] {$\vdots$};
               \draw (-1.5pt, 2.8) -- (1.5pt, 2.8);
               \draw (-1.5pt, 3.2) -- (1.5pt, 3.2);
               \draw (-1.5pt, -0.4) -- (1.5pt, -0.4);
               
               \draw (1.6, 0.8) node[color=blue] {$\boldsymbol{\times}$};
               \draw (2, 1.2) node[color=blue] {$\boldsymbol{\times}$};
               \draw (2.4, 1.6) node[color=blue] {$\boldsymbol{\times}$};
               \draw (2.8, 2) node[color=blue] {$\boldsymbol{\times}$};
               \draw (3.2, 2.4) node[color=blue] {$\boldsymbol{\times}$};
               \draw (3.6, 2.8) node[color=blue] {$\boldsymbol{\times}$};
               \draw (4, 3.2) node[color=blue] {$\boldsymbol{\times}$};
               \draw (4.4, 3.6) node[color=black] {$\boldsymbol{\cdot}$};
               \draw (4.6, 3.8) node[color=black] {$\boldsymbol{\cdot}$};
               \draw (4.8, 4) node[color=black] {$\boldsymbol{\cdot}$};
        \end{tikzpicture}
      \end{minipage}
      \qquad $\branch$ \quad
      \begin{minipage}[c]{0.4\textwidth}
        \center
        \begin{tikzpicture}
          \draw[thick] (-1.5,0) -- (0.3,0) node {$\,| |$};
          \draw[thick,->] (0.4,0) -- (5.8,0) node[anchor=north west]
               {$\ell_1$}; \draw[thick,->] (0,-0.8) --
               (0,3.8) node[anchor=south east] {$E$};
               
               \draw (-0.2, 0) node[below] {\small $0$};
               \draw (-0.4, 1.5pt) -- (-0.4, -1.5pt);  
               \draw (-0.8, 1.5pt) -- (-0.8, -1.5pt); 
               \draw (-1.2, 1.5pt) -- (-1.2, -1.5pt); 
               \draw (1.2, 1.5pt) -- (1.2, -1.5pt);
               \draw (1.6, 1.5pt) -- (1.6, -1.5pt) node[below=0.2] {\small $s$};
               \draw (2, 1.5pt) -- (2, -1.5pt);
               \draw (2.4, 1.5pt) -- (2.4, -1.5pt);
               \draw (2.8, 1.5pt) -- (2.8, -1.5pt) node[below=4] {$\dots$};
               \draw (3.2, 1.5pt) -- (3.2, -1.5pt);
               \draw (3.6, 1.5pt) -- (3.6, -1.5pt);
               \draw (4, 1.5pt) -- (4, -1.5pt);
               \draw (4.4, 1.5pt) -- (4.4, -1.5pt);
               \draw (4.8, 1.5pt) -- (4.8, -1.5pt);
               
               \draw (-1.5pt, 0.4) -- (1.5pt, 0.4); 
               \draw (-1.5pt, 0.8) -- (1.5pt, 0.8) node[left=5] {\small $\ez+s$}; 
               \draw (-1.5pt, 1.2) -- (1.5pt, 1.2);
               \draw (-1.5pt, 1.6) -- (1.5pt, 1.6);
               \draw (-1.5pt, 2) -- (1.5pt, 2);
               \draw (-1.5pt, 2.4) -- (1.5pt, 2.4) node[left=20] {$\vdots$}; 
               \draw (-1.5pt, 2.8) -- (1.5pt, 2.8); 
               \draw (-1.5pt, 3.2) -- (1.5pt, 3.2); 
               \draw (-1.5pt, -0.4) -- (1.5pt, -0.4); 
               
               \draw (1.6, 0.8) node[color=blue] {$\boldsymbol{\times}$};
               \draw (2, 1.2) node[color=blue] {$\boldsymbol{\times}$};
               \draw (1.6, 1.2) node[color=myblue] {$\boldsymbol{\times}$};
               \draw (2.4, 1.6) node[color=blue] {$\boldsymbol{\times}$};
               \draw (2, 1.6) node[color=myblue] {$\boldsymbol{\times}$};
               \draw (1.6, 1.6) node[color=myblue] {$\boldsymbol{\times}$};
               \draw (2.8, 2) node[color=blue] {$\boldsymbol{\times}$};
               \draw (2.4, 2) node[color=myblue] {$\boldsymbol{\times}$};
               \draw (2, 2) node[color=myblue] {$\boldsymbol{\times}$};
               \draw (1.6, 2) node[color=myblue] {$\boldsymbol{\times}$};
               \draw (3.2, 2.4) node[color=blue] {$\boldsymbol{\times}$};
               \draw (2.8, 2.4) node[color=myblue] {$\boldsymbol{\times}$};
               \draw (2.4, 2.4) node[color=myblue] {$\boldsymbol{\times}$};
               \draw (2, 2.4) node[color=myblue] {$\boldsymbol{\times}$};
               \draw (1.6, 2.4) node[color=myblue] {$\boldsymbol{\times}$};
               \draw (3.6, 2.8) node[color=blue] {$\boldsymbol{\times}$};
               \draw (3.2, 2.8) node[color=myblue] {$\boldsymbol{\times}$};
               \draw (2.8, 2.8) node[color=myblue] {$\boldsymbol{\times}$};
               \draw (2.4, 2.8) node[color=myblue] {$\boldsymbol{\times}$};
               \draw (2, 2.8) node[color=myblue] {$\boldsymbol{\times}$};
               \draw (1.6, 2.8) node[color=myblue] {$\boldsymbol{\times}$};
               \draw (4, 3.2) node[color=blue] {$\boldsymbol{\times}$};
               \draw (3.6, 3.2) node[color=myblue] {$\boldsymbol{\times}$};
               \draw (3.2, 3.2) node[color=myblue] {$\boldsymbol{\times}$};
               \draw (2.8, 3.2) node[color=myblue] {$\boldsymbol{\times}$};
               \draw (2.4, 3.2) node[color=myblue] {$\boldsymbol{\times}$};
               \draw (2, 3.2) node[color=myblue] {$\boldsymbol{\times}$};
               \draw (1.6, 3.2) node[color=myblue] {$\boldsymbol{\times}$};
               \draw (4.4, 3.6) node[color=black] {$\boldsymbol{\cdot}$};
               \draw (4.6, 3.8) node[color=black] {$\boldsymbol{\cdot}$};
               \draw (4.8, 4) node[color=black] {$\boldsymbol{\cdot}$};
               \draw (1.6, 3.6) node[color=gray] {$\boldsymbol{\cdot}$};
               \draw (1.6, 3.8) node[color=gray] {$\boldsymbol{\cdot}$};
               \draw (1.6, 4) node[color=gray] {$\boldsymbol{\cdot}$};
        \end{tikzpicture}
      \end{minipage}
      \caption{Left: $\so(2)\oplus\so(d)$ weight diagram of the
        spin-$s$ singletons (with in abscisse the first component of
        the $\so(d)$ weights, denoted $\ell_1$). Right:
        $\so(2)\oplus\so(d-1)$ weight diagram of the $\so(2,d-1)$
        module $\D\big(s+\tfrac d2 -1\,;\, \blb s^{r-1} \brb\big)$
        (with in abscisse the first component of the $\so(d-1)$
        weights, denoted $\ell_1$ as well). Lighter blue crosses
        $\color{myblue} \times$ for a given $\so(2)$ weight $E$
        represent the $\so(d-1)$ representations coming from the
        branching rule of the $\so(d)$ representations in the
        $\so(2)\oplus\so(d)$ module (with the same $\so(2)$ weight
        $E$) of the singleton decomposition represented by a darker
        blue cross $\color{blue} \times$.}
    \end{figure}
  \end{itemize}
\end{proof}

We did not, in the previous review of the proofs of the listed
properties in \hyperref[th:uni_singleton]{Theorem
  \ref{th:uni_singleton}}, cover the branching rule of the $\so(2,d)$
singletons onto $\iso(1,d-1)$ or $\so(1,d)$ for the following reasons:

\begin{itemize}
\item {\bf From $\so(2,d)$ to $\iso(1,d-1)$.} As far as the branching
  rule from $\so(2,d)$ to $\so(1,d-1)$ are concerned, it can be
  recovered, assuming that the following diagram is commutative:
  \begin{equation}
    \begin{tikzcd}
      \D_{\so(2,d)} \arrow{dd}[left]{\branch} \arrow{rrrdd}{
        \overset{\so(2,d)}{\underset{\iso(1,d-1)}{\downarrow}}} &
      \qquad & \qquad & \qquad \\ \qquad & \qquad & \qquad
      \\ \D_{\so(2,d-1)} \arrow{rrr}[below]{\so(2,d-1) \flimit
        \iso(1,d-1)} & \qquad & \qquad & \D_{\iso(1,d-1)}\, ,
    \end{tikzcd}
    \label{com_diagram}
  \end{equation}
  i.e. by combining the branching rule from $\so(2,d)$ to $\so(2,d-1)$
  and an In\"on\"u-Wigner contraction. That is to say, it is
  equivalent (i) to branch a representation $\D_{\so(2,d)}$ from
  $\so(2,d)$ onto $\so(2,d-1)$ and then perform a In\"on\"u-Wigner
  contraction by sending the cosmological constant $\lambda$ to zero
  to obtain a representation $\D_{\iso(1,d-1)}$ of $\iso(1,d-1)$, and
  (ii) to branch the $\so(2,d)$ module $\D_{\iso(1,d-1)}$ onto
  $\iso(1,d-1)$ to obtain the same module $\D_{\iso(1,d-1)}$ than
  previously. Under this assumption, we can use the branching rule
  \eqref{branching_uni} of the $\so(2,d)$ singleton module onto
  $\so(2,d-1)$ and then contracting it to a $\iso(1,d-1)$ instead of
  deriving the branching rule from $\so(2,d)$ onto $\iso(1,d-1)$. The
  In\"on\"u-Wigner contraction for massless fields in AdS$_{d+1}$
  (i.e. $\so(2,d)$ modules) is known as the Brink-Metsaev-Vasiliev
  mechanism \cite{Brink:2000ag}, which was proven in
  \cite{Boulanger:2008up, Boulanger:2008kw, Alkalaev:2009vm}. This
  mechanism states that massless $\so(2,d)$ UIRs of spin given by a
  $\so(d)$ Young diagram $\Y$ contracts to the direct sum of massless
  UIRs of the Poincar\'e algebra with spin given by all of the Young
  diagrams obtained from the branching rule of $\Y$ {\it except} those
  where boxes in the first block of $\Y$ have been
  removed. Higher-spin singleton, as well as the massless $\so(2,d-1)$
  module onto which they branch being labelled by a rectangular Young
  diagram, the BMV mechanism implies that they contract to a single
  $\iso(1,d-1)$ i.e.
  \begin{equation}
    \D_{\so(2,d)}\big(s+r-1\,;\, \blb s^r \brb\big) \branch
    \D_{\so(2,d-1)}\big( s+r-1 \,;\, \blb s^{r-1} \brb \big) \quad
    \flimit \quad \D_{\iso(1,d-1)}\big(m=0\,;\, \blb s^{r-1} \brb
    \big)\, ,
  \end{equation}
  as shown in \cite{Angelopoulos:1997ij}.
\item {\bf From $\so(2,d)$ to $\so(1,d)$.} The $\so(1,d)$ generalised
  Verma modules are induced by, and decompose into,
  $\so(1,1)\oplus\so(d-1)$ modules instead of $\so(2)\oplus\so(d-1)$
  in the case of $\so(2,d-1)$. As a consequence, the method used
  previously consisting in relying on the common
  $\so(2)\oplus\so(d-1)$ cannot be applied here and we will therefore
  refer to the original paper \cite{Angelopoulos:1997ij} for the proof
  of that branching rule.
\end{itemize}

\paragraph{Non-unitary, higher-order extension.}
Higher-order extension of the Dirac singletons (i.e. the scalar and
spinor ones) are non-unitary $\so(2,d)$ modules that share the crucial
field theoretical property of singletons mentioned above, namely they
correspond to AdS (scalar and spinor) field that do not propagate
local degree of freedom in the bulk. They have been considered in
\cite{Iazeolla:2008ix} as well as in \cite{Bekaert:2013zya} where the
confinement to the conformal boundary of these remarkable fields was
highlighted, but were excluded from the exhaustive
work\,\footnote{Although mentionned briefly in
  \cite{Angelopoulos:1999bz} as ``multipleton''.}
\cite{Angelopoulos:1997ij} because they fall below the unitary bound
for representations of $\so(2,d)$ (recalled
\hyperref[unitarity_condition]{previously}).

\begin{definition}[Higher-order Dirac singletons]\label{def:racdil}
  The scalar and spinor, order-$\ell$ Dirac singletons are the
  $\so(2,d)$ modules $\D\big( \ez^{(\ell)}; \zero\big)$ and $\D\big(
  \eoh^{(\ell)}; \half \big)$ respectively, where
  \begin{equation}
    \ez^{(\ell)} := \frac{d-2\ell}2\, , \quad \text{and} \quad
    \eoh^{(\ell)} := \frac{d+1-2\ell}{2} \equiv \ez^{(\ell)} +
    \tfrac12\, ,
  \end{equation}
  and which are defined as the quotient:
  \begin{equation}
    \rac_\ell := \D\big( \ez^{(\ell)}\,;\, \zero \big) \cong
    \frac{\V\big( \ez^{(\ell)}\,;\, \zero \big)}{\D\big(
      d-\ez^{(\ell)}\,;\, \zero \big)}\, , \quad \text{and} \quad
    \di_\ell := \D\big( \eoh^{(\ell)}\,;\, \half \big) \cong
    \frac{\V\big( \eoh^{(\ell)}\,;\, \half \big)}{\D\big(
      d-\eoh^{(\ell)}\,;\, \half \big)}\, .
  \end{equation}
  Their character read:
  \begin{equation}
    \chi^{\so(2,d)}_{\rac_\ell}(q, \vec x) = q^{\ez^{(\ell)}}\,
    (1-q^{2\ell})\, \Pd d (q, \vec x)\, , \quad \text{and} \quad
    \chi^{\so(2,d)}_{\di_\ell}(q, \vec x) = q^{\eoh^{(\ell)}}\,
    (1-q^{2\ell-1})\, \chi^{\so(d)}_{\half}(\vec x)\, \Pd d (q, \vec
    x)\, .
    \label{character_dirac_l}
  \end{equation}
  These modules are non-unitary for $\ell>1$, whereas they correspond
  to the original (unitary) Dirac singletons of
  \hyperref[def:unitary_singletons]{Definition
    \ref{def:unitary_singletons}} for $\ell=1$.
\end{definition}

On top of the confinement property, the $\rac_\ell$ and $\di_\ell$
singletons also possess properties analogous to those of their unitary
counterparts reviewed in \hyperref[th:uni_singleton]{Theorem
  \ref{th:uni_singleton}}. Specifically, they can be decomposed as
several direct sum of $\so(2)\oplus\so(d)$ modules, making up not only
one but now several lines in the weight diagram, and they obey a
branching rule (from $\so(2,d)$ to $\so(2,d-1)$) similar to that of
$\rac$ and $\di$. The properties of the higher-order Dirac singletons
are summed up below.

\begin{proposition}[Properties of $\rac_\ell$ and $\di_\ell$]
  \label{prop:dirac_l}
  The $\so(2)\oplus\so(d)$ decomposition of the order-$\ell$ scalar
  and spinor singletons respectively read\,\footnote{Notice that the
    $\so(2)\oplus\so(d)$ decomposition \eqref{rac_l_decompo} of the
    $\rac_\ell$ module was originally derived in
    \cite{Iazeolla:2008ix} (where these singletons are refered to as
    ``scalar $p$-linetons'')}:
  \begin{equation}
    \D\big( \ez^{(\ell)}\,;\, \zero\big) \cong
    \bigoplus_{k=0}^{\ell-1} \bigoplus_{\sigma=0}^\infty
    \D_{\so(2)\oplus\so(d)}\big( \ez^{(\ell)}+\sigma+2k\,;\, \blb
    \sigma \brb\big)\, ,
    \label{rac_l_decompo}
  \end{equation}
  and
  \begin{equation}
    \D\big( \eoh^{(\ell)}\,;\, \half \big) \cong
    \bigoplus_{k=0}^{2(\ell-1)} \bigoplus_{\sigma=0}^\infty
    \D_{\so(2)\oplus\so(d)}\big( \eoh^{(\ell)}+\sigma+k\,;\, \blb
    \sigma +\tfrac12, (\tfrac12)^{r-1} \brb\big)\, .
    \label{di_l_decompo}
  \end{equation}
  These two modules obey the following branching
  rules\,\footnote{Notice that the module on the right hand side of
    the branching rules \eqref{branching_rac_l} and
    \eqref{branching_di_l} for $k=0$ are not the order $\ell$
    singleton, due to the fact that $\ez^{(\ell)} = \tfrac{d-2\ell}2$
    is not the critical energy of the $\rac_\ell$ singleton of
    $\so(2,d-1)$.}:
  \begin{equation}
    \D\big( \ez^{(\ell)}\,;\, \zero\big) \quad \branch \quad
    \bigoplus_{k=0}^{2\ell-1} \D\big( \ez^{(\ell)} + k \,;\, \zero
    \big)\, ,
    \label{branching_rac_l}
  \end{equation}
  and
  \begin{equation}
    \D\big( \eoh^{(\ell)}\,;\, \half\big) \quad \branch \quad
    \bigoplus_{k=0}^{2(\ell-1)} \D\big( \eoh^{(\ell)} + k \,;\, \half
    \big)\, .
    \label{branching_di_l}
  \end{equation}
\end{proposition}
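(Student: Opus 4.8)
The plan is to prove all four assertions at the level of characters, using the expansion of the universal function $\Pd d$ and its spinorial analogue. The only two inputs required are
$$\Pd d(q,\vec x)=\sum_{\sigma,n=0}^\infty q^{\sigma+2n}\,\chi^{\so(d)}_{(\sigma)}(\vec x),\qquad \chi^{\so(d)}_{\half}(\vec x)\,\Pd d(q,\vec x)=\sum_{\sigma,n=0}^\infty q^{\sigma+n}\,\chi^{\so(d)}_{(\sigma+\tfrac12,(\tfrac12)^{r-1})}(\vec x).$$
The first is \eqref{p_function}; the second I would obtain for free by equating the character of $\di$ from Definition \ref{def:unitary_singletons} with its single-line decomposition, i.e. property (i) of Theorem \ref{th:uni_singleton} specialised to $s=\tfrac12$, and dividing out the prefactor $q^{\eoh}(1-q)$. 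Granting these, the $\so(2)\oplus\so(d)$ decompositions \eqref{rac_l_decompo}–\eqref{di_l_decompo} follow at once from the characters \eqref{character_dirac_l}: the prefactors telescope into finite geometric sums, $(1-q^{2\ell})\sum_{n\ge0}q^{2n}=\sum_{k=0}^{\ell-1}q^{2k}$ and $(1-q^{2\ell-1})\sum_{n\ge0}q^{n}=\sum_{k=0}^{2(\ell-1)}q^{k}$, which is exactly the extra finite sum over $k$ appearing on the right-hand sides.

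For the branching rules \eqref{branching_rac_l}–\eqref{branching_di_l} I would compare the $\so(2)\oplus\so(d-1)$ content of the two sides, exactly as in the proof of Theorem \ref{th:uni_singleton}(ii). On the left, I would branch each $\so(d)$ factor occurring in \eqref{rac_l_decompo}–\eqref{di_l_decompo} down to $\so(d-1)$ using Appendix \ref{app:sod}, namely $(\sigma)\,\branching\,\bigoplus_{\tau=0}^{\sigma}(\tau)$ for the scalar tower and $(\sigma+\tfrac12,(\tfrac12)^{r-1})\,\branching\,\bigoplus_{\tau=0}^{\sigma}(\tau+\tfrac12,(\tfrac12)^{r-2})$ for the spinor tower; re-summing at fixed $\tau$ (write $\sigma=\tau+m$, $m\ge0$) turns the double sum into a geometric series in $q$. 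On the right, each module $\D\big(\ez^{(\ell)}+k;\zero\big)$, resp. $\D\big(\eoh^{(\ell)}+k;\half\big)$, is an irreducible generalised Verma module of $\so(2,d-1)$ with free character $q^{\ez^{(\ell)}+k}\Pd{d-1}$, resp. $q^{\eoh^{(\ell)}+k}\,\chi^{\so(d-1)}_{\half}\Pd{d-1}$, which I expand with the same two formulas (now for $\so(d-1)$). A short bookkeeping then shows both sides carry the identical factor $\tfrac{1-q^{2\ell}}{(1-q)(1-q^2)}$, resp. $\tfrac{1-q^{2\ell-1}}{(1-q)^2}$, multiplying $\sum_{\tau}q^{\ez^{(\ell)}+\tau}\chi^{\so(d-1)}_{(\tau)}$, resp. $\sum_{\tau}q^{\eoh^{(\ell)}+\tau}\chi^{\so(d-1)}_{(\tau+\frac12,(\frac12)^{r-2})}$, so the characters agree.

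Two points deserve care, and I expect the second to be the genuine obstacle. The first is the irreducibility of the right-hand side modules, which is what licenses the free characters used above: since $d$ is even, the energies $\ez^{(\ell)}+k$ are integers whereas the reducibility energies of scalar $\so(2,d-1)$ Verma modules are half-integers, and symmetrically the half-integer energies $\eoh^{(\ell)}+k$ miss the integer reducibility energies of the spinor modules; this is precisely the content of the footnote that $\ez^{(\ell)}$ is not the critical energy of the $\so(2,d-1)$ singleton, and it can be read off the classification of \cite{Shaynkman:2004vu}. The second, more delicate, point is promoting the equality of $\so(2)\oplus\so(d-1)$ characters to an isomorphism of actual $\so(2,d-1)$-modules. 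Here I would argue inductively: the restriction is $\so(2)$-graded with finite-dimensional graded pieces, so one peels off, starting from the lowest $\so(2)$-weight, the Verma submodule it generates — irreducible by the parity argument just given, hence a genuine direct summand — then quotients and repeats, matching characters at each stage. This step is what guarantees that the $\so(2)\oplus\so(d-1)$ constituents assemble into the claimed irreducible $\so(2,d-1)$ modules rather than merely reproducing their combined weight content.
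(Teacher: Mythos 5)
Your proposal is correct and follows essentially the same route as the paper: expand $\Pd{d}$ (and its spinorial counterpart, which the paper obtains via the $\so(d)$ tensor product rule $\blb\sigma\brb\otimes\half$ rather than by recycling Theorem \ref{th:uni_singleton}(i), an equivalent shortcut) to telescope the prefactors $(1-q^{2\ell})$ and $(1-q^{2\ell-1})$ into finite sums, then prove the branching rules by matching the $\so(2)\oplus\so(d-1)$ content of both sides, using that the right-hand modules are full generalised Verma modules by the integer/half-integer parity argument of the footnote. Your closing paragraph on upgrading the character identity to a genuine module isomorphism addresses a point the paper leaves implicit, and the peeling-off argument you sketch is a reasonable way to close it.
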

\begin{proof}
  As previsouly, we will use the property \eqref{p_function} of the
  function $\Pd d (q, \vec x)$ to rewrite the characters of the
  order-$\ell$ scalar and spinor singletons \eqref{character_dirac_l}
  as a sum of $\so(2)\oplus\so(d)$ characters, starting with the
  scalar $\rac_\ell$:
  \begin{eqnarray}
    \chi^{\so(2,d)}_{\rac_\ell}(q, \vec x) & = &
    q^{\ez^{(\ell)}}(1-q^{2\ell})\, \Pd d (q, \vec x) =
    \sum_{\sigma,n=0}^\infty q^{\ez^{(\ell)}+\sigma+2n}
    (1-q^{2\ell})\, \chi_{(\sigma)}^{\so(d)}(\vec x) \\ & = &
    \sum_{k=0}^{\ell-1} \sum_{\sigma=0}^\infty
    q^{\ez^{(\ell)}+2k+\sigma}\, \chi^{\so(d)}_{(\sigma)}(\vec x)
    \\ \Leftrightarrow \quad \rac_\ell & = & \bigoplus_{k=0}^{\ell-1}
    \bigoplus_{\sigma=0}^\infty \D_{\so(2)\oplus\so(d)}\big(
    \ez^{(\ell)} + 2k +\sigma\,;\, \blb \sigma \brb\big)\, .
  \end{eqnarray}
  For the $\di_\ell$ singleton we will also need the $\so(d)$ tensor
  product rule:
  \begin{equation}
    \blb \sigma \brb \otimes \half = \blb \sigma+\tfrac12,
    (\tfrac12)^{r-1} \brb \oplus \blb \sigma-\tfrac12,
    (\tfrac12)^{r-1} \brb\, , \qquad \text{for} \quad \sigma \geqslant
    1\, .
  \end{equation}
  Using the above identity and proceeding similarly to the scalar
  case, we end up with:
  \begin{eqnarray}
    \chi^{\so(2,d)}_{\di_\ell}(q, \vec x) & = &
    q^{\eoh^{(\ell)}}(1-q^{2\ell-1})\, \chi^{\so(d)}_{\half}(\vec x)\,
    \Pd d (q, \vec x) \\ & = & \sum_{n=0}^\infty q^{\eoh^{(\ell)}+2n}
    (1-q^{2\ell-1})\, \Big(\sum_{\sigma=0}^\infty q^\sigma
    \chi_{(\sigma+\tfrac12, (\tfrac12)^{r-1})}^{\so(d)}(\vec x) +
    \sum_{\sigma=1}^\infty q^\sigma \chi_{(\sigma-\tfrac12,
      (\tfrac12)^{r-1})}^{\so(d)}(\vec x) \Big) \\ & = & \sum_{n,
      \sigma=0}^\infty q^{\eoh^{(\ell)}+2n+\sigma} (1-q^{2\ell-1})\,
    (1+q)\, \chi_{(\sigma+\tfrac12, (\tfrac12)^{r-1})}^{\so(d)}(\vec
    x) \\ & = & \sum_{k=0}^{2(\ell-1)} \sum_{\sigma=0}^\infty
    q^{\eoh^{(\ell)}+k+\sigma}\, \chi^{\so(d)}_{(\sigma+\tfrac12,
      (\tfrac12)^{r-1})}(\vec x) \\ \Leftrightarrow \quad \di_\ell &
    = & \bigoplus_{k=0}^{2(\ell-1)} \bigoplus_{\sigma=0}^\infty
    \D_{\so(2)\oplus\so(d)}\big( \eoh^{(\ell)}+\sigma+k\,;\, \blb
    \sigma +\tfrac12, (\tfrac12)^{r-1} \brb\big)\, .
  \end{eqnarray}
  To prove the branching rule \eqref{branching_rac_l} and
  \eqref{branching_di_l}, we will follow the same strategy as
  previously, namely we will compare the $\so(2)\oplus\so(d-1)$
  decomposition of the two sides of these identities. This
  decomposition reads, for the $\rac_\ell$ singleton:
  \begin{eqnarray}
    \D\big( \ez^{(\ell)}\,;\, \zero\big) & \cong &
    \bigoplus_{k=0}^{\ell-1} \bigoplus_{\sigma=0}^\infty
    \D_{\so(2)\oplus\so(d)}\big( \ez^{(\ell)}+\sigma+2k\,;\, \blb
    \sigma \brb\big) \\ & \branch & \bigoplus_{k=0}^{\ell-1}
    \bigoplus_{\sigma=0}^\infty \bigoplus_{n=0}^\sigma
    \D_{\so(2)\oplus\so(d-1)}\big( \ez^{(\ell)}+\sigma+2k\,;\, \blb n
    \brb\big) \\ & \cong & \bigoplus_{k=0}^{\ell-1}
    \bigoplus_{\sigma=0}^\infty \bigoplus_{n=0}^\infty
    \D_{\so(2)\oplus\so(d-1)}\big( \ez^{(\ell)}+\sigma+2k+n\,;\, \blb
    \sigma \brb\big) \, ,
  \end{eqnarray}
  whereas for the $\di_\ell$ singleton:
  \begin{eqnarray}
    \D\big( \eoh^{(\ell)}\,;\, \half\big) & \cong &
    \bigoplus_{k=0}^{2(\ell-1)} \bigoplus_{\sigma=0}^\infty
    \D_{\so(2)\oplus\so(d)}\big( \eoh^{(\ell)}+\sigma+k\,;\, \blb
    \sigma+\tfrac12, (\tfrac12)^{r-1} \brb\big) \\ & \branch &
    \bigoplus_{k=0}^{2(\ell-1)} \bigoplus_{\sigma=0}^\infty
    \bigoplus_{n=0}^\sigma \D_{\so(2)\oplus\so(d-1)}\big(
    \eoh^{(\ell)}+\sigma+k\,;\, \blb n+\tfrac12, (\tfrac12)^{r-1}
    \brb\big) \\ & \cong & \bigoplus_{k=0}^{2(\ell-1)}
    \bigoplus_{\sigma=0}^\infty \bigoplus_{n=0}^\infty
    \D_{\so(2)\oplus\so(d-1)}\big( \eoh^{(\ell)}+\sigma+k+n\,;\, \blb
    \sigma+\tfrac12, (\tfrac12)^{r-1} \brb\big) \, .
  \end{eqnarray}
  On the other hand, the character of an irreducible $\so(2,d-1)$
  module $\D\big( \Delta\,;\, \zero\big) \equiv \V\big(\Delta\,;\,
  \zero\big)$, i.e. a generalised Verma module which does not contain
  a submodule\,\footnote{A scalar $\so(2,d-1)$ module
    $\V\big(\Delta\,;\, \zero\big)$ possesses a submodule only if
    $\Delta=\tfrac{d-1-2\ell}2$, whereas a spin one-half module
    $\V\big(\Delta\,;\, \half\big)$ possesses a submodule only if
    $\Delta=\tfrac{d-2\ell}2$. In other words, only the $\rac_\ell$
    and $\di_\ell$ modules are defined as quotients, see the
    classification in \cite{Shaynkman:2004vu}.} can be rewritten as:
  \begin{eqnarray}
    \chi^{\so(2,d-1)}_{[\Delta\,;\,\zero]}(q, \vec x) & = & q^\Delta\,
    \Pd{d-1}(q, \vec x) = \sum_{\sigma,n=0}^\infty
    q^{\Delta+2n+\sigma}\, \chi^{\so(d-1)}_{(\sigma)}(\vec x)
    \\ \Leftrightarrow \quad \D\big(\Delta\,;\, \zero\big) & \cong &
    \bigoplus_{\sigma=0}^\infty \bigoplus_{n=0}^\infty
    \D_{\so(2)\oplus\so(d-1)}\big( \Delta+\sigma+2n\,;\, \blb \sigma
    \brb\big)
  \end{eqnarray}
  As a consequence, 
  \begin{equation}
    \D\big( \ez^{(\ell)} + 2k \,;\, \zero \big) \oplus \D\big(
    \ez^{(\ell)} + 2k + 1 \,;\, \zero \big) \cong
    \bigoplus_{\sigma=0}^\infty \bigoplus_{n=0}^\infty
    \D_{\so(2)\oplus\so(d-1)}\big( \ez^{(\ell)}+\sigma+n+2k\,;\, \blb
    \sigma \brb\big)\, ,
  \end{equation}
  which proves \eqref{branching_rac_l}. Finally, an irreducible
  $\so(2,d-1)$ module $\D\big( \Delta\,;\, \half\big) \equiv
  \V\big(\Delta\,;\, \half\big)$ admits the following
  $\so(2)\oplus\so(d-1)$ decomposition:
  \begin{eqnarray}
    \chi^{\so(2,d-1)}_{[\Delta\,;\,\half]}(q, \vec x) & = & q^\Delta\,
    \chi^{\so(d-1)}_{\half}(\vec x) \Pd{d-1}(q, \vec x) \\ & = &
    \sum_{n=0}^\infty q^{\Delta+2n}\, \Big( \sum_{\sigma=0}^\infty
    q^\sigma \chi^{\so(d-1)}_{(\sigma+\tfrac12,
      (\tfrac12)^{r-1})}(\vec x) + \sum_{\sigma=1}^\infty q^{\sigma}
    \chi^{\so(d-1)}_{(\sigma-\tfrac12, (\tfrac12)^{r-1})}(\vec x)
    \Big) \\ & = & \sum_{\sigma=0}^\infty \sum_{n=0}^\infty
    q^{\Delta+\sigma+n}\, \chi^{\so(d-1)}_{(\sigma+\tfrac12,
      (\tfrac12)^{r-1})}(\vec x) \\ \Leftrightarrow \quad
    \D\big(\Delta\,;\, \half\big) & \cong &
    \bigoplus_{\sigma=0}^\infty \bigoplus_{n=0}^\infty
    \D_{\so(2)\oplus\so(d-1)}\big( \Delta+\sigma+n\,;\, \blb
    \sigma+\tfrac12, (\tfrac12)^{r-1} \brb\big)\, ,
  \end{eqnarray}
  thereby proving \eqref{branching_di_l}.
\end{proof}

The branching rule \eqref{branching_rac_l} and \eqref{branching_di_l}
reproduce that given in \cite{Angelopoulos:1997ij,
  Angelopoulos:1999bz} (and rederived in \cite{Artsukevich:2008vy})
for the Rac and Di singletons upon setting $\ell=1$, and extend them
to the higher-order Dirac singletons $\rac_\ell$ and $\di_\ell$.\\

From a CFT point of view, the order-$\ell$ scalar and spinor
singletons correspond to respectively a non-unitary fundamental scalar
or spinor fields of respective conformal weight $\ez^{(\ell)}$ and
$\eoh^{(\ell)}$, and respectively subject to an order $2\ell$ and
$2\ell-1$ wave equation (see e.g. \cite{Bekaert:2013zya} for more
details). The spectrum of current of these CFT contains an infinite
tower of partially conserved totally symmetric currents of arbitrary
spin, which should be dual to partially massless gauge fields in the
bulk \cite{Dolan:2001ih}.

\paragraph{Candidates for higher-spin higher-order singletons.} 
The extension we will be concerned with corresponds to the $\so(2,d)$
module, for $d=2r$:
\begin{equation}
  \D\big(s+\tfrac d2 - t\,;\, \blb s^r \brb\big) \cong
  \frac{\V\big(s+\tfrac d2 - t\,;\, \blb s^r
    \brb\big)}{\D\big(s+\tfrac d2\,;\, \blb s^{r-1}, s-t \brb\big)}\,
  , \quad \text{for} \quad 1 \leqslant t \leqslant s\, ,
  \label{module_hs_ho_singleton}
\end{equation}
whose structure is similar to the unitary spin-$s$ singletons for $s
\geqslant 1$ in the sense that the various submodule to be modded out
of $\V(s+\tfrac d2-t\,;\, \blb s^r \brb \big)$ are defined throught
the sequence:
\begin{equation}
  \D\big( s+\tfrac d2+k \,;\, \blb s^{r-k-1}, (s-1)^k, s-t \brb \big)
  := \frac{\V\big( s+\tfrac d2+k \,;\, \blb s^{r-k-1}, (s-1)^k, s-t
    \brb \big)}{\D\big( s+\tfrac d2+k+1 \,;\, \blb s^{r-k-2},
    (s-1)^{k+1}, s-t \brb \big)}\, ,
\end{equation}
for $0 \leqslant k \leqslant r-2$ and $\D\big( s+d-1 \,;\, \blb
(s-1)^{r-1}, s-t \brb \big) \equiv \V\big( s+d-1 \,;\, \blb
(s-1)^{r-1}, s-t \brb \big)$. In other words, except for the first
submodule which is obtained by increasing the $\so(2)$ weight of $t$
units and removing $t$ boxes from the last row of the rectangular
Young diagram $\blb s^r \brb$ labelling the irreducible module, the
sequence of nested submodules are related to one another by adding one
unit to the $\so(2)$ weight of the previous submodule and removing one
box in the row above the previously amputated row. Correspondingly,
the character of this module reads:
\begin{equation}
  \chi^{\so(2,d)}_{[s+\tfrac d2 -t;(s^r)]}(q, \vec x) = q^{s+\tfrac d2
    -t} \Big( \chi^{\so(d)}_{(s^r)}(\vec x) + \sum_{k=0}^{r-1}
  (-)^{k+1} q^{t+k} \chi^{\so(d)}_{(s^{r-1-k},(s-1)^k,s-t)}(\vec x)
  \Big) \Pd d (q, \vec x)\, .
  \label{character_ho_singleton}
\end{equation}

This definition encompasses the unitary spin-$s$ singletons, which
correspond to the case $t=1$ saturating the unitarity bound. For $t >
1$ (but always $t \leqslant s$), the module
\eqref{module_hs_ho_singleton} is non-unitary and describes a
depth-$t$ partially-massless field of spin $\blb s^r \brb$. The spin
being given by a rectangular Young diagram, we will refer to this
class of module as ``rectangular'' partially massless (RPM) fields of
spin $s$ and depth $t$. From the boundary point of view, the modules
\eqref{module_hs_ho_singleton} correspond to the curvature a conformal
field of spin $\blb s^{r-1} \brb$ (hence the curvature is given by a
tensor of symmetry described by a rectangular Young diagram of length
$s$ and height $r$) obeying a partial conservation law of order $t$,
i.e. taking $t$ symmetrised divergences of this curvature identically
vanishes on-shell (see e.g. \cite{Bae:2016xmv} where the $d=4$ and
$t=1$ case was discussed, and \cite{Vasiliev:2009ck} for a more
details on mixed symmetry conformal field in arbitrary dimensions).

\begin{remark}
  Notice that formally, the modules of the $\rac_\ell$ and $\di_\ell$
  singletons, as well as the module \eqref{module_hs_ho_singleton} that
  we propose here as a higher-spin generalisation of the higher-order
  scalar and spinor singletons, can be denoted as:
  \begin{equation}
    \D\big( s+\ez^{(t)}\,;\, \blb s^r \brb\big)\, , \quad \text{with}
    \quad s=0, \tfrac12\, , \quad \text{and} \quad s \in \N \quad
    \text{if} \quad d=2r
  \end{equation}
  with $\ez^{(t)} = \tfrac{d-2t}2$ as defined in
  \hyperref[def:racdil]{Definition \ref{def:racdil}}. On top of being
  notationally convenient, this coincidence is actually the reason why
  the modules \eqref{module_hs_ho_singleton} are ``natural''
  generalisations of the unitary higher-spin singletons: by
  introducing the parameter $t$ in this way, one considers a family of
  modules whose first representative is the unitary singletons whereas
  for $t>1$ the modules are non-unitary but their structure is almost
  the same than in the unitary case.
\end{remark}

Let us now study what are the counterpart of the properties displayed
in \hyperref[th:uni_hs_singlton]{Theorem \ref{th:uni_singleton}} for
unitary singletons and \hyperref[prop:dirac_l]{Proposition
  \ref{prop:dirac_l}} for the $\rac_\ell$ and $\di_\ell$ singletons,
starting with the $\so(2)\oplus\so(d)$ decomposition of
\eqref{module_hs_ho_singleton}.

\begin{proposition}[$\so(2)\oplus\so(d)$ decomposition]\label{prop:decompo_ho}
  The $\so(2,d)$ module $\D\big( s+r-t; \blb s^r \brb \big)$ for
  $d=2r$, describing a depth-$t$ and spin-$s$ RPM field, admits the
  following $\so(2) \oplus \so(d)$ decomposition:
  \begin{equation}
    \D\big( s+r-t; \blb s^r \brb \big) \quad \cong \quad
    \bigoplus_{\ell=0}^{t-1} \bigoplus_{n=0}^{t-1-\ell}
    \bigoplus_{\sigma=\ell}^\infty \D_{\so(2)\oplus\so(d)}\big(
    s+r-t+\sigma+2n; \blb s-\ell+\sigma, s^{r-2}, s-\ell \brb \big)\,
    .
    \label{decompo_ho_hs_sing}
  \end{equation}
  Equivalently, this property means that the character
  \eqref{character_ho_singleton} can bewritten as:
  \begin{equation}
    \chi^{\so(2,d)}_{[s+r-t;(s^r)]}(q, \vec x) = \sum_{\ell=0}^{t-1}
    \sum_{n=0}^{t-1-\ell} \sum_{\sigma=\ell}^\infty q^{s+r
      -t+\sigma+2n}
    \chi^{\so(d)}_{(s+\sigma-\ell,s^{r-2},s-\ell)}(\vec x)\, .
  \end{equation}
\end{proposition}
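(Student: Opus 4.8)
The plan is to follow the character-theoretic strategy already used in the proofs of Theorem \ref{th:uni_singleton}(i) and Proposition \ref{prop:dirac_l}: rewrite the $\so(2,d)$ character \eqref{character_ho_singleton} of the module $\D\big(s+r-t;\blb s^r \brb\big)$ as a manifestly positive sum of $\so(2)\oplus\so(d)$ characters, and check that this sum is exactly the right-hand side of \eqref{decompo_ho_hs_sing}. Since every generalised Verma character factorises as $q^\Delta\,\chi^{\so(d)}_{\Bell}\,\Pd d$, and since $\Pd d$ expands through \eqref{p_function} as $\sum_{p,n\geqslant 0}q^{p+2n}\chi^{\so(d)}_{(p)}$, the whole problem reduces to computing the $\so(d)$ tensor products of each shape occurring in the bracket of \eqref{character_ho_singleton} with the totally symmetric irreps $(p)$, and then collecting powers of $q$.

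Concretely, I would first insert \eqref{p_function} into \eqref{character_ho_singleton}, turning the character into
\[
  q^{s+r-t}\sum_{p,n\geqslant 0}q^{p+2n}\Big(\chi^{\so(d)}_{(s^r)}\cdot\chi^{\so(d)}_{(p)} + \sum_{k=0}^{r-1}(-)^{k+1}q^{t+k}\,\chi^{\so(d)}_{(s^{r-1-k},(s-1)^k,s-t)}\cdot\chi^{\so(d)}_{(p)}\Big)\, .
\]
Each product $\chi^{\so(d)}_{\lambda}\cdot\chi^{\so(d)}_{(p)}$ is then evaluated using the $\so(2r)$ Pieri-type rule for tensoring with a one-row (symmetric) diagram recalled in Appendix \ref{app:sod}: $\mu$ runs over all diagrams obtained from $\lambda$ by removing a horizontal strip of size $a$ and then adding a horizontal strip of size $b$ with $a+b=p$, subject to $\so(2r)$ dominance. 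Because every $\lambda$ here is a rectangle carrying at most a two-step staircase at its bottom, the horizontal strips that can be added or removed only affect the top row and the rows at these steps, so the set of reachable $\mu$ stays small and fully explicit.

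I would then collect, for each fixed target shape $\mu=(s+\sigma-\ell,s^{r-2},s-\ell)$ and each fixed power of $q$, the net coefficient. The alternating signs $(-)^{k+1}$ across successive values of $k$ generate a telescoping cancellation that simultaneously (a) selects, out of the many diagrams produced by the Pieri rule, only those of the claimed form, each with coefficient $+1$, and (b) truncates the $q^{2n}$-series to the finite range $0\leqslant n\leqslant t-1-\ell$, exactly as the factor $(1-q^{2\ell})$ truncated the analogous sum in the $\rac_\ell$ computation of Proposition \ref{prop:dirac_l}. This generalises at once the single-line collapse observed for the unitary singleton in Theorem \ref{th:uni_singleton}(i) (the case $t=1$) and the finite-multiplicity pattern of the higher-order Dirac singletons. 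Reading off the surviving character $\sum_{\ell,n,\sigma}q^{s+r-t+\sigma+2n}\chi^{\so(d)}_{(s+\sigma-\ell,s^{r-2},s-\ell)}$ and re-expressing it as a direct sum of $\so(2)\oplus\so(d)$ modules yields \eqref{decompo_ho_hs_sing}, with the range $\sigma\geqslant\ell$ forced by the dominance requirement $s+\sigma-\ell\geqslant s$ and the range $0\leqslant\ell\leqslant t-1$ emerging from the point at which the telescoping over $k$ terminates.

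The step I expect to be the main obstacle is controlling the $\so(d)$ tensor-product combinatorics for the genuinely mixed-symmetry shapes $(s^{r-1-k},(s-1)^k,s-t)$: unlike the totally symmetric or single-box-defect diagrams handled in the earlier proofs, tensoring these with $(p)$ produces several families of diagrams, and one must verify that each family either cancels in the alternating sum over $k$ or reassembles into the claimed rectangular-plus-defect shapes with the correct multiplicity. The delicate bookkeeping concerns the dominance conditions near the bottom rows (where the labels $s-\ell$ and $s-t$ sit) and the edge effects for small $\sigma$ that enforce $\sigma\geqslant\ell$; this is presumably why the full computation is deferred to Appendix \ref{app:technical}. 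An alternative, a priori cleaner, route would be an induction on $t$ anchored at $t=1$ by Theorem \ref{th:uni_singleton}(i), but the simultaneous $t$-dependence of the energy prefactor $q^{s+r-t}$ and of the bottom-row label $s-t$ makes the inductive step awkward, so I would expect the direct character computation to be the more reliable path.
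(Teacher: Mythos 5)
Your strategy --- insert \eqref{p_function} into \eqref{character_ho_singleton}, decompose the resulting products of $\so(d)$ characters with the symmetric characters $\chi^{\so(d)}_{(p)}$, and let the alternating sum over $k$ telescope --- is exactly what the paper does for the illustrative case $d=4$ in the main-text proof of this proposition, so up to that point you are on the paper's track. For arbitrary $d=2r$, however, the paper does \emph{not} carry out this computation. Appendix \ref{app:nonunitary_hs_singleton} proves the identity in the reverse direction: it starts from the claimed sum $\sum_{\ell,n,\sigma}q^{s+r-t+\sigma+2n}\chi^{\so(d)}_{(s+\sigma-\ell,s^{r-2},s-\ell)}$, resums the geometric series in $q$, and uses the Weyl character formula --- writing $\chi^{\so(d)}_{\Bell}=\fW_{\so(d)}\big(\C^{\so(d)}_{\Bell}\big)$ and exploiting the Weyl invariance of $\Pd d$ (property \eqref{propW}), the shift property \eqref{propC}, and the reflection property \eqref{sym_prop} --- to collapse everything back to \eqref{character_ho_singleton}. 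This technique, borrowed from appendix D of \cite{Dolan:2005wy}, is chosen precisely because it replaces all of the tensor-product combinatorics by the elementary identity \eqref{sym_prop}, i.e. it avoids the step you yourself flag as the main obstacle.

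That obstacle is a genuine gap in your write-up as it stands. The ``Pieri-type rule'' you invoke for tensoring $\blb s^{r-1-k},(s-1)^k,s-t\brb$ with $\blb p\brb$ in $\so(2r)$ is not the rule recalled in \hyperref[app:tensor_sod]{Appendix \ref{app:tensor_sod}}, which is stated only for $\so(2r+1)$ (via branching to $\so(2r)$ followed by the Littlewood--Richardson rule); for even orthogonal algebras the analogous rule is complicated by the chirality of the last weight component and by the fact that the diagrams occurring here have height $r$ or $r-1$, so the ``stable'' tensor-product formulas acquire corrections and equivalences between tall diagrams that must be tracked. You would first have to establish the correct $\so(2r)$ rule for these shapes and then verify in detail that the alternating sum over $k$ kills every family except $\blb s+\sigma-\ell,s^{r-2},s-\ell\brb$ with unit multiplicity and the stated ranges of $\ell$, $n$, $\sigma$. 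Your plan is plausible and would presumably succeed with enough bookkeeping --- it is, after all, just a character identity --- but the decisive combinatorial cancellation is asserted rather than proved, whereas the paper's Weyl-symmetrizer argument establishes it in a few lines. An honest completion of your route would need either the full $\so(2r)$ tensor-product lemma or a switch to the appendix's reverse-direction argument.
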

\begin{proof}
  As previously, we will only focus on the simpler $d=4$ case and
  leave the proof of this property in arbitrary dimension to
  \hyperref[app:nonunitary_hs_singleton]{Appendix
    \ref{app:nonunitary_hs_singleton}}. We will proceed in the exact
  same way as we did for unitary higher-spin singleton, that is we
  will use \eqref{p_function} in the character formula
  \eqref{character_ho_singleton}, so as to rewrite it in the following
  way:
  \begin{eqnarray}
    \chi^{\so(2,4)}_{[s+2-t; (s,s)]}(q, \vec x) & = & q^{s+2-t}\Big(
    \chi^{\so(4)}_{(s,s)}(\vec x) - q^{t} \chi^{\so(4)}_{(s,s-t)}(\vec
    x) +q^{t+1} \chi^{\so(4)}_{(s-1,s-t)}(\vec x) \Big) \Pd 4 (q, \vec
    x) \\ & = & \sum_{n=0}^\infty q^{s+2-t+2n} \Big(
    \sum_{\sigma=0}^{2s} q^{\sigma} \sum_{k=0}^\sigma
    \chi^{\so(4)}_{(s+k,s+k-\sigma)}(\vec x) +
    \sum_{\sigma=2s+1}^\infty q^\sigma \sum_{k=0}^{2s}
    \chi^{\so(4)}_{(\sigma+k-s,k-s)}(\vec x) \\ && - \sum_{m=0}^t
    \Big[ \sum_{\sigma=m}^{2s-t} q^{\sigma+t} \sum_{k=0}^{\sigma}
      \chi^{\so(4)}_{(s+k-m,s+k-\sigma-t+m)}(\vec x) +
      \sum_{\sigma=2s-t+1}^\infty q^{\sigma+t} \sum_{k=0}^{2s-t}
      \chi^{\so(4)}_{(\sigma+k+t-s-m,k+m-s)}(\vec x) \Big] \nonumber
    \\ && + \sum_{m=0}^{t-1} \Big[ \sum_{\sigma=m}^{2s-t-1}
      q^{\sigma+t+1} \sum_{k=0}^{\sigma}
      \chi^{\so(4)}_{(s+k-m-1,s+k-\sigma-t+m)}(\vec x) \\ && \qquad
      \qquad \qquad \qquad \qquad + \sum_{\sigma=2s-t}^\infty
      q^{\sigma+t+1} \sum_{k=0}^{2s-t-1}
      \chi^{\so(4)}_{(\sigma+k+t-s-m,k+m+1-s)}(\vec x) \Big] \Big)
    \nonumber \\ & = & \sum_{n=0}^\infty q^{s+2-t+2n} \Big(
    \sum_{\sigma=0}^{t-1} q^{\sigma} \sum_{k=0}^\sigma
    \chi^{\so(4)}_{(s+k,s+k-\sigma)}(\vec x) + \sum_{m=0}^{t-1}
    \sum_{\sigma=t}^{\infty} q^{\sigma}
    \chi^{\so(4)}_{(s+\sigma-m,s-m)}(\vec x) \label{inter1} \\ &&
    \qquad \qquad \qquad \qquad \qquad \qquad \qquad - \sum_{m=1}^{t}
    \sum_{\sigma=m}^\infty q^{\sigma+t}
    \chi^{\so(4)}_{(s+\sigma-m,s-t+m)}(\vec x) \Big) \\ & = &
    \sum_{n=0}^\infty q^{s+2-t+2n} \sum_{m=0}^{t-1} \Big(
    \sum_{\sigma=m}^{\infty} q^{\sigma}
    \chi^{\so(4)}_{(s+\sigma-m,s-m)}(\vec x) -
    \sum_{\sigma=t-m}^\infty q^{\sigma+t}
    \chi^{\so(4)}_{(s+\sigma-t+m,s-m)}(\vec x) \Big) \label{inter2}
    \\ & = & \sum_{n=0}^\infty q^{s+2-t+2n} \sum_{m=0}^{t-1}
    \sum_{\sigma=m}^{\infty} q^{\sigma} (1- q^{2(t-m)})
    \chi^{\so(4)}_{(s+\sigma-m,s-m)}(\vec x) \\ & = &
    \sum_{\ell=0}^{t-1} \sum_{n=0}^{t-1-\ell}
    \sum_{\sigma=\ell}^{\infty} q^{\sigma+s+2-t+2n}
    \chi^{\so(4)}_{(s-\ell+\sigma,s-\ell)}(\vec x)\, ,
    \label{final_decompo}
  \end{eqnarray}
  where we used
  \begin{equation}
    \sum_{\sigma=0}^{t-1} q^\sigma \sum_{k=0}^{\sigma}
    \chi^{\so(4)}_{(s+k,s+k-\sigma)}(\vec x) = \sum_{m=0}^{t-1}
    \sum_{\sigma=m}^{t-1} q^\sigma
    \chi^{\so(4)}_{(s+\sigma-m,s-m)}(\vec x)\, ,
  \end{equation}
  between \eqref{inter1} and \eqref{inter2}. Expression
  \eqref{final_decompo} shows that the depth-$t$ PM module
  $\D\big(s+2-t; \blb s, s \brb\big)$ decomposes as the direct sum of
  $\so(2)\oplus\so(4)$ modules:
  \begin{equation}
    \D\big(s+2-t\,;\, \blb s, s \brb\big) \quad \cong \quad
    \bigoplus_{\ell=0}^{t-1} \bigoplus_{n=0}^{t-1-\ell}
    \bigoplus_{\sigma=\ell}^\infty \D_{\so(2)\oplus\so(4)}\big(
    s+2-t+\sigma+2n\,;\, \blb s+\sigma-\ell, s-\ell \brb\big)\, .
  \end{equation}
\end{proof}

With the previous $\so(2)\oplus\so(d)$ decomposition at hand, we can
now derive the branching rule of the spin-$s$ depth-$t$ RPM module.

\begin{proposition}[Branching rule]\label{prop:branching_rule_ho}
  The $\so(2,d)$ module $\D\big( s+r-t\,;\, \blb s^r \brb \big)$ for
  $d=2r$, describing a depth-$t$ and spin-$s$ RPM field, branches onto
  the direct sum of $\so(2,d-1)$ modules $\D\big( s+r-\tau\,;\, \blb
  s^{r-1} \brb \big)$ with $\tau=1, \dots, t$ describing partially
  massless fields in AdS$_d$ of spin $\blb s^{r-1} \brb$ and with
  depth-$\tau$:
  \begin{equation}
    \D\big( s+r-t\,;\, \blb s^r \brb \big) \quad \branch \quad
    \bigoplus_{\tau=1}^t \D\big( s+r-\tau\,;\, \blb s^{r-1} \brb
    \big)\, .
    \label{branching_ho}
  \end{equation}
\end{proposition}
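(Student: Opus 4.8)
The plan is to establish the claimed branching by comparing the $\so(2)\oplus\so(d-1)$ content of both sides, exactly in the spirit of the proof of property (ii) of Theorem~\ref{th:uni_singleton} and of Proposition~\ref{prop:dirac_l}. The essential tool is the restriction identity for the universal function: branching onto $\so(d-1)$ the totally symmetric $\so(d)$ representations packaged in $\Pd d$ (using \eqref{p_function} and the symmetric branching rule) yields
\[
\Pd d(q, \vec x)\big|_{\so(d-1)} = \frac{1}{1-q}\,\Pd{d-1}(q, \vec x),
\]
which accounts precisely for the extra $\tfrac{1}{1-q}$ factor distinguishing $\Pd{d-1}$ (odd $d-1=2r-1$) from $\Pd d$ (even $d=2r$). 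Since restriction of characters is multiplicative, applying this to the character \eqref{character_ho_singleton} reduces the whole statement to a single identity between the finite ``prefactors'' multiplying $\Pd{d-1}$ on each side.

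Concretely, I would first branch each finite $\so(2r)$ character appearing in \eqref{character_ho_singleton} onto $\so(2r-1)$ using the interlacing (betweenness) rule recalled in Appendix~\ref{app:sod}. The leading term gives $\chi^{\so(2r)}_{(s^r)}\big|_{\so(2r-1)} = \chi^{\so(2r-1)}_{(s^{r-1})}$, while each $\chi^{\so(2r)}_{(s^{r-1-k},(s-1)^k,s-t)}$ spreads into a nested sum of $\so(2r-1)$ weights whose last component runs over $\{s-t,\dots,s-1\}$ and whose component at position $r-1-k$ may equal $s$ or $s-1$. On the other side, I would write the character of each target module $\D\big(s+r-\tau;\blb s^{r-1}\brb\big)$ of $\so(2,2r-1)$ via the odd-dimensional analogue of \eqref{character_ho_singleton} (same nested submodule structure with $r\to r-1$ and $t\to\tau$) and sum over $\tau=1,\dots,t$. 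The proof then amounts to checking that the branched prefactor, once multiplied by $\tfrac{q^{s+r-t}}{1-q}$, equals $\sum_{\tau=1}^t q^{s+r-\tau}(\cdots)$. The mechanism making this work is the telescoping $\tfrac{q^{s+r-t}}{1-q}\,(1-q^t) = \sum_{\tau=1}^t q^{s+r-\tau}$, together with the collapse of consecutive branching ranges into differences of the form $(1-q)\sum_m(\cdots)$. For the main text I would present the $d=4$ instance, where this reproduces exactly $\sum_{\tau=1}^t q^{s+2-\tau}\big(\chi^{\so(3)}_{(s)} - q^{\tau}\chi^{\so(3)}_{(s-\tau)}\big)$, i.e. the announced sum of depth-$\tau$ partially massless characters in AdS$_4$, and relegate the general case to Appendix~\ref{app:technical}.

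The hard part will be the combinatorial resummation in arbitrary even $d$: the alternating sum over $k$ in \eqref{character_ho_singleton}, after the $\so(2r)\downarrow\so(2r-1)$ interlacing (which for $k\geqslant 1$ splits each term into two sub-families according to whether the component at position $r-1-k$ equals $s$ or $s-1$), must be reorganised into the double alternating structure — over the depth $\tau$ and over the auxiliary row index $j$ — carried by the superposition of the $t$ partially massless towers on the right-hand side. Verifying that all spurious contributions cancel and that each $\so(2r-1)$ weight $\blb s^{r-2-j},(s-1)^j,s-\tau\brb$ appears with the correct sign and $q$-power is the delicate bookkeeping. A secondary point to settle cleanly is the odd-dimensional character formula for $\D\big(s+r-\tau;\blb s^{r-1}\brb\big)$, namely that these are genuinely the depth-$\tau$ rectangular partially massless modules with the expected nested maximal submodule; this follows from the classification of \cite{Shaynkman:2004vu} and should be stated explicitly before the resummation is carried out.
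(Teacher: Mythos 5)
Your strategy is correct, and at the level of what actually gets computed it is organised differently from the paper's. The paper first establishes the $\so(2)\oplus\so(d)$ decomposition of the RPM module (Proposition \ref{prop:decompo_ho}), branches each constituent onto $\so(d-1)$, and then separately works out the full $\so(2)\oplus\so(d-1)$ decomposition of each target module $\D\big(s+r-\tau;\blb s^{r-1}\brb\big)$ --- which in arbitrary dimension requires the Littlewood--Richardson resummations of Appendix \ref{app:nonunitary_hs_singleton} --- before matching the two lists of compact weights. You instead compare characters as rational functions: your restriction identity $\Pd d(q,\vec x)\big|_{\so(d-1)}=\tfrac{1}{1-q}\,\Pd{d-1}(q,\vec x)$ is correct (set $x_r=1$ in the product formula, or resum \eqref{p_function} after branching), and with it the whole statement collapses to a polynomial identity between the branched finite prefactor of \eqref{character_ho_singleton} and $\sum_{\tau=1}^{t}q^{s+r-\tau}\big(\chi^{\so(d-1)}_{(s^{r-1})}+\sum_{k}(-)^{k+1}q^{\tau+k}\chi^{\so(d-1)}_{(s^{r-2-k},(s-1)^k,s-\tau)}\big)$. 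This bypasses Proposition \ref{prop:decompo_ho} entirely and avoids decomposing $\Pd{d-1}\cdot\chi^{\so(d-1)}_{\Bell}$ into irreducibles, at the price of the finite combinatorial identity you flag as the hard part. In $d=4$ the mechanism does close: $\chi^{\so(4)}_{(s,s-t)}\big|_{\so(3)}-q\,\chi^{\so(4)}_{(s-1,s-t)}\big|_{\so(3)}=\chi^{\so(3)}_{(s)}+(1-q)\sum_{\tau=1}^{t}\chi^{\so(3)}_{(s-\tau)}$, so the branched prefactor equals $(1-q^t)\chi^{\so(3)}_{(s)}-q^t(1-q)\sum_{\tau=1}^{t}\chi^{\so(3)}_{(s-\tau)}$, and multiplying by $\tfrac{q^{s+2-t}}{1-q}$ and telescoping reproduces $\sum_{\tau=1}^{t}q^{s+2-\tau}\big(\chi^{\so(3)}_{(s)}-q^{\tau}\chi^{\so(3)}_{(s-\tau)}\big)$, i.e.\ exactly \eqref{branching_ho}. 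What the paper's longer route buys is the explicit $\so(2)\oplus\so(d-1)$ content (the weight diagrams of Figures \ref{fig:weight_pm} and \ref{fig:weight_ho_singleton}, which are reused in the discussion of the $\iso(1,d-1)$ contraction); what yours buys is economy. Two points should be made explicit in a write-up: the odd-dimensional character formula for $\D\big(s+r-\tau;\blb s^{r-1}\brb\big)$ with its nested maximal submodule must indeed be quoted from the classification of \cite{Shaynkman:2004vu}, as you note; and, as everywhere in the paper, equality of $\so(2)\oplus\so(d-1)$ characters identifies the composition factors of the restriction, the direct-sum form of the conclusion resting on the same implicit complete-reducibility assumption the paper itself uses.
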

\begin{proof}
  Here again we will only display the proof for the low dimensional
  case $d=4$ in order to illustrate the general mechanism while being
  not too technically involved, and we leave the treatment in
  arbitrary dimensions to the
  \hyperref[app:nonunitary_hs_singleton]{Appendix
    \ref{app:nonunitary_hs_singleton}}.\\ 

  In order to prove the branching rule \eqref{branching_ho} for
  $\so(2,4)$, we will compare the $\so(2)\oplus \so(3)$ decomposition
  of the $\so(2,4)$ spin-$s$ and depth-$t$ singleton (obtained by
  first branching it onto $\so(2,3)$) to the $\so(2)\oplus \so(3)$
  decomposition of the $\so(2,3)$ spin-$s$ and depth-$\tau$ partially
  massless fields. Let us start with the latter, i.e. derive the
  $\so(2) \oplus \so(3)$ decomposition of the $\so(2,3)$ module
  $\D\big(s+2-\tau\,;\, \blb s \brb\big)$ using its character:
  \begin{eqnarray}
    \chi^{\so(2,3)}_{[s+2-\tau;(s)]}(q, x) & = & q^{s+2-\tau} \big(
    \chi^{\so(3)}_{(s)}(x) - q^\tau \chi^{\so(3)}_{(s-\tau)}(x) \big)
    \Pd 3 (q, x) \\ & = & \sum_{n,\sigma=0}^\infty
    q^{s+2-\tau+\sigma+2n} \Big( \sum_{k=|s-\sigma|}^{s+\sigma}
    \chi^{\so(3)}_{(k)}(x) - q^\tau
    \sum_{k=|s-\sigma-\tau|}^{s+\sigma-\tau} \chi^{\so(3)}_{(k)}(x)
    \Big) \\ & = & \sum_{n=0}^\infty q^{s+2-\tau+2n} \Big(
    \sum_{\sigma=0}^\infty q^\sigma \sum_{k=|s-\sigma|}^{s+\sigma}
    \chi^{\so(3)}_{(k)}(x) - \sum_{\sigma=\tau}^\infty q^\sigma
    \sum_{k=|s-\sigma|}^{s+\sigma-2\tau} \chi^{\so(3)}_{(k)}(x) \Big)
    \\ & = & \sum_{n=0}^\infty q^{s+2-\tau+2n} \Big(
    \sum_{\sigma=0}^{\tau-1} q^\sigma \sum_{k=|s-\sigma|}^{s+\sigma}
    \chi^{\so(3)}_{(k)}(x) + \sum_{\sigma=\tau}^\infty q^\sigma
    \sum_{k=s+\sigma-2\tau+1}^{s+\sigma} \chi^{\so(3)}_{(k)}(x) \Big)
    \\ & = & \sum_{\sigma,n=0}^\infty \sum_{k=0}^{\tau-1}
    q^{s+2-\tau+n+\sigma+k} \chi^{\so(3)}_{(s+\sigma-k)}(x)\, .
  \end{eqnarray}
  Hence, the $\so(2)\oplus \so(3)$ decomposition of a $\so(2,3)$
  spin-$s$ and depth-$\tau$ partially massless field reads:
  \begin{equation}
    \D\big( s+2-\tau\,;\, \blb s \brb \big) \quad \cong \quad
    \bigoplus_{\sigma=0}^\infty \bigoplus_{n=0}^\infty
    \bigoplus_{k=0}^{\tau-1} \D_{\so(2) \oplus \so(3)}\big(
    s+2-\tau+n+\sigma+k\,;\, \blb s+\sigma-k \brb \big)\, .
    \label{decompo_pm_3}
  \end{equation}
  This can be represented graphically by the weight diagram displayed
  in \hyperref[fig:weight_pm]{Figure \ref{fig:weight_pm}} for
  $\tau=3$.
  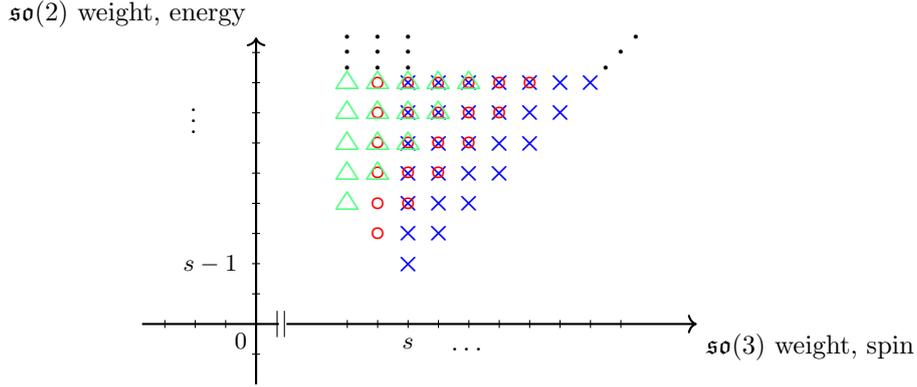
\begin{figure}[!ht]
    \center
    \begin{tikzpicture}

      \draw[thick] (-1.5,0) -- (0.3,0) node {$\,| |$};
      \draw[thick,->] (0.4,0) -- (5.8,0) node[anchor=north west]
           {$\so(3)$ weight, spin}; \draw[thick,->] (0,-0.8) --
           (0,3.8) node[anchor=south east] {$\so(2)$ weight, energy};
                
           \draw (-0.2, 0) node[below] {\small $0$};
           \draw (-0.4, 1.5pt) -- (-0.4, -1.5pt);  
           \draw (-0.8, 1.5pt) -- (-0.8, -1.5pt); 
           \draw (-1.2, 1.5pt) -- (-1.2, -1.5pt); 
           \draw (1.2, 1.5pt) -- (1.2, -1.5pt); 
           \draw (1.6, 1.5pt) -- (1.6, -1.5pt); 
           \draw (2, 1.5pt) -- (2, -1.5pt) node[below=0.2] {\small $s$}; 
           \draw (2.4, 1.5pt) -- (2.4, -1.5pt) ; 
           \draw (2.8, 1.5pt) -- (2.8, -1.5pt) node[below=4] {$\dots$}; 
           \draw (3.2, 1.5pt) -- (3.2, -1.5pt); 
           \draw (3.6, 1.5pt) -- (3.6, -1.5pt); 
           \draw (4, 1.5pt) -- (4, -1.5pt); 
           \draw (4.4, 1.5pt) -- (4.4, -1.5pt); 
           \draw (4.8, 1.5pt) -- (4.8, -1.5pt); 
           
           \draw (-1.5pt, 0.4) -- (1.5pt, 0.4); 
           \draw (-1.5pt, 0.8) -- (1.5pt, 0.8) node[left=5]{\small $s-1$};
           \draw (-1.5pt, 1.2) -- (1.5pt, 1.2);
           \draw (-1.5pt, 1.6) -- (1.5pt, 1.6);
           \draw (-1.5pt, 2) -- (1.5pt, 2);
           \draw (-1.5pt, 2.4) -- (1.5pt, 2.4); 
           \draw (-1.5pt, 2.8) -- (1.5pt, 2.8) node[left=20] {$\vdots$}; 
           \draw (-1.5pt, 3.2) -- (1.5pt, 3.2); 
           \draw (-1.5pt, 3.6) -- (1.5pt, 3.6); 
           \draw (-1.5pt, -0.4) -- (1.5pt, -0.4); 
           
           \draw (2, 0.8) node[color=blue] {$\boldsymbol{\times}$};
           \draw (2.4, 1.2) node[color=blue] {$\boldsymbol{\times}$};
           \draw (2.8, 1.6) node[color=blue] {$\boldsymbol{\times}$};
           \draw (3.2, 2) node[color=blue] {$\boldsymbol{\times}$};
           \draw (3.6, 2.4) node[color=blue] {$\boldsymbol{\times}$};
           \draw (4, 2.8) node[color=blue] {$\boldsymbol{\times}$};
           \draw (4.4, 3.2) node[color=blue] {$\boldsymbol{\times}$};
           \draw (4.6, 3.4) node[color=black] {$\boldsymbol{\cdot}$};
           \draw (4.8, 3.6) node[color=black] {$\boldsymbol{\cdot}$};
           \draw (5, 3.8) node[color=black] {$\boldsymbol{\cdot}$};

           \draw (2, 1.2) node[color=blue] {$\boldsymbol{\times}$};
           \draw (2.4, 1.6) node[color=blue] {$\boldsymbol{\times}$};
           \draw (2.8, 2) node[color=blue] {$\boldsymbol{\times}$};
           \draw (3.2, 2.4) node[color=blue] {$\boldsymbol{\times}$};
           \draw (3.6, 2.8) node[color=blue] {$\boldsymbol{\times}$};
           \draw (4, 3.2) node[color=blue] {$\boldsymbol{\times}$};

           \draw (2, 1.6) node[color=blue] {$\boldsymbol{\times}$};
           \draw (2.4, 2) node[color=blue] {$\boldsymbol{\times}$};
           \draw (2.8, 2.4) node[color=blue] {$\boldsymbol{\times}$};
           \draw (3.2, 2.8) node[color=blue] {$\boldsymbol{\times}$};
           \draw (3.6, 3.2) node[color=blue] {$\boldsymbol{\times}$};

           \draw (2, 2) node[color=blue] {$\boldsymbol{\times}$};
           \draw (2.4, 2.4) node[color=blue] {$\boldsymbol{\times}$};
           \draw (2.8, 2.8) node[color=blue] {$\boldsymbol{\times}$};
           \draw (3.2, 3.2) node[color=blue] {$\boldsymbol{\times}$};

           \draw (2, 2.4) node[color=blue] {$\boldsymbol{\times}$};
           \draw (2.4, 2.8) node[color=blue] {$\boldsymbol{\times}$};
           \draw (2.8, 3.2) node[color=blue] {$\boldsymbol{\times}$};

           \draw (2, 2.8) node[color=blue] {$\boldsymbol{\times}$};
           \draw (2.4, 3.2) node[color=blue] {$\boldsymbol{\times}$};

           \draw (2, 3.2) node[color=blue] {$\boldsymbol{\times}$};
           \draw (2, 3.4) node[color=black] {$\boldsymbol{\cdot}$};
           \draw (2, 3.6) node[color=black] {$\boldsymbol{\cdot}$};
           \draw (2, 3.8) node[color=black] {$\boldsymbol{\cdot}$};
%%%
           \draw (1.6, 1.2) node[color=red] {$\boldsymbol{\circ}$};
           \draw (2, 1.6) node[color=red] {$\boldsymbol{\circ}$};
           \draw (2.4, 2) node[color=red] {$\boldsymbol{\circ}$};
           \draw (2.8, 2.4) node[color=red] {$\boldsymbol{\circ}$};
           \draw (3.2, 2.8) node[color=red] {$\boldsymbol{\circ}$};
           \draw (3.6, 3.2) node[color=red] {$\boldsymbol{\circ}$};

           \draw (1.6, 1.6) node[color=red] {$\boldsymbol{\circ}$};
           \draw (2, 2) node[color=red] {$\boldsymbol{\circ}$};
           \draw (2.4, 2.4) node[color=red] {$\boldsymbol{\circ}$};
           \draw (2.8, 2.8) node[color=red] {$\boldsymbol{\circ}$};
           \draw (3.2, 3.2) node[color=red] {$\boldsymbol{\circ}$};

           \draw (1.6, 2) node[color=red] {$\boldsymbol{\circ}$};
           \draw (2, 2.4) node[color=red] {$\boldsymbol{\circ}$};
           \draw (2.4, 2.8) node[color=red] {$\boldsymbol{\circ}$};
           \draw (2.8, 3.2) node[color=red] {$\boldsymbol{\circ}$};

           \draw (1.6, 2.4) node[color=red] {$\boldsymbol{\circ}$};
           \draw (2, 2.8) node[color=red] {$\boldsymbol{\circ}$};
           \draw (2.4, 3.2) node[color=red] {$\boldsymbol{\circ}$};

           \draw (1.6, 2.8) node[color=red] {$\boldsymbol{\circ}$};
           \draw (2, 3.2) node[color=red] {$\boldsymbol{\circ}$};

           \draw (1.6, 3.2) node[color=red] {$\boldsymbol{\circ}$};
           \draw (1.6, 3.4) node[color=black] {$\boldsymbol{\cdot}$};
           \draw (1.6, 3.6) node[color=black] {$\boldsymbol{\cdot}$};
           \draw (1.6, 3.8) node[color=black] {$\boldsymbol{\cdot}$};
%%%
           \draw (1.2, 1.6) node[color=mygreen] {$\boldsymbol{\triangle}$};
           \draw (1.6, 2) node[color=mygreen] {$\boldsymbol{\triangle}$};
           \draw (2, 2.4) node[color=mygreen] {$\boldsymbol{\triangle}$};
           \draw (2.4, 2.8) node[color=mygreen] {$\boldsymbol{\triangle}$};
           \draw (2.8, 3.2) node[color=mygreen] {$\boldsymbol{\triangle}$};

           \draw (1.2, 2) node[color=mygreen] {$\boldsymbol{\triangle}$};
           \draw (1.6, 2.4) node[color=mygreen] {$\boldsymbol{\triangle}$};
           \draw (2, 2.8) node[color=mygreen] {$\boldsymbol{\triangle}$};
           \draw (2.4, 3.2) node[color=mygreen] {$\boldsymbol{\triangle}$};

           \draw (1.2, 2.4) node[color=mygreen] {$\boldsymbol{\triangle}$};
           \draw (1.6, 2.8) node[color=mygreen] {$\boldsymbol{\triangle}$};
           \draw (2, 3.2) node[color=mygreen] {$\boldsymbol{\triangle}$};

           \draw (1.2, 2.8) node[color=mygreen] {$\boldsymbol{\triangle}$};
           \draw (1.6, 3.2) node[color=mygreen] {$\boldsymbol{\triangle}$};

           \draw (1.2, 3.2) node[color=mygreen] {$\boldsymbol{\triangle}$};
           \draw (1.2, 3.4) node[color=black] {$\boldsymbol{\cdot}$};
           \draw (1.2, 3.6) node[color=black] {$\boldsymbol{\cdot}$};
           \draw (1.2, 3.8) node[color=black] {$\boldsymbol{\cdot}$};
    \end{tikzpicture}
    \caption{Weight diagram of a spin-$s$ partially massless field of
      depth $\tau=3$. The contribution of the sum over $k$ in
      \eqref{decompo_pm_3} are represented by: blue crosses
      $\color{blue} \times$ for $k=0$, red circles $\color{red} \circ$
      for $k=1$ and green triangles $\color{mygreen} \triangle$ for
      $k=2$.}
    \label{fig:weight_pm}
  \end{figure}

  Now starting with the $\so(2)\oplus\so(4)$ decomposition
  \eqref{decompo_ho_hs_sing} of the spin-$s$ depth-$t$ PM $\so(2,4)$
  module, we can derive its $\so(2)\oplus\so(3)$ decomposition:
  \begin{eqnarray}
    \D\big( s+2-t\,;\, \blb s,s \brb \big) & \cong &
    \bigoplus_{\ell=0}^{t-1} \bigoplus_{n=0}^{t-1-\ell}
    \bigoplus_{\sigma=\ell}^\infty \D_{\so(2) \oplus \so(4)}\big(
    s+2-t+\sigma+2n\,;\, \blb s+\sigma-\ell,s-\ell \brb \big) \\ &
    \branchmod{4}{3} & \bigoplus_{\ell=0}^{t-1}
    \bigoplus_{n=0}^{t-1-\ell} \bigoplus_{\sigma=\ell}^\infty
    \bigoplus_{k=0}^\sigma \D_{\so(2) \oplus \so(3)}\big(
    s+2-t+\sigma+2n\,;\, \blb s+k-\ell \brb \big) \label{interm_dec}
    \\ & \cong & \bigoplus_{\tau=1}^t \bigoplus_{k=0}^{\tau-1}
    \bigoplus_{\sigma=0}^\infty \bigoplus_{n=0}^\infty \D_{\so(2)
      \oplus \so(3)}\big( s+2-\tau+n+\sigma+k\,;\, \blb s+\sigma-k
    \brb \big)\, ,
  \end{eqnarray}
  which matches the direct sum of the $\so(2)\oplus\so(3)$
  decomposition of the spin-$s$ partially massless modules of depth
  $\tau=1, \dots, t$, i.e.
  \begin{equation}
    \D\big( s+2-t\,;\, \blb s,s \brb \big) \quad \branchmod{4}{3}
    \quad \bigoplus_{\tau=1}^t \D\big( s+2-\tau\,;\, \blb s \brb
    \big)\, .
  \end{equation}
  This branching rule can also be represented graphically, by drawing
  on the one hand the $\so(2)\oplus\so(3)$ weight diagram of the
  spin-$s$ and depth-$t$ RPM field as read from \eqref{interm_dec} and
  on the other hand by drawing the $\so(2)\oplus\so(3)$ weight
  diagrams of the partially massless spin-$s$ modules of depth
  $\tau=1, \dots, t$, and comparing the two diagrams. This is done for
  the $t=2$ case in \hyperref[fig:weight_ho_singleton]{Figure
    \ref{fig:weight_ho_singleton}} below.
  \begin{figure}[!ht]
    \begin{minipage}[c]{0.4\textwidth}
      \center
      \begin{tikzpicture}
        \draw[thick] (-1.5,0) -- (0.3,0) node {$\,| |$};
        \draw[thick,->] (0.4,0) -- (5.8,0) node[anchor=north west]
             {$\ell_1$}; \draw[thick,->] (0,-0.8) --
             (0,3.8) node[anchor=south east] {$E$};
             
           \draw (-0.2, 0) node[below] {\small $0$};
           \draw (-0.4, 1.5pt) -- (-0.4, -1.5pt);  
           \draw (-0.8, 1.5pt) -- (-0.8, -1.5pt); 
           \draw (-1.2, 1.5pt) -- (-1.2, -1.5pt); 
           \draw (1.2, 1.5pt) -- (1.2, -1.5pt); 
           \draw (1.6, 1.5pt) -- (1.6, -1.5pt); 
           \draw (2, 1.5pt) -- (2, -1.5pt) node[below=0.2] {\small $s$}; 
           \draw (2.4, 1.5pt) -- (2.4, -1.5pt) ; 
           \draw (2.8, 1.5pt) -- (2.8, -1.5pt) node[below=4] {$\dots$}; 
           \draw (3.2, 1.5pt) -- (3.2, -1.5pt); 
           \draw (3.6, 1.5pt) -- (3.6, -1.5pt); 
           \draw (4, 1.5pt) -- (4, -1.5pt); 
           \draw (4.4, 1.5pt) -- (4.4, -1.5pt); 
           \draw (4.8, 1.5pt) -- (4.8, -1.5pt); 
           
           \draw (-1.5pt, 0.4) -- (1.5pt, 0.4); 
           \draw (-1.5pt, 0.8) -- (1.5pt, 0.8) node[left=5]{\small $s$};
           \draw (-1.5pt, 1.2) -- (1.5pt, 1.2);
           \draw (-1.5pt, 1.6) -- (1.5pt, 1.6);
           \draw (-1.5pt, 2) -- (1.5pt, 2);
           \draw (-1.5pt, 2.4) -- (1.5pt, 2.4); 
           \draw (-1.5pt, 2.8) -- (1.5pt, 2.8) node[left=20] {$\vdots$}; 
           \draw (-1.5pt, 3.2) -- (1.5pt, 3.2); 
           \draw (-1.5pt, 3.6) -- (1.5pt, 3.6); 
           \draw (-1.5pt, -0.4) -- (1.5pt, -0.4); 
           
           \draw (2, 0.8) node[color=blue] {$\boldsymbol{\times}$};
           \draw (2.4, 1.2) node[color=blue] {$\boldsymbol{\times}$};
           \draw (2.8, 1.6) node[color=blue] {$\boldsymbol{\times}$};
           \draw (3.2, 2) node[color=blue] {$\boldsymbol{\times}$};
           \draw (3.6, 2.4) node[color=blue] {$\boldsymbol{\times}$};
           \draw (4, 2.8) node[color=blue] {$\boldsymbol{\times}$};
           \draw (4.4, 3.2) node[color=blue] {$\boldsymbol{\times}$};
           \draw (4.6, 3.4) node[color=black] {$\boldsymbol{\cdot}$};
           \draw (4.8, 3.6) node[color=black] {$\boldsymbol{\cdot}$};
           \draw (5, 3.8) node[color=black] {$\boldsymbol{\cdot}$};

           \draw (2, 1.2) node[color=blue] {$\boldsymbol{\times}$};
           \draw (2.4, 1.6) node[color=blue] {$\boldsymbol{\times}$};
           \draw (2.8, 2) node[color=blue] {$\boldsymbol{\times}$};
           \draw (3.2, 2.4) node[color=blue] {$\boldsymbol{\times}$};
           \draw (3.6, 2.8) node[color=blue] {$\boldsymbol{\times}$};
           \draw (4, 3.2) node[color=blue] {$\boldsymbol{\times}$};

           \draw (2, 1.6) node[color=blue] {$\boldsymbol{\times}$};
           \draw (2.4, 2) node[color=blue] {$\boldsymbol{\times}$};
           \draw (2.8, 2.4) node[color=blue] {$\boldsymbol{\times}$};
           \draw (3.2, 2.8) node[color=blue] {$\boldsymbol{\times}$};
           \draw (3.6, 3.2) node[color=blue] {$\boldsymbol{\times}$};

           \draw (2, 2) node[color=blue] {$\boldsymbol{\times}$};
           \draw (2.4, 2.4) node[color=blue] {$\boldsymbol{\times}$};
           \draw (2.8, 2.8) node[color=blue] {$\boldsymbol{\times}$};
           \draw (3.2, 3.2) node[color=blue] {$\boldsymbol{\times}$};

           \draw (2, 2.4) node[color=blue] {$\boldsymbol{\times}$};
           \draw (2.4, 2.8) node[color=blue] {$\boldsymbol{\times}$};
           \draw (2.8, 3.2) node[color=blue] {$\boldsymbol{\times}$};

           \draw (2, 2.8) node[color=blue] {$\boldsymbol{\times}$};
           \draw (2.4, 3.2) node[color=blue] {$\boldsymbol{\times}$};

           \draw (2, 3.2) node[color=blue] {$\boldsymbol{\times}$};
           \draw (2, 3.4) node[color=black] {$\boldsymbol{\cdot}$};
           \draw (2, 3.6) node[color=black] {$\boldsymbol{\cdot}$};
           \draw (2, 3.8) node[color=black] {$\boldsymbol{\cdot}$};
%%%
           \draw (1.6, 1.2) node[color=red] {$\boldsymbol{\circ}$};
           \draw (2, 1.6) node[color=red] {$\boldsymbol{\circ}$};
           \draw (2.4, 2) node[color=red] {$\boldsymbol{\circ}$};
           \draw (2.8, 2.4) node[color=red] {$\boldsymbol{\circ}$};
           \draw (3.2, 2.8) node[color=red] {$\boldsymbol{\circ}$};
           \draw (3.6, 3.2) node[color=red] {$\boldsymbol{\circ}$};

           \draw (2, 1.2) node[color=red] {$\boldsymbol{\circ}$};
           \draw (2.4, 1.6) node[color=red] {$\boldsymbol{\circ}$};
           \draw (2.8, 2) node[color=red] {$\boldsymbol{\circ}$};
           \draw (3.2, 2.4) node[color=red] {$\boldsymbol{\circ}$};
           \draw (3.6, 2.8) node[color=red] {$\boldsymbol{\circ}$};
           \draw (4, 3.2) node[color=red] {$\boldsymbol{\circ}$};

           \draw (1.6, 1.6) node[color=red] {$\boldsymbol{\circ}$};
           \draw (2, 2) node[color=red] {$\boldsymbol{\circ}$};
           \draw (2.4, 2.4) node[color=red] {$\boldsymbol{\circ}$};
           \draw (2.8, 2.8) node[color=red] {$\boldsymbol{\circ}$};
           \draw (3.2, 3.2) node[color=red] {$\boldsymbol{\circ}$};

           \draw (1.6, 2) node[color=red] {$\boldsymbol{\circ}$};
           \draw (2, 2.4) node[color=red] {$\boldsymbol{\circ}$};
           \draw (2.4, 2.8) node[color=red] {$\boldsymbol{\circ}$};
           \draw (2.8, 3.2) node[color=red] {$\boldsymbol{\circ}$};

           \draw (1.6, 2.4) node[color=red] {$\boldsymbol{\circ}$};
           \draw (2, 2.8) node[color=red] {$\boldsymbol{\circ}$};
           \draw (2.4, 3.2) node[color=red] {$\boldsymbol{\circ}$};

           \draw (1.6, 2.8) node[color=red] {$\boldsymbol{\circ}$};
           \draw (2, 3.2) node[color=red] {$\boldsymbol{\circ}$};

           \draw (1.6, 3.2) node[color=red] {$\boldsymbol{\circ}$};
           \draw (1.6, 3.4) node[color=black] {$\boldsymbol{\cdot}$};
           \draw (1.6, 3.6) node[color=black] {$\boldsymbol{\cdot}$};
           \draw (1.6, 3.8) node[color=black] {$\boldsymbol{\cdot}$};
%%%
           \draw (2, 1.6) node[color=mygreen] {$\boldsymbol{\triangle}$};
           \draw (2.4, 2) node[color=mygreen] {$\boldsymbol{\triangle}$};
           \draw (2.8, 2.4) node[color=mygreen] {$\boldsymbol{\triangle}$};
           \draw (3.2, 2.8) node[color=mygreen] {$\boldsymbol{\triangle}$};
           \draw (3.6, 3.2) node[color=mygreen] {$\boldsymbol{\triangle}$};

           \draw (2, 2) node[color=mygreen] {$\boldsymbol{\triangle}$};
           \draw (2.4, 2.4) node[color=mygreen] {$\boldsymbol{\triangle}$};
           \draw (2.8, 2.8) node[color=mygreen] {$\boldsymbol{\triangle}$};
           \draw (3.2, 3.2) node[color=mygreen] {$\boldsymbol{\triangle}$};

           \draw (2, 2.4) node[color=mygreen] {$\boldsymbol{\triangle}$};
           \draw (2.4, 2.8) node[color=mygreen] {$\boldsymbol{\triangle}$};
           \draw (2.8, 3.2) node[color=mygreen] {$\boldsymbol{\triangle}$};

           \draw (2, 2.8) node[color=mygreen] {$\boldsymbol{\triangle}$};
           \draw (2.4, 3.2) node[color=mygreen] {$\boldsymbol{\triangle}$};

           \draw (2, 3.2) node[color=mygreen] {$\boldsymbol{\triangle}$};
      \end{tikzpicture}
    \end{minipage}
    \qquad \qquad $\leftrightarrow$ \qquad
    \begin{minipage}[r]{0.4\textwidth}
      \center
      \begin{tikzpicture}

        \draw[thick] (-1.5,0) -- (0.3,0) node {$\,| |$};
        \draw[thick,->] (0.4,0) -- (5.8,0) node[anchor=north west]
             {$\ell_1$}; \draw[thick,->] (0,-0.8) --
             (0,3.8) node[anchor=south east] {$E$};
                
           \draw (-0.2, 0) node[below] {\small $0$};
           \draw (-0.4, 1.5pt) -- (-0.4, -1.5pt);  
           \draw (-0.8, 1.5pt) -- (-0.8, -1.5pt); 
           \draw (-1.2, 1.5pt) -- (-1.2, -1.5pt); 
           \draw (1.2, 1.5pt) -- (1.2, -1.5pt); 
           \draw (1.6, 1.5pt) -- (1.6, -1.5pt); 
           \draw (2, 1.5pt) -- (2, -1.5pt) node[below=0.2] {\small $s$}; 
           \draw (2.4, 1.5pt) -- (2.4, -1.5pt) ; 
           \draw (2.8, 1.5pt) -- (2.8, -1.5pt) node[below=4] {$\dots$}; 
           \draw (3.2, 1.5pt) -- (3.2, -1.5pt); 
           \draw (3.6, 1.5pt) -- (3.6, -1.5pt); 
           \draw (4, 1.5pt) -- (4, -1.5pt); 
           \draw (4.4, 1.5pt) -- (4.4, -1.5pt); 
           \draw (4.8, 1.5pt) -- (4.8, -1.5pt); 
           
           \draw (-1.5pt, 0.4) -- (1.5pt, 0.4); 
           \draw (-1.5pt, 0.8) -- (1.5pt, 0.8) node[left=5]{\small $s$};
           \draw (-1.5pt, 1.2) -- (1.5pt, 1.2);
           \draw (-1.5pt, 1.6) -- (1.5pt, 1.6);
           \draw (-1.5pt, 2) -- (1.5pt, 2);
           \draw (-1.5pt, 2.4) -- (1.5pt, 2.4); 
           \draw (-1.5pt, 2.8) -- (1.5pt, 2.8) node[left=20] {$\vdots$}; 
           \draw (-1.5pt, 3.2) -- (1.5pt, 3.2); 
           \draw (-1.5pt, 3.6) -- (1.5pt, 3.6); 
           \draw (-1.5pt, -0.4) -- (1.5pt, -0.4); 
           
           \draw (2, 0.8) node[color=black] {$\boldsymbol{\times}$};
           \draw (2.4, 1.2) node[color=black] {$\boldsymbol{\times}$};
           \draw (2.8, 1.6) node[color=black] {$\boldsymbol{\times}$};
           \draw (3.2, 2) node[color=black] {$\boldsymbol{\times}$};
           \draw (3.6, 2.4) node[color=black] {$\boldsymbol{\times}$};
           \draw (4, 2.8) node[color=black] {$\boldsymbol{\times}$};
           \draw (4.4, 3.2) node[color=black] {$\boldsymbol{\times}$};
           \draw (4.6, 3.4) node[color=black] {$\boldsymbol{\cdot}$};
           \draw (4.8, 3.6) node[color=black] {$\boldsymbol{\cdot}$};
           \draw (5, 3.8) node[color=black] {$\boldsymbol{\cdot}$};

           \draw (2, 1.2) node[color=black] {$\boldsymbol{\times}$};
           \draw (2.4, 1.6) node[color=black] {$\boldsymbol{\times}$};
           \draw (2.8, 2) node[color=black] {$\boldsymbol{\times}$};
           \draw (3.2, 2.4) node[color=black] {$\boldsymbol{\times}$};
           \draw (3.6, 2.8) node[color=black] {$\boldsymbol{\times}$};
           \draw (4, 3.2) node[color=black] {$\boldsymbol{\times}$};

           \draw (2, 1.6) node[color=black] {$\boldsymbol{\times}$};
           \draw (2.4, 2) node[color=black] {$\boldsymbol{\times}$};
           \draw (2.8, 2.4) node[color=black] {$\boldsymbol{\times}$};
           \draw (3.2, 2.8) node[color=black] {$\boldsymbol{\times}$};
           \draw (3.6, 3.2) node[color=black] {$\boldsymbol{\times}$};

           \draw (2, 2) node[color=black] {$\boldsymbol{\times}$};
           \draw (2.4, 2.4) node[color=black] {$\boldsymbol{\times}$};
           \draw (2.8, 2.8) node[color=black] {$\boldsymbol{\times}$};
           \draw (3.2, 3.2) node[color=black] {$\boldsymbol{\times}$};

           \draw (2, 2.4) node[color=black] {$\boldsymbol{\times}$};
           \draw (2.4, 2.8) node[color=black] {$\boldsymbol{\times}$};
           \draw (2.8, 3.2) node[color=black] {$\boldsymbol{\times}$};

           \draw (2, 2.8) node[color=black] {$\boldsymbol{\times}$};
           \draw (2.4, 3.2) node[color=black] {$\boldsymbol{\times}$};

           \draw (2, 3.2) node[color=black] {$\boldsymbol{\times}$};
           \draw (2, 3.4) node[color=black] {$\boldsymbol{\cdot}$};
           \draw (2, 3.6) node[color=black] {$\boldsymbol{\cdot}$};
           \draw (2, 3.8) node[color=black] {$\boldsymbol{\cdot}$};
%%%
           \draw (1.6, 1.2) node[color=black] {$\boldsymbol{\+}$};
           \draw (2, 1.6) node[color=blue] {$\boldsymbol{\circ}$};
           \draw (2.4, 2) node[color=blue] {$\boldsymbol{\circ}$};
           \draw (2.8, 2.4) node[color=blue] {$\boldsymbol{\circ}$};
           \draw (3.2, 2.8) node[color=blue] {$\boldsymbol{\circ}$};
           \draw (3.6, 3.2) node[color=blue] {$\boldsymbol{\circ}$};

           \draw (2, 1.2) node[color=blue] {$\boldsymbol{\circ}$};
           \draw (2.4, 1.6) node[color=blue] {$\boldsymbol{\circ}$};
           \draw (2.8, 2) node[color=blue] {$\boldsymbol{\circ}$};
           \draw (3.2, 2.4) node[color=blue] {$\boldsymbol{\circ}$};
           \draw (3.6, 2.8) node[color=blue] {$\boldsymbol{\circ}$};
           \draw (4, 3.2) node[color=blue] {$\boldsymbol{\circ}$};

           \draw (1.6, 1.6) node[color=black] {$\boldsymbol{\+}$};
           \draw (2, 2) node[color=blue] {$\boldsymbol{\circ}$};
           \draw (2.4, 2.4) node[color=blue] {$\boldsymbol{\circ}$};
           \draw (2.8, 2.8) node[color=blue] {$\boldsymbol{\circ}$};
           \draw (3.2, 3.2) node[color=blue] {$\boldsymbol{\circ}$};

           \draw (1.6, 2) node[color=black] {$\boldsymbol{\+}$};
           \draw (2, 2.4) node[color=blue] {$\boldsymbol{\circ}$};
           \draw (2.4, 2.8) node[color=blue] {$\boldsymbol{\circ}$};
           \draw (2.8, 3.2) node[color=blue] {$\boldsymbol{\circ}$};

           \draw (1.6, 2.4) node[color=black] {$\boldsymbol{\+}$};
           \draw (2, 2.8) node[color=blue] {$\boldsymbol{\circ}$};
           \draw (2.4, 3.2) node[color=blue] {$\boldsymbol{\circ}$};

           \draw (1.6, 2.8) node[color=black] {$\boldsymbol{\+}$};
           \draw (2, 3.2) node[color=blue] {$\boldsymbol{\circ}$};

           \draw (1.6, 3.2) node[color=black] {$\boldsymbol{\+}$};
           \draw (1.6, 3.4) node[color=black] {$\boldsymbol{\cdot}$};
           \draw (1.6, 3.6) node[color=black] {$\boldsymbol{\cdot}$};
           \draw (1.6, 3.8) node[color=black] {$\boldsymbol{\cdot}$};
%%%
           \draw (2, 1.6) node[color=black] {$\boldsymbol{\+}$};
           \draw (2.4, 2) node[color=black] {$\boldsymbol{\+}$};
           \draw (2.8, 2.4) node[color=black] {$\boldsymbol{\+}$};
           \draw (3.2, 2.8) node[color=black] {$\boldsymbol{\+}$};
           \draw (3.6, 3.2) node[color=black] {$\boldsymbol{\+}$};

           \draw (2, 2) node[color=black] {$\boldsymbol{\+}$};
           \draw (2.4, 2.4) node[color=black] {$\boldsymbol{\+}$};
           \draw (2.8, 2.8) node[color=black] {$\boldsymbol{\+}$};
           \draw (3.2, 3.2) node[color=black] {$\boldsymbol{\+}$};

           \draw (2, 2.4) node[color=black] {$\boldsymbol{\+}$};
           \draw (2.4, 2.8) node[color=black] {$\boldsymbol{\+}$};
           \draw (2.8, 3.2) node[color=black] {$\boldsymbol{\+}$};

           \draw (2, 2.8) node[color=black] {$\boldsymbol{\+}$};
           \draw (2.4, 3.2) node[color=black] {$\boldsymbol{\+}$};

           \draw (2, 3.2) node[color=black] {$\boldsymbol{\+}$};
      \end{tikzpicture}
    \end{minipage}
    \caption{Left: $\so(2)\oplus\so(3)$ weight diagram of the spin-$s$
      and depth $t=2$ PM field. The contribution to the sum over
      $\ell$ and $n$ in \eqref{interm_dec} are represented by: blue
      crosses $\color{blue} \times$ for $\ell=n=0$, green triangles
      $\color{mygreen} \triangle$ for $\ell=0$, $n=1$ and red circles
      $\color{red} \circ$ for $\ell=1$, $n=0$. The lowest
      energy/$\so(2)$ weight in this diagram is $s+2-t=s$ for
      $t=2$. Right: Superimposed $\so(2)\oplus\so(3)$ weight diagrams
      of the spin-$s$ and depth $\tau=1$ (blue circles $\color{blue}
      \circ$) and $\tau=2$ (black crosses $\times$ and $\+$) partially
      massless modules. The lowest energy/$\so(2)$ weight in this
      diagram is $s+2-\tau=s$ for $\tau=2$.}
    \label{fig:weight_ho_singleton}
  \end{figure}
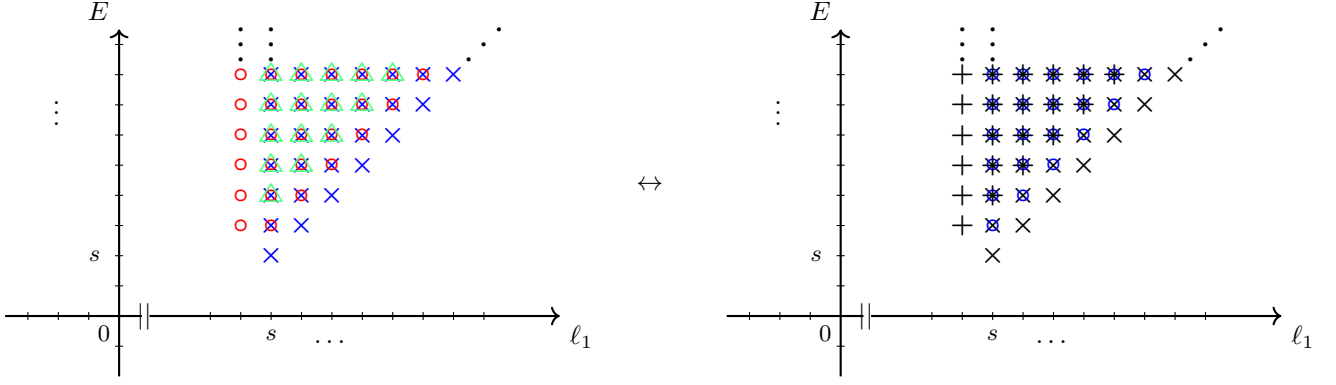

\end{proof}

\begin{remark}
  Notice that the previous
  \hyperref[prop:branching_rule_ho]{Proposition
    \ref{prop:branching_rule_ho}} encompasses the case of unitary
  higher-spin singleton, corresponding to $t=1$. The above
  decomposition reduce, in this special case $t=1$ to those previously
  derived and summed up in \hyperref[th:uni_singleton]{Theorem
    \ref{th:uni_singleton}}
\end{remark}

\paragraph{From $\so(2,d)$ to $\iso(1,d-1)$.}
Again assuming that the diagram \eqref{com_diagram} is commutative,
the branching of the spin-$s$ and depth-$t$ RPM can be obtained by
performing an In\"on\"u-Wigner contraction of the $\so(2,d-1)$
modules. Applying the BMV mechanism to a $\so(2,d-1)$ partially
massless fields of depth-$t$ and spin given by a maximal height
rectangular Young diagram yields \cite{Brink:2000ag, Boulanger:2008up,
  Boulanger:2008kw}:
\begin{equation}
  \D_{\so(2,d-1)}\big(s+d-r-t\,;\, \blb s^{r-1} \brb\big) \quad
  \flimit \quad \bigoplus_{\tau=0}^{t-1} \ \D_{\iso(1,d-1)}\big(
  m=0\,;\, \blb s^{r-2}, s-\tau \brb\big)\, .
\end{equation}
As a consequence, the branching rule of the $\so(2,d)$ spin-$s$ and
depth-$t$ RPM module onto $\iso(1,d-1)$ reads:
\begin{equation}
  \D_{\so(2,d)}\big(s+r-t\,;\, \blb s^r \brb\big) \quad \branchiso
  \quad \bigoplus_{\tau=0}^{t-1} \ (t-\tau) \ \D_{\iso(1,d-1)}\big(
  m=0 \,;\, \blb s^{r-2}, s-\tau \brb\big)\, .
  \label{branching_ho_iso}
\end{equation}

At this point, a few comments are in order. As emphasised in the first
part of this section, the crucial properties of unitary singletons is
that $(i)$ they constitute the class of representations that can be
lifted from $\so(2,d-1)$ to $\so(2,d)$, i.e. they are AdS fields that
are also conformal, and $(ii)$ they describe AdS fields which are
``confined'' to its (conformal) boundary. The first property
translates, for unitary singletons, into the fact that these
$\so(2,d)$ modules remain irreducible when restricted to $\so(2,d-1)$
-- except in the case of the scalar singleton whose branching rule
actually contains two modules. The second property is related to the
fact that the singleton modules also remain irreducible when further
contracting to the Poincar\'e algebra $\iso(1,d-1)$ (thereby
indicating that the AdS$_{d+1}$ field does not propagate degrees of
freedom in the bulk).\\

In the case of the RPM fields of spin-$s$ and depth-$t$ studied in the
present note, it seems difficult to consider them as a suitable
higher-order (i.e. non-unitary) extension of higher-spin singletons
due to the fact that their branching rule \eqref{branching_ho} shows
the appearance of $t$ modules. Indeed, the presence of multiple
modules in \eqref{branching_ho} for $t>1$ prevent us from reading this
decomposition ``backward'' (from right to left) as the property for a
{\it single} field in AdS$_d$ corresponding to a $\so(2,d-1)$ module
that can be lifted to a $\so(2,d)$ module thereby illustrating that
this AdS$_d$ field is also conformal. Notice that this is in
accordance with \cite{Metsaev:1995jp} where conformal AdS fields were
classified, and confirmed in the more recent analysis
\cite{Barnich:2015tma} where, without insisting on unitarity, the
authors were lead to rule out partially massless fields from the class
of AdS fields which can be lifted to conformal representations. On top
of that, the contraction of \eqref{branching_ho} to $\iso(1,d-1)$
given in \eqref{branching_ho_iso} produces several modules, some of
them even appearing with a multiplicity greater than one, which seems
to indicate that the ``confinement'' property of unitary singletons is
also lost when relaxing the unitarity condition in the way proposed
here (i.e. considering the modules $\D\big(s+\tfrac d2-t;\blb s^r
\brb\big)$ with $t>1$). It would nevertheless be interesting to study
a field theoretical realisation of these modules to explicitely see
how this property is lost when passing from $t=1$ to $t>1$.

%%%%%%%%%%%%%%%%%%%%%%%%%%%%%%%%%%%%%%%%%%%%%%%%%%%%%%%%%%%%%%%%%%%%%%%%%%%
\section{Flato-Fr\o{}nsdal theorem}
\label{sec:flato_fronsdal}
%%%%%%%%%%%%%%%%%%%%%%%%%%%%%%%%%%%%%%%%%%%%%%%%%%%%%%%%%%%%%%%%%%%%%%%%%%%
Let us now particularise the discussion to the $d=4$ case, where we
can take advantage of the low dimensional isomorphism $\so(4) \cong
\so(3) \oplus \so(3)$ to decompose the tensor product of two spin-$s$
and depth-$t$ RPM fields. \\

The tensor product of two higher-spin unitary singletons was
considered (in arbitrary dimensions) in \cite{Dolan:2005wy}, and reads
in the special case $d=4$:
\begin{equation}
  \D\big(s+1; \blb s, s \brb_0\big)^{\otimes 2} \cong
  \bigoplus_{\sigma=0}^{2s} \D\big(2s+2; \blb \sigma, \sigma
  \brb_0\big) \oplus \bigoplus_{\sigma=2s+1}^\infty \D\big(\sigma+2;
  \blb \sigma, 2s \brb_0\big) \oplus \bigoplus_{\sigma=2s}^\infty\,
  2\, \D\big(\sigma+2; \blb \sigma \brb\big)\, .
  \label{unitary_tensor_product}
\end{equation}
Considering singletons of fixed chirality, i.e. $\D\big(s+1; \blb s, s
\brb_\epsilon\big)$, the decomposition of their tensor product then
reads:
\begin{equation}
  \D\big(s+1; \blb s, s \brb_\epsilon\big)^{\otimes 2} \cong
  \bigoplus_{\sigma=0}^{2s} \D\big(2s+2; \blb \sigma, \sigma
  \brb_\epsilon\big) \oplus \bigoplus_{\sigma=2s+1}^\infty
  \D\big(\sigma+2; \blb \sigma, 2s \brb_\epsilon\big)\, ,
\end{equation}
i.e. it contributes to the above tensor product by producing the
infinite tower of mixed symmetry massless fields $\D\big(\sigma+2;
\blb \sigma, 2s \brb_\epsilon\big)$ and the finite tower of massive
fields $\D\big( 2s+2; \blb \sigma, \sigma \brb_\epsilon\big)$. The
tensor product of two spin-$s$ singletons of opposite chirality, on
the other hand, contribute to \eqref{unitary_tensor_product} by
producing the infinite tower of totally symmetric massless fields
$\D\big( \sigma+2; \blb \sigma \brb\big)$:
\begin{equation}
  \D\big(s+1; \blb s, s \brb_+\big) \otimes \D\big(s+1; \blb s, s
  \brb_-\big) \cong \bigoplus_{\sigma=2s}^\infty \D\big( \sigma+2;
  \blb \sigma \brb\big)\, .
\end{equation}

\begin{remark}
  The Higher-Spin algebra on which such a theory is based
  \cite{Bae:2016xmv} can be decomposed as:
  \begin{equation}
    \hs_s^{(4)} \quad \cong \quad \bigoplus_{\sigma=2s}^\infty \quad
       {\footnotesize \gyoung(_5{\sigma-1},_5{\sigma-1})} \quad \oplus
       \quad \bigoplus_{\sigma=2s+1}^\infty \quad {\footnotesize
         \gyoung(_5{\sigma-1},_5{\sigma-1},_3{2s})}\, ,
  \end{equation}
  In other words, it is composed of all the Killing tensor of the
  massless fields appearing in the decomposition of two spin-$s$
  singletons.
\end{remark}

The tensor product of two higher-order Dirac singletons was worked out
in arbitrary dimensions in \cite{Bekaert:2013zya, Basile:2014wua}, and
hereafter we give the decomposition for the tensor product of two
spin-$s$ and depth-$t$ RPM fields, considered as a possible
generalisation of those higher-order singletons, in the special case
$d=4$.

\begin{theorem}[Flato-Fr\o{}nsdal theorem for rectangular partially massless fields]
  \label{th:flato_fronsdal}
  The tensor product of two $\so(2,4)$ rectangular partially massless
  fields of spin-$s$ and depth-$t$ decomposes as:
  \begin{itemize}
  \item If they are of the same chirality $\epsilon$:
    \begin{eqnarray}
      \D\big(s+2-t; \blb s, s \brb_\epsilon\big)^{\otimes 2} & \cong &
      \bigoplus_{\tau=1}^{t} \bigoplus_{m=0}^{2(\tau-1)}
      \bigoplus_{n=0}^{\nu_m^\tau}
      \bigoplus_{\sigma=2s+2\tau-1-m-n}^\infty
      \D\big(\sigma+4-2\tau+m; \blb \sigma, 2s-m \brb_\epsilon\big)
      \nonumber \\ && \quad \oplus \bigoplus_{\tau=1}^{t}
      \bigoplus_{m=0}^{2(\tau-1)} \bigoplus_{n=0}^{\nu_m^\tau}
      \bigoplus_{k=-\mu_{m,n}^\tau}^{2s-n} \D\big(2s+4-2\tau+m; \blb
      k+m, k \brb_\epsilon\big)\, ,
      \label{same_chirality}
    \end{eqnarray}
    where $\nu_m^\tau := \min(m,2(\tau-1)-m)$ and 
    \begin{equation}
      \mu_{m,n}^\tau := \left\{
      \begin{aligned}
        \min(n, \nu_m^\tau-n) \qquad \, , & \quad \text{if} \quad m <
        \tau\, , \\ m-\tau+1+\min(n,\nu_m^\tau-n)\, , & \quad
        \text{if} \quad m\geqslant \tau\, .
      \end{aligned}
      \right.
    \end{equation}
  \item If they are of opposite chirality:
    \begin{equation}
      \D\big(s+2-t; \blb s, s \brb_+\big) \otimes \D\big(s+2-t; \blb
      s, s \brb_-\big) \cong \bigoplus_{\tau=1}^t
      \bigoplus_{m=0}^{t-\tau} \bigoplus_{n=0}^{t-\tau-m}
      \bigoplus_{\sigma=2s-m}^\infty \D\big(\sigma+4-2\tau-n; \blb
      \sigma, n \brb_0\big)\, .
      \label{opposite_chirality}
    \end{equation}
    Notice that in the above decomposition \eqref{opposite_chirality}
    of two singletons of opposite chirality, the irreps describing
    {\rm totally symmetric} partially massless fields, i.e. of spin
    given by a single row Young diagram, only appear once despite what
    the notation $\blb \sigma \brb_0$ would normally suggests.
  \end{itemize}
\end{theorem}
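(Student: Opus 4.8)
The plan is to prove the decomposition by the character method used already in the proofs of Propositions \ref{prop:decompo_ho} and \ref{prop:branching_rule_ho}, exploiting that characters are multiplicative under tensor products together with the accidental isomorphism $\so(4)\cong\so(3)\oplus\so(3)$, which renders all $\so(4)$ tensor products elementary. First I would feed in the $\so(2)\oplus\so(4)$ decomposition of a single chiral RPM module supplied by Proposition \ref{prop:decompo_ho}: for $d=4$ and definite chirality $\epsilon$,
\begin{equation}
  \D\big(s+2-t; \blb s,s\brb_\epsilon\big) \cong \bigoplus_{\ell=0}^{t-1}\bigoplus_{n=0}^{t-1-\ell}\bigoplus_{\sigma=\ell}^\infty \D_{\so(2)\oplus\so(4)}\big(s+2-t+\sigma+2n; \blb s+\sigma-\ell, s-\ell\brb_\epsilon\big)\, .
\end{equation}
Writing each $\so(4)$ weight $\blb a,b\brb_+$ as the pair of $\so(3)\equiv\mathfrak{su}(2)$ spins $\big(\tfrac{a+b}2,\tfrac{a-b}2\big)$ (with the two entries exchanged for $\epsilon=-$), the associated $\so(4)$ character factorises into a product of two $\so(3)$ characters.

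Second, I would multiply the two characters. Since the $\so(2)$ weights simply add, the energy of each product term is the sum of the two summand energies, while each $\so(3)$ factor is handled by the Clebsch--Gordan rule $\chi_{j}\,\chi_{j'}=\sum_{J=|j-j'|}^{j+j'}\chi_J$. For the equal-chirality product the ``left'' and ``right'' $\so(3)$ factors combine independently; for the opposite-chirality product the left factor of one module instead pairs with the right factor of the other. Collecting the resulting $\so(3)\oplus\so(3)$ content at each energy and re-expressing the pair $(J_L,J_R)$ as an $\so(4)$ weight $\blb J_L+J_R, J_L-J_R\brb$ (the sign of $J_L-J_R$ fixing the output chirality) yields the complete $\so(2)\oplus\so(4)$ content of the tensor product.

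Third, I would recognise the towers. An infinite string of $\so(2)\oplus\so(4)$ modules whose energy grows in the patterns dictated by Proposition \ref{prop:decompo_ho} is precisely the $\so(2)\oplus\so(4)$ decomposition of a single $\so(2,4)$ module, be it massless, massive, or rectangular/totally symmetric partially massless. Matching these signatures term by term produces the two towers of \eqref{same_chirality} (the mixed-symmetry PM fields $\blb\sigma,2s-m\brb_\epsilon$ and the ``massive'' blocks $\blb k+m,k\brb_\epsilon$), and for the opposite-chirality case the single tower \eqref{opposite_chirality}; the statement that a totally symmetric PM field is counted only once is just the fact that $\blb\sigma,0\brb_+$ and $\blb\sigma,0\brb_-$ denote the same $\so(4)$ irrep.

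The hard part will be the combinatorial bookkeeping producing the summation bounds $\nu_m^\tau=\min(m,2(\tau-1)-m)$ and the piecewise $\mu_{m,n}^\tau$. These arise from the overlap of the several Clebsch--Gordan ranges once the two triple sums (over $\ell,n,\sigma$ for each factor) are multiplied and then re-indexed by the effective depth $\tau$ of the output field together with the horizontal and vertical shifts $m,n$ of its Young diagram. The cleanest route is to carry out the computation explicitly for small $t$ (checking that $t=1$ reproduces \eqref{unitary_tensor_product}, then treating $t=2$) to fix the pattern, and then verify the general bounds by confirming that both sides carry identical characters. Since such an equality is an identity of formal power series in $q$ and the two $\so(3)$ variables, it certifies the stated isomorphism at the level of composition factors, which is what the decomposition into these Verma quotients asserts.
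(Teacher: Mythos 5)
Your proposal follows essentially the same route as the paper: both proofs multiply characters, use the exceptional isomorphism $\so(4)\cong\so(3)\oplus\so(3)$ to evaluate the $\so(4)$ tensor products, and then recognise the result as a sum of characters of two-row partially massless and massive $\so(2,4)$ modules (the paper likewise suppresses the combinatorial bookkeeping ``for the sake of conciseness''). The only tactical difference is that the paper multiplies the closed-form character \eqref{exp1} by the decomposed form \eqref{exp2}, which keeps the universal factor $\Pd{4}(q,\vec x)$ manifest so that the output is immediately organised as $q^\Delta\big(\chi^{\so(4)}_{(\sigma,n)}-q^\tau\chi^{\so(4)}_{(\sigma-\tau,n)}\big)\Pd{4}$ plus $q^\Delta\chi^{\so(4)}_{(k,l)}\Pd{4}$ terms, whereas your symmetric use of the decomposed form for both factors costs you an extra resummation step: you must separately know the $\so(2)\oplus\so(4)$ decomposition of each target module in order to reassemble the infinite towers into $\so(2,4)$ characters.
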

\begin{proof}
  In order to prove the above decomposition, we will use the two
  expressions of the character of a spin-$s$ and depth-$t$ RPM:
  \begin{eqnarray}
    \chi^{\so(2,4)}_{[s+2-t;(s,s)]}(q, \vec x) & = & q^{s+2-t} \Big(
    \chi^{\so(4)}_{(s,s)}(\vec x) - q^t \chi^{\so(4)}_{(s,s-t)}(\vec
    x) + q^{t+1} \chi^{\so(4)}_{(s-1,s-t)}(\vec x) \Big) \Pd 4 (q,
    \vec x) \label{exp1} \\ & = & \sum_{\ell=0}^{t-1}
    \sum_{n=0}^{t-1-\ell} \sum_{\sigma=\ell}^\infty q^{s+2-t+\sigma+2n}
    \chi^{\so(4)}_{(s-\ell+\sigma,s-\ell)}(\vec x)\, , \label{exp2}
  \end{eqnarray}
  and will decompose their product as the sum of the characters of the
  different modules appearing in \eqref{same_chirality} and
  \eqref{opposite_chirality}. To do so, the idea is simply to look at
  the product of \eqref{exp1} and \eqref{exp2}, decompose the tensor
  product of the $\so(4)$ characters, and finally recognize the
  resulting expression as a sum of characters of:
  \begin{itemize}
  \item Partially massless fields of depth-$\tau$ and spin given by a
    two-row Young diagram $\blb \sigma, n \brb$ which read:
    \begin{equation}
      \chi^{\so(2,4)}_{[\sigma+3-\tau;(\sigma,n)]}(q, \vec x) =
      q^{\sigma+3-\tau} \Pd 4 (q, \vec x) \Big(
      \chi^{\so(4)}_{(\sigma,n)}(\vec x) - q^\tau
      \chi^{\so(4)}_{(\sigma-\tau, n)}(\vec x) \Big)\, ,
    \end{equation}
  \item Massive fields of minimal energy $\Delta$ and spin given a
    two-row Young diagram $\blb k, l \brb$ which read:
    \begin{equation}
      \chi^{\so(2,4)}_{[\Delta;(k,l)]}(q, \vec x) = q^\Delta \Pd 4 (q,
      \vec x) \chi^{\so(4)}_{(k,l)}(\vec x)\, .
    \end{equation}
  \end{itemize}
  We will not display here the full computations for the sake of
  conciseness.
\end{proof}

%%%%%%%%%%%%%%%%%%%%
\section{Conclusion}
\label{sec:discu}
%%%%%%%%%%%%%%%%%%%%
In this note, we considered a class of non-unitary $\so(2,d)$ modules
(for $d=2r$) parametrised by an integer $t$, as possible extensions of
the higher-spin singletons. These $\so(2,d)$ modules describe
partially massless fields of spin $\blb s^r \brb$ and depth-$t$, and
restrict (for $d=2r$) to a sum of partially massless $\so(2,d-1)$
modules of spin $\blb s^{r-1} \brb$ and depth $\tau=1, \dots, t$,
thereby naturally generalising the case of unitary singletons
corresponding to $t=1$. Due to the fact that the branching rule
\eqref{branching_ho} shows that these modules cannot be considered as
AdS$_d$ field preserved by {\it conformal} symmetries, and that the
branching rule \eqref{branching_ho_iso} onto $\iso(1,d-1)$ (deduced
from \eqref{branching_ho} after a In\"on\"u-Wigner contraction) seems
to indicate that those fields are not ``confined'' to the boundary of
AdS, the family of $\so(2,d)$ module $\D\big(s+\tfrac d2-t; \blb s^r
\brb\big)$ does not appear to share the defining properties of
singletons for $t>1$. \\

The decomposition of their tensor product in the low-dimensional
$\so(2,4)$ case contains partially massless fields of the same type
than in the unitary ($t=1$) case, i.e. fields of spin $\blb \sigma
\brb$ with $\sigma\geqslant 2s$ and spin $\blb \sigma, 2s \brb$ with
$\sigma\geqslant 2s+1$, as could be expected from comparison with what
happens for the $\rac_\ell$ and $\di_\ell$ singletons. However, for
$t>1$, partially massless fields with a different spin also appear,
namely of the type $\blb \sigma, n \brb$ with $n$ either taking the
values $1,2, \dots, t-1$ or $2(s-t+1), \dots, 2s$. It is also worth
noticing that only for $s=1$ the decomposition in
\hyperref[th:flato_fronsdal]{Theorem \ref{th:flato_fronsdal}} contains
a conserved spin-$2$ current (i.e. the module $\D\big( 4\,;\, \blb 2
\brb\big)$). It would be interesting to extend this tensor product
decomposition to arbitrary dimensions.

%%%%%%%%%%%%%%%%%%%%%%%%%%
\section*{Acknowledgments}
%%%%%%%%%%%%%%%%%%%%%%%%%%

I am grateful to Xavier Bekaert and Nicolas Boulanger for suggesting
this work in the first place, as well as for various discussions on
the properties of higher-spin singletons and their comments on a
previous version of this paper. I am also grateful for the insightful
comments of an anonymous referee. This work was supported by a joint
grant ``50/50'' Universit\'e Fran\c{c}ois Rabelais Tours -- R\'egion
Centre / UMONS.

\appendix
%%%%%%%%%%%%%%%%%%%%%%%%%%%%%%%%%%%%%%%%%%%%%%%%%%%%%%%%%
\section{Branching rules and tensor products of $\so(d)$}
\label{app:sod}
%%%%%%%%%%%%%%%%%%%%%%%%%%%%%%%%%%%%%%%%%%%%%%%%%%%%%%%%%
In this appendix we recall the branching and tensor product rules for
$\so(d)$ irreps, as well as detail the proofs of the branching rules
\eqref{branching_uni} and \eqref{branching_ho} and the decomposition
\eqref{decompo_ho_hs_sing}.
%****************************************
\subsection{Branching rules for $\so(d)$}
\label{app:branching}
%****************************************
For $d=2r+1$, the $\so(d)$ irrep $\blb s_1, \dots, s_r \brb$ branches
onto $\so(d-1)$ as:
\begin{equation}
  \blb s_1, \dots, s_r \brb \quad \branching \quad
  \bigoplus_{t_1=s_2}^{s_1} \dots \bigoplus_{t_{r-1}=s_r}^{s_{r-1}}
  \bigoplus_{t_r=-s_r}^{s_r} \blb t_1, \dots, t_r \brb\, ,
\end{equation}
whereas for $d=2r$, the branching rule reads:
\begin{equation}
  \blb s_1, \dots, s_r \brb \quad \branching \quad
  \bigoplus_{t_1=s_2}^{s_1} \dots \bigoplus_{t_{r-1}=s_r}^{s_{r-1}}
  \blb t_1, \dots, t_{r-1} \brb\, .
\end{equation}

%**********************************************
\subsection{Computing $\so(d)$ tensor products}
\label{app:tensor_sod}
%**********************************************
\begin{minipage}[l]{0.7\textwidth}
  In order to prove the decomposition \eqref{decompo_hs_sing} and
  \eqref{decompo_ho_hs_sing}, as well as the similar decomposition for
  (partially) massless fields with spin given by a rectangular Young
  diagram of arbitrary length $s$, we first need to know know how to
  decompose the tensor product of two $\so(d)$ Young diagrams, one of
  which being a single row of arbitrary length and the other one being
  an ``almost'' rectangular diagram, i.e. of the form:
\end{minipage}
\hspace{-30pt}
\begin{minipage}[r]{0.25\textwidth}
  \vspace{-10pt}
  \begin{tikzpicture}
    \put(65,-15){\Large $\uparrow$}
    \put(58,-32){$r-1$}
    \put(65,-55){\Large $\downarrow$}
    
    \put(90,-30){\framebox(60,25){$\leftarrow \,\,\, s \,\,\, \rightarrow$}}
    \put(90,-51.5){\framebox(55,20){$\leftarrow \, s-1 \, \rightarrow$}}
    \put(90,-58.5){\framebox(35,6){\footnotesize $s-t$}}
  \end{tikzpicture}
  \label{fig:a}
  \vspace{60pt}
\end{minipage}

\vspace{20pt} 

To do so, it is convenient to express the tensor product rule for
$\so(d)$ in terms of that of $\gl(d)$ which is considerably
simpler. The rule for decomposing a tensor product of two $\gl(d)$
irreps labelled by two Young diagrams $\blambda := \blb \lambda_1,
\dots, \lambda_d \brb$ and $\bmu = \blb \mu_1, \dots, \mu_d \brb$,
known as the Littlewood-Richardson rule, goes as follows (see
e.g. \cite{Bekaert:2006py}):
\begin{itemize}
\item First, assign to the boxes of each rows of one of the Young
  diagrams (say $\bmu$) a label which keeps track of the order of the
  rows (for instance, if the labels are letters of the alphabet, then
  each of the $\mu_1$ boxes of the first row of $\bmu$ are assigned
  the label ``a'', each of the $\mu_2$ boxes of the second row of
  $\bmu$ are assigned the label ``b'', etc\dots);
\item Then, glue the boxes of $\bmu$ to $\blambda$ in all possible
  ways such that the resulting diagram obey the following constraints:
  \begin{itemize}
  \item Boxes in the same column should {\it not} have the same label;
  \item When reading the row of the obtained Young diagram from right
    to left, and its columns from top to bottom, the number of boxes
    encountered should be decreasing with their label (i.e. less boxes
    of the second label are encountered than with the first label,
    less with the third than the second, etc\dots);
  \item The resulting diagram should always be a legitimate Young
    diagram, i.e. the length of the rows is decreasing from top to
    bottom, and it is composed of at most $d$ rows.
  \end{itemize}
\end{itemize}

For orthogonal algebras $\so(2r+1)$, the tensor product of two irreps,
$\Bell = \blb \ell_1, \dots, \ell_r \brb$ and $\blb \sigma \brb$ can
be computed as follows:
\begin{itemize}
\item[(i)] Branch each of the two Young diagrams into $\so(2r)$ and
  pair them by number of boxes removed from the original ones until
  the products of one of these diagrams are exhausted;
\item[(ii)] Compute the tensor product between these pairs of diagrams
  using the Littlewood-Richardson rule recalled above;
\item[(iii)] Discard the Young diagrams which are not acceptable for
  $\so(d)$, i.e. those for which the sum of the height of their first
  two columns is {\it stricly} greater than $d$.
\end{itemize}
The tensor product $\Bell \otimes \blb \sigma \brb$ can therefore be
represented as follows:
\begin{equation}
  \Bell \otimes \blb \sigma \brb = \bigoplus_{m=0}^{\min(\sigma,
    \ell_1-\ell_r)} \bigoplus_{n_1=0}^{\ell_1-\ell_2}
  \bigoplus_{n_2=0}^{\ell_2-\ell_3} \dots \bigoplus_{n_r=0}^{\ell_r}
  \blb \ell_1-n_1, \dots, \ell_r-n_r \brb \ogl \blb \sigma-m \brb
  \Bigg|_{\so(d) \ \mathrm{OK}}\, .
\end{equation}

\begin{example}
  Consider the tensor product between the $\so(5)$ irreps
  $\Yboxdim{5pt} \gyoung(;;,;;)$ and $\Yboxdim{5pt} \gyoung(;;)$. The
  branching rule for these representations is:
  \begin{equation}
    \gyoung(;;,;;) \quad \branchingmod{5}{4} \quad \gyoung(;;,;;)
    \quad \oplus \quad \gyoung(;;,;) \quad \oplus \quad \gyoung(;;)\,
    , \quad \text{and} \quad \gyoung(;;) \quad \branchingmod{5}{4}
    \quad \gyoung(;;) \quad \oplus \quad \gyoung(;) \quad \oplus \quad
    \bullet\, .
  \end{equation}
  Now computing the tensor products between those product paired by
  number of boxes removed, and using the Littlewood-Richardson rule
  yields:
  \begin{eqnarray}
    \gyoung(;;,;;) \otimes_{\gl} \gyoung(;;) & = & \gyoung(;;;;,;;)
    \quad \oplus \quad \gyoung(;;;,;;,;) \quad \oplus \quad
    \gyoung(;;,;;,;;) \\ \gyoung(;;,;) \otimes_{\gl}
    \gyoung(;) \hspace{8pt} & = & \gyoung(;;;,;) \quad \oplus \quad
    \gyoung(;;,;;) \quad \oplus \quad \gyoung(;;,;,;) \\ \gyoung(;;)
    \otimes_{\gl} \bullet \hspace{12pt} & = & \gyoung(;;)\, .
  \end{eqnarray}
  Among the diagrams obtained above, one is not a legitimate $\so(5)$
  Young diagram (as the sum of the height of its two first columns is
  greater than $5$), namely $\Yboxdim{5pt} \gyoung(;;,;;,;;)$. Using
  the fact that two $\so(d)$ Young diagrams whose first column is of
  height $c$ and $d-c$ are equivalent, the initial tensor product
  finally reads:
  \begin{equation}
    \gyoung(;;,;;) \otimes \gyoung(;;) \quad = \quad \gyoung(;;;;,;;)
    \quad \oplus \quad \gyoung(;;;,;;) \quad \oplus \quad
    \gyoung(;;,;;) \quad \oplus \quad \gyoung(;;;,;) \quad \oplus
    \quad \gyoung(;;,;) \quad \oplus \quad \gyoung(;;)\, .
  \end{equation}
\end{example}

Notice that the well-known tensor product rule for $\so(3)$ can be
recovered from the above algorithm. Given two $\so(3)$ irreps $\blb s
\brb$ and $\blb s' \brb$, i.e. two one-row Young diagram of respective
length $s$ and $s'$, their tensor product decomposes as:
\begin{equation}
  \blb s \brb \otimes \blb s' \brb = \bigoplus_{m=0}^{\min(s,s')} \blb
  s-m \brb \otimes_{\gl(3)} \blb s'-m \brb \Big|_{\so(3) \ \text{OK}}
  = \bigoplus_{m=0}^{\min(s,s')} \bigoplus_{k=0}^{\min(s'-m,s-m)} \blb
  s+s'-2m-k, k \brb \Big|_{\so(3) \ \text{OK}} \, .
\end{equation}
The only Young diagrams that are $\so(3)$ acceptable in the previous
equation are those for which $k=0$ or $1$, i.e. the second row
contains no more than one box. In the latter case, such a Young
diagram is equivalent to the one where the second row is absent,
i.e. $\blb s, 1 \brb \cong \blb s \brb$. As a consequence, the
decomposition reads:
\begin{equation}
  \blb s \brb \otimes \blb s' \brb = \bigoplus_{m=0}^{2\min(s,s')}
  \blb s+s'-m \brb = \bigoplus_{k=\rvert s-s' \rvert}^{s+s'} \blb k
  \brb\, ,
\end{equation}
which is indeed the tensor product rule for $\so(3)$.

%%%%%%%%%%%%%%%%%%%%%%%%%%
\section{Technical proofs}
\label{app:technical}
%%%%%%%%%%%%%%%%%%%%%%%%%%
%*****************************************************************
\subsection{Proof of the branching rule for unitary HS singletons}
\label{app:unitary_hs_singleton}
%*****************************************************************
From now on, we will set $d=2r$. In order to prove the branching rule:
\begin{equation}
  \D\big( s+r-1\,;\, \blb s^r \brb\big) \quad \branch \quad \D\big(
  s+r-1\,;\, \blb s^{r-1} \brb\big)\, ,
\end{equation}
we will need to derive the $\so(2)\oplus\so(d-1)$ decomposition of the
$\so(2,d)$ singleton module $\D\big( s+r-1\,;\, \blb s^r \brb\big)$
and of the $\so(2,d-1)$ spin $\blb s^{r-1} \brb$ massless field module
$\D\big( s+r-1\,;\, \blb s^{r-1} \brb\big)$.

\paragraph{Decomposition of massless modules.}
To obtain the $\so(2) \oplus \so(d-1)$ decomposition of the
$\so(2,d-1)$ spin $\blb s^{r-1} \brb$ massless field module $\D\big(
s+r-1\,;\, \blb s^{r-1} \brb\big)$, we will use its character and
rewrite it as a sum of $\so(2)\oplus\so(d-1)$ characters. Using the
property \eqref{p_function} of the function $\Pd{d-1}(q, \vec x)$, the
character of this module becomes:
\begin{eqnarray}
  \chi^{\so(2,d-1)}_{[s+r-1; (s^{r-1})]}(q, \vec x) & = & q^{s+r-1}\,
  \Pd{d-1}(q, \vec x) \Big( \chi^{\so(d-1)}_{(s^{r-1})}(\vec x) +
  \sum_{k=1}^{r-1} (-1)^{k} q^{k} \chi^{\so(d-1)}_{(s^{r-1-k},
    (s-1)^k)}(\vec x) \Big) \label{character_massless_s} \\ & = &
  \sum_{\sigma,n=0}^\infty q^{s+r-1+\sigma+2n}\,
  \chi^{\so(d-1)}_{(\sigma)}(\vec x) \sum_{k=0}^{r-1} (-q)^{k}
  \chi^{\so(d-1)}_{(s^{r-1-k}, (s-1)^k)}(\vec x) \, . \nonumber
\end{eqnarray}
Now we can use the tensor product rule recalled previously for
$\so(d-1)$ with $d-1=2r-1$ odd to reduce the above expression. It
turns out that most of the terms in its alternating sum cancel one
another. To see that, let us have a look at three consecutive terms in
the above sum, that we will denote by ``RHS''. In order to make the
expression more readable, we will also write $\Bell = \blb \ell_1,
\dots, \ell_r \brb$ for the character of the $\so(d-1)$ representation
$\Bell$. A typical triplet of terms in the alternating sum composing
the character \eqref{character_massless_s} reads:
\begin{eqnarray}
  \mathrm{RHS} & = & \sum_{\sigma=0}^\infty \Big( q^{\sigma+k} \blb
  s^{r-k-1}, (s-1)^k \brb - q^{\sigma+k+1} \blb s^{r-k-2}, (s-1)^{k+1}
  \brb + q^{\sigma+k+2} \blb s^{r-k-3}, (s-1)^{k+2} \brb \Big) \otimes
  \blb \sigma \brb \nonumber \\ & = & \sum_{\sigma=0}^\infty
  q^{\sigma+k} \sum_{m=0}^{\min(\sigma,s-1)} \blb s^{r-k-1},
  (s-1)^{k-1}, s-1-m \brb \otimes_{\gl(d)} \blb \sigma-m
  \brb \label{1} \\ && \qquad \qquad \qquad + \sum_{\sigma=1}^\infty
  q^{\sigma+k} \sum_{m=0}^{\min(\sigma-1,s-1)} \blb s^{r-k-2},
  (s-1)^{k}, s-1-m \brb \otimes_{\gl(d)} \blb \sigma-1-m
  \brb \label{2} \\ && -\sum_{\sigma=0}^\infty q^{\sigma+k+1}
  \sum_{m=0}^{\min(\sigma,s-1)} \blb s^{r-k-2}, (s-1)^{k}, s-1-m \brb
  \otimes_{\gl(d)} \blb \sigma-m \brb \label{3} \\ && \qquad \qquad
  \qquad - \sum_{\sigma=1}^\infty q^{\sigma+k+1}
  \sum_{m=0}^{\min(\sigma-1,s-1)} \blb s^{r-k-3}, (s-1)^{k+1}, s-1-m
  \brb \otimes_{\gl(d)} \blb \sigma-1-m \brb \label{4} \\ && +
  \sum_{\sigma=0}^\infty q^{\sigma+k+2} \sum_{m=0}^{\min(\sigma,s-1)}
  \blb s^{r-k-3}, (s-1)^{k+1}, s-1-m \brb \otimes_{\gl(d)} \blb
  \sigma-m \brb \label{5} \\ && \qquad \qquad \qquad +
  \sum_{\sigma=1}^\infty q^{\sigma+k+2}
  \sum_{m=0}^{\min(\sigma-1,s-1)} \blb s^{r-k-4}, (s-1)^{k+2}, s-1-m
  \brb \otimes_{\gl(d)} \blb \sigma-1-m \brb\, . \label{6}
\end{eqnarray}
The second term (corresponding to the fourth and fifth line,
i.e. \eqref{3} and \eqref{4} above) can be rewritten as:
\begin{eqnarray}
  \text{Second term} & = & -\sum_{\sigma=1}^\infty q^{\sigma+k}
  \sum_{m=1}^{\min(\sigma,s)} \blb s^{r-k-2}, (s-1)^{k}, s-m \brb
  \otimes_{\gl(d)} \blb \sigma-m \brb \\ && \qquad -
  \sum_{\sigma=1}^\infty q^{\sigma+k+1} \sum_{m=1}^{\min(\sigma,s)}
  \blb s^{r-k-3}, (s-1)^{k+1}, s-m \brb \otimes_{\gl(d)} \blb \sigma-m
  \brb\, .
\end{eqnarray}
Both of these terms are compensated, the first line above by
rewritting \eqref{2} as:
\begin{equation}
  \eqref{2} = + \sum_{\sigma=1}^\infty q^{\sigma+k}
  \sum_{m=1}^{\min(\sigma,s)} \blb s^{r-k-2}, (s-1)^{k}, s-m \brb
  \otimes_{\gl(d)} \blb \sigma-m \brb\, ,
\end{equation}
and the second line by rewritting \eqref{5} as:
\begin{equation}
  \eqref{5} = + \sum_{\sigma=1}^\infty q^{\sigma+k+1}
  \sum_{m=1}^{\min(\sigma,s)} \blb s^{r-k-3}, (s-1)^{k+1}, s-m \brb
  \otimes_{\gl(d)} \blb \sigma-m \brb\, .
\end{equation}
Hence it appears that the second term in this triplet is completely
compensated, in part by the first terms and in part by the third
one. Because this succession of three terms repeats itself in the
alternating sum of the character \eqref{character_massless_s}, and
that the last term is completely cancelled by the previous one, only
the part of the first term that is not compensated by the second one
remains:
\begin{eqnarray}
  \text{First two terms} & = & \sum_{\sigma=0}^\infty q^{\sigma} \blb
  s^{r-1} \brb \otimes \blb \sigma \brb - \sum_{\sigma=0}^\infty
  q^{\sigma+1} \blb s^{r-2}, s-1 \brb \otimes \blb \sigma \brb \\ & =
  & \sum_{\sigma=0}^\infty q^{\sigma} \sum_{m=0}^{\min(\sigma,s)} \blb
  s^{r-2},s-m \brb \otimes_{\gl(d)} \blb \sigma-m \brb \\ && \qquad -
  \sum_{\sigma=0}^\infty q^{\sigma+1} \sum_{m=0}^{\min(\sigma,s-1)}
  \blb s^{r-2}, s-1-m \brb \otimes_{\gl(d)} \blb \sigma-m \brb
  \nonumber \\ && \qquad \qquad - \quad \text{term that will be
    compensated} \\ & = & \sum_{\sigma=0}^\infty q^{\sigma}
  \sum_{m=0}^{\min(\sigma,s)} \blb s^{r-2},s-m \brb \ogl \blb \sigma-m
  \brb \\ && \qquad - \sum_{\sigma=1}^\infty q^{\sigma}
  \sum_{m=1}^{\min(\sigma,s)} \blb s^{r-2}, s-m \brb \ogl \blb
  \sigma-m \brb \qquad \quad \\ & = & \sum_{\sigma=0}^\infty
  q^{\sigma} \blb s^{r-1} \brb \ogl \blb \sigma \brb =
  \sum_{\sigma=0}^\infty q^\sigma \blb s+\sigma, s^{r-2} \brb \oplus
  \sum_{\sigma=1}^\infty q^\sigma \blb s+\sigma-1, s^{r-2} \brb
  \label{tpgl} \\ & = & \sum_{\sigma=0}^\infty (1+q)\, q^\sigma \blb
    s+\sigma, s^{r-2} \brb\, .
\end{eqnarray}
Notice that only two terms in the $\gl(d-1)$ tensor product survived in
\eqref{tpgl}. Indeed, in full generality, it would produce:
\begin{equation}
  \blb s^{r-1} \brb \ogl \blb \sigma \brb =
  \bigoplus_{k=0}^{\min(s,\sigma)} \blb s+\sigma-k, s^{r-2}, k \brb\,
  ,
\end{equation}
however, only the first two terms are $\so(d-1)$ acceptable Young
diagrams, as they are the only ones for which the sum of the heights
of their first two columns is lower than $d-1$. The character of the
$\so(2,d-1)$ module $\D\big( s+r-1\,;\, \blb s^{r-1} \brb \big)$
therefore reads:
\begin{equation}
  \chi^{\so(2,d-1)}_{[s+r-1;(s^{r-1})]}(q, \vec x) = \sum_{n=0}^\infty
  \sum_{\sigma=0}^\infty q^{s+r-1+2n+\sigma}\, (1+q)\,
  \chi^{\so(d-1)}_{(s+\sigma,s^{r-2})}(\vec x) = \sum_{n=0}^\infty
  \sum_{\sigma=0}^\infty q^{s+r-1+n+\sigma}\,
  \chi^{\so(d-1)}_{(s+\sigma,s^{r-2})}(\vec x)\, ,
\end{equation}
which proves that the $\so(2,d-1)$ module of a spin $\blb s^{r-1}
\brb$ massless field reads:
\begin{equation}
  \D\big( s+r-1; \blb s^{r-1} \brb \big) \quad \cong \quad
  \bigoplus_{\sigma,n=0}^\infty \D_{\so(2) \oplus \so(d-1)}\big(
  s+r-1+\sigma+n; \blb s+\sigma, s^{r-2} \brb \big)\, .
\end{equation}

\paragraph{Decomposition of the singleton module.}
The spin-$s$ singleton module can be decomposed as an infinite direct
sum of (finite-dimensional) $\so(2) \oplus \so(d)$ modules as
\cite{Dolan:2005wy}:
\begin{equation}
  \D\big(s+r-1;(s^r)\big) \cong \bigoplus_{\sigma=0}^\infty \D_{\so(2)
    \oplus \so(d)} \big( s+r-1+\sigma; (s+\sigma,s^{r-1}) \big)\, .
\end{equation}
In order to branch this module from $\so(2,d)$ to $\so(2,d-1)$, one
can simply branch the $\so(d)$ part of the above decomposition onto
$\so(d-1)$, thereby yielding:
\begin{eqnarray}
  \D\big(s+r-1;(s^r)\big) & \branch & \bigoplus_{\sigma=0}^\infty
  \bigoplus_{\tau=0}^\sigma \D_{\so(2) \oplus \so(d-1)} \big(
  s+r-1+\sigma; (s+\tau,s^{r-2}) \big) \\ & \cong &
  \bigoplus_{\sigma=0}^\infty \bigoplus_{n=0}^\infty \D_{\so(2) \oplus
    \so(d-1)} \big( s+r-1+\sigma+n; (s+\sigma,s^{r-2}) \big) \, .
\end{eqnarray}
This $\so(2)\oplus\so(d-1)$ decomposition matches that of the
$\so(2,d-1)$ module of a spin $\blb s^{r-1} \brb$ massless field,
hence we proved:
\begin{equation}
  \D\big( s+r-1\,;\, \blb s^r \brb \big) \quad \branch \quad \D\big(
  s+r-1\,;\, \blb s^{r-1} \brb \big)\, .
\end{equation}

%*********************************************************************
\subsection{Proof of the branching rule for rectangular partially massless fields}
\label{app:nonunitary_hs_singleton}
%*********************************************************************
Following the same steps as in the previous section, we will now
proceed to proving the branching rule:
\begin{equation}
  \D\big( s+r-t \,;\, \blb s^r \brb \big) \quad \branch \quad
  \bigoplus_{\tau=1}^t \D\big( s+r-\tau \,;\, \blb s^{r-1} \brb
  \big)\, .
\end{equation}

\paragraph{Decomposition of the partially massless modules for $d=2r+1$.}
Let us start with the $\so(2)\oplus\so(d-1)$ decomposition of the
$\so(2,d-1)$ module $\D\big( s+r-\tau; \blb s^{r-1} \brb\big)$
corresponding to a partially massless field of spin $\blb s^{r-1}
\brb$ and depth-$\tau$. Using the same notational shortcuts than in
the previous section, its character can be rewritten as:
\begin{eqnarray}
  \chi^{\so(2,d-1)}_{[s+r-\tau;(s^{r-1})]}(q, \vec x) & = &
  \sum_{\sigma,n=0}^\infty q^{s+r-\tau+2n+\sigma} \blb \sigma \brb
  \otimes \Big( \blb s^{r-1} \brb + \sum_{k=0}^{r-2} (-1)^{k+1}
  q^{\tau+k} \blb s^{r-2-k}, (s-1)^k, s-\tau \brb \Big) \\ & = &
  \sum_{n=0}^\infty q^{s+r-\tau+2n} \Big( \sum_{\sigma=0}^\infty
  q^\sigma \sum_{m=0}^{\min(s,\sigma)} \blb \sigma-m \brb \ogl \blb
  s^{r-2}, s-m \brb \\ && \qquad - \sum_{p=0}^\tau
  \sum_{\sigma=p}^\infty q^{\sigma+\tau}
  \sum_{m=0}^{\min(s-\tau,\sigma-p)} \blb \sigma-p-m \brb \ogl \blb
  s^{r-3}, s-p,s-\tau-m \brb \nonumber \\ && + \sum_{k=1}^{r-2}
  (-1)^{k+1} \sum_{p=0}^{\tau-1} \sum_{\sigma=p}^\infty
  q^{\sigma+\tau+k} \times \\ && \qquad \qquad
  \sum_{m=0}^{\min(s-\tau,\sigma-p)} \blb \sigma-p-m \brb \ogl \blb
  s^{r-2-k}, (s-1)^{k-1}, s-1-p, s-\tau-m \brb \nonumber \\ && +
  \sum_{k=1}^{r-3} (-1)^{k+1} \sum_{p=0}^{\tau-1}
  \sum_{\sigma=p+1}^\infty q^{\sigma+\tau+k} \times \\ && \qquad
  \qquad \sum_{m=0}^{\min(s-\tau,\sigma-p-1)} \blb \sigma-p-1-m \brb
  \ogl \blb s^{r-3-k}, (s-1)^{k}, s-1-p, s-\tau-m \brb \Big) \nonumber
  \\ & = & \sum_{n=0}^\infty q^{s+r-\tau+2n} \Big(
  \sum_{\sigma=0}^\infty q^\sigma \sum_{m=0}^{\min(s,\sigma)} \blb
  \sigma-m \brb \ogl \blb s^{r-2}, s-m \brb \\ && \qquad -
  \sum_{p=0}^\tau \sum_{\sigma=p}^\infty q^{\sigma+\tau}
  \sum_{m=0}^{\min(s-\tau,\sigma-p)} \blb \sigma-p-m \brb \ogl \blb
  s^{r-3}, s-p,s-\tau-m \brb \\ && + \sum_{p=0}^{\tau-1}
  \sum_{\sigma=p}^\infty q^{\sigma+\tau+1}
  \sum_{m=0}^{\min(s-\tau,\sigma-p)} \blb \sigma-p-m \brb \ogl \blb
  s^{r-3}, s-1-p, s-\tau-m \brb \Big) \\ & = & \sum_{n=0}^\infty
  q^{s+r-\tau+2n} \Big( \sum_{\sigma=0}^\infty q^\sigma
  \sum_{m=0}^{\min(s,\sigma)} \blb \sigma-m \brb \ogl \blb s^{r-2},
  s-m \brb \\ && \qquad \qquad \qquad - \sum_{\sigma=0}^\infty
  q^{\sigma+\tau} \sum_{m=0}^{\min(s-\tau,\sigma)} \blb \sigma-m \brb
  \ogl \blb s^{r-2}, s-\tau-m \brb \Big) \\ & = &
  \sum_{\sigma,n=0}^\infty q^{s+r-\tau+\sigma+2n}
  \sum_{m=0}^{\min(\sigma,\tau-1)} \blb \sigma-m \brb \ogl \blb
  s^{r-2}, s-m \brb\, .
\end{eqnarray}
Leaving aside the sum of $n$, the above equation can be re-expressed
as:
\begin{equation}
  \sum_{\sigma=0}^\infty q^\sigma \sum_{m=0}^{\min(\sigma,\tau-1)}
  \blb \sigma-m \brb \ogl \blb s^{r-2}, s-m \brb = \sum_{k=0}^{\tau-1}
  \sum_{\sigma=k}^\infty q^\sigma \blb \sigma-k \brb \ogl \blb
  s^{r-2}, s-k \brb\, .
  \label{intermediary}
\end{equation}
Using the Littlewood-Richardson rule yields:
\begin{eqnarray}
  \sum_{\sigma=k}^\infty q^\sigma \blb \sigma-k \brb \ogl \blb
  s^{r-2}, s-k \brb & = & \sum_{p=0}^k \sum_{\sigma=k+p}^\infty
  q^\sigma\, (1+q)\, \blb s+\sigma-k-p, s^{r-3} \brb \\ & = &
  \sum_{p=0}^k \sum_{\sigma=2k-p}^\infty q^\sigma\, (1+q)\, \blb
  s+\sigma-2k+p, s^{r-3}, s-p \brb \\ & = & \sum_{p=0}^k
  \sum_{\sigma=p}^\infty q^{\sigma+2(k-p)}\, (1+q)\, \blb s+\sigma-p,
  s^{r-3}, s-p \brb\, ,
\end{eqnarray}
hence \eqref{intermediary} becomes:
\begin{equation}
  \sum_{k=0}^{\tau-1} \sum_{p=0}^k \sum_{\sigma=p}^\infty
  q^{\sigma+2(k-p)}\, (1+q)\, \blb s+\sigma-p, s^{r-3}, s-p \brb =
  \sum_{k=0}^{\tau-1} \sum_{p=0}^{\tau-1-k} \sum_{\sigma=p}^\infty
  q^{\sigma+2p}\, (1+q)\, \blb s+\sigma-k, s^{r-3}, s-k \brb\, .
\end{equation}
Finally, the character of the $\so(2,d-1)$ module $\D\big( s+r-\tau;
\blb s^{r-1} \brb \big)$ can be expressed as:
\begin{equation}
  \chi^{\so(2,d-1)}_{[s+r-\tau;(s^{r-1})]}(q, \vec x) =
  \sum_{n=0}^\infty \sum_{k=0}^{\tau-1} \sum_{p=0}^{\tau-1-k}
  \sum_{\sigma=k}^\infty q^{s+r-\tau+2p+\sigma+n}
  \chi^{\so(d-1)}_{(s+\sigma-k, s^{r-3}, s-k)}(\vec x)\, ,
\end{equation}
which proves that this module decomposes as:
\begin{equation}
  \D\big( s+r-\tau; \blb s^{r-1} \brb \big) \cong
  \bigoplus_{k=0}^{\tau-1} \bigoplus_{p=0}^{\tau-1-k}
  \bigoplus_{n=0}^\infty \bigoplus_{\sigma=k}^\infty \D_{\so(2) \oplus
    \so(d-1)}\big( s+r-\tau+\sigma+n+2p; \blb s+\sigma-k, s^{r-3}, s-k
  \brb \big) \, .
\end{equation}

\paragraph{Decomposition of the rectangular partially massless field module for $d=2r$.}
Before deriving the $\so(2)\oplus\so(d-1)$ decomposition of the
spin-$s$ and depth-$t$ RPM module, we will need to prove its
$\so(2)\oplus\so(d)$ decomposition:
\begin{equation}
  \D\big( s+r-t; \blb s^r \brb \big) \cong \bigoplus_{\ell=0}^{t-1}
  \bigoplus_{n=0}^{t-1-\ell} \bigoplus_{\sigma=\ell}^\infty
  \D_{\so(2)\oplus\so(d)}\big( s+r-t+\sigma+2n; \blb s-\ell+\sigma,
  s^{r-2}, s-\ell \brb \big)\, .
  \label{decompotoprove}
\end{equation}
To do so, we will need to use the Weyl character formula:
\begin{equation}
  \chi^{\so(d)}_{\Bell}(\vec x) = \frac{\sum_{w \in \W_{\so(d)}}
    \varepsilon(w) e^{(w(\Bell+\brho)-\brho,
      \bmu)}}{\prod_{\alpha\in\Phi_+} (1-e^{(\balpha,\bmu)})}\, ,
\end{equation}
where $\Phi_+$ is the set of positive roots of $\so(d)$, $\W_{\so(d)}$
its Weyl group, $\brho := \tfrac12 \sum_{\balpha \in \Phi_+} \balpha$
the Weyl vector, $\varepsilon(w)$ the signature of the Weyl group
element $w$, $(\,,)$ is inner product on the root space inherited from
the Killing form and $\bmu = \blb \mu_1, \dots, \mu_r \brb$ an
arbitrary root used to define the variables $x_i$ on which depends the
character via $x_i := e^{\mu_i}$. Defining
\begin{equation}
  \C_{\Bell}^{\so(d)}(\vec x) := e^{(\Bell,\bmu)}
  \prod_{\balpha\in\Phi_+} \frac{1}{1-e^{(\balpha, \bmu)}} =
  \prod_{i=1}^r x_i^{\ell_i} \prod_{\balpha\in\Phi_+}
  \frac{1}{1-e^{(\balpha, \bmu)}}\, ,
  \label{weyl_character}
\end{equation}
one can show
\begin{equation}
  w\big(\C_{\Bell}^{\so(d)}(\vec x)\big) := e^{(w(\Bell),\bmu)}
  \prod_{\balpha\in\Phi_+} \frac{1}{1-e^{(w(\balpha), \bmu)}} =
  \varepsilon(w) \C_{w \cdot \Bell}^{\so(d)}(\vec x)\, ,
\end{equation}
with $w\cdot \Bell = w(\Bell+\brho)-\brho$. Hence, the Weyl character
formula \eqref{weyl_character} can be rewritten as:
\begin{equation}
  \chi^{\so(d)}_{\Bell}(\vec x) = \sum_{w\in\W_{\so(d)}}
  w\big(\C_{\Bell}^{\so(d)}(\vec x) \big) := \fW_{\so(d)}\Big(
  \C^{\so(d)}_{\Bell}(\vec x) \Big)\, .
\end{equation}
The Weyl group for $\W_{\so(d)}$ for $d=2r$ is the semi-direct product
$\Sn_r \ltimes (\Z_2)^{r-1}$ where $\Sn_r$ is the permutation group of
$r$ elements. Any element of $\W_{\so(d)}$ acts on the $\so(d)$
characters as a combination of permutation of the variables $x_i$ and
(pair of) sign flip of their exponents (for more details, see
e.g. \cite{Dolan:2005wy} or the classical textbooks \cite{Fulton1991,
  Fuchs2003}). As a consequence, any function invariant under such
operation can go in and out of the Weyl symmetrizer $\fW_{\so(d)}$. In
particular, $\Pd d (q, \vec x)$ being invariant under any permutation
and any inversion ($x_i \rightarrow x_i^{-1}$) of the variables $x_i$,
it has the property:
\begin{equation}
  \fW_{\so(d)} \big( \C^{\so(d)}_{\Bell}(\vec x) \Pd d (q, \vec x)
  \big) = \fW_{\so(d)} \big( \C^{\so(d)}_{\Bell}(\vec x) \big) \Pd d
  (q, \vec x)\, .
  \label{propW}
\end{equation}
Using this property, as well as the fact that
$\C_{\Bell}^{\so(d)}(\vec x)$ verifies:
\begin{equation}
  \C_{(\ell_1, \dots, \ell_{j-1}, \ell_j+a, \ell_{j+1}, \dots,
    \ell_r)}^{\so(d)}(\vec x) = x_j^a\,
  \C_{(\ell_1,\dots,\ell_{j-1},\ell_j,\ell_{j+1},\dots,
    \ell_r)}^{\so(d)}(\vec x)\, ,
  \label{propC}
\end{equation}
we can now prove that the character of the $\so(2)\oplus\so(d)$
decomposition \eqref{decompotoprove} can be rewritten\,\footnote{This
  technique was originally used in \cite{Dolan:2005wy} (see appendix
  D) in order to derive the $\so(2) \oplus \so(d)$ decomposition of
  the unitary singleton modules. We merely adapt it here to the
  non-unitary case.}  as the character of the spin-$s$ and depth-$t$
RPM module \eqref{character_ho_singleton}:
\begin{eqnarray}
  \text{Decomposition} & = & \sum_{\ell=0}^{t-1} \sum_{n=0}^{t-1-\ell}
  \sum_{\sigma=\ell}^\infty q^{s+r-t+2n+\sigma}
  \chi^{\so(d)}_{(s+\sigma-\ell, s^{r-2}, s-\ell)}(\vec x) \\ & = &
  \sum_{\ell=0}^{t-1} q^{s+r-t+\ell} \frac{1-q^{2(t-\ell)}}{1-q^2}
  \fW_{\so(d)}\Big( \sum_{\sigma=0}^\infty q^\sigma x_1^\sigma
  \C^{\so(d)}_{(s^{r-1},s-\ell)}(\vec x) \Big) \\ & = &
  \frac{q^{s+r}}{1-q^2} \fW_{\so(d)}\Big( \sum_{\ell=0}^{t-1} \big[
    q^{-t+\ell} - q^{t-\ell} \big] \frac{x_r^{-\ell}}{1-qx_1}
  \C^{\so(d)}_{(s^r)}(\vec x) \Big) \\ & = & \frac{q^{s+r-t}}{1-q^2}
  \fW_{\so(d)}\Big( \frac{1 - (1-q^2) q^tx_r^{-t} - (1-q^{2t}) qx_r -
    q^{2t+2}}{(1-qx_1)(1-qx_r)(1-qx_r^{-1})} \C^{\so(d)}_{(s^r)}(\vec
  x) \Big) \\ & = & \frac{q^{s+r-t}}{1-q^2} \Pd d (q, \vec x) \times
  \\ && \quad \fW_{\so(d)}\Big( \big[ 1 - q^{2t+2} - (1-q^{2t}) qx_r -
    (1-q^2) q^tx_r^{-t} \big] (1-qx^{-1}) \prod_{i=2}^{r-1}
  (1-qx_i)(1-qx_i^{-1}) \C^{\so(d)}_{(s^r)}(\vec x) \Big) \nonumber
\end{eqnarray}

Each terms of the above sum as some power of $q$ times a $\so(d)$
character: using \eqref{weyl_character}, \eqref{propW} and
\eqref{propC} they are all of the form
\begin{equation}
  \fW_{\so(d)}\Big( q^\beta \prod_{i=1}^r x_i^{\alpha_i}
  \C_{(s^r)}^{\so(d)}(\vec x) \Big) = q^\beta\,
  \chi^{\so(d)}_{(s+\alpha_1, s+\alpha_2, \dots, s+\alpha_r)}(\vec
  x)\, .
\end{equation}
\begin{itemize}
\item The first piece, proportional to $(1-q^{2t+2})$ reads:
  \begin{eqnarray}
    \fW_{\so(d)}\Big( (1-qx^{-1}) & \prod_{i=2}^{r-1} &
    (1-qx_i)(1-qx_i^{-1})\, \C^{\so(d)}_{(s^r)}(\vec x) \Big) \\ & = &
    \sum_{\scriptstyle 0 \leqslant n \leqslant 2r-3 \atop \scriptstyle
      n := n_{1,-} + \sum_{i=2}^{r-1} n_{i,+} + n_{i,-}} (-q)^n
    \chi^{\so(d)}_{(s-n_{1,-},s+n_{2,+}-n_{2,-}, \dots,
      s+n_{r-1,+}-n_{r-1,-},s)}(\vec x)\, . \nonumber
  \end{eqnarray}
  Using the symmetry property
  \begin{equation}
    \chi^{\so(d)}_{(\ell_1, \dots, \ell_j, \ell-1, \ell+1, \ell_{j+3},
      \dots, \ell_r)}(\vec x) = - \chi^{\so(d)}_{(\ell_1, \dots,
      \ell_j, \ell, \ell, \ell_{j+3}, \dots, \ell_r)}(\vec x)\, ,
    \label{sym_prop}
  \end{equation}
  of the $\so(d)$ characters, one can check that the previous sum
  reduces to a single contribution:
  \begin{equation}
    \fW_{\so(d)}\Big( (1-qx^{-1}) \prod_{i=2}^{r-1}
    (1-qx_i)(1-qx_i^{-1})\, \C^{\so(d)}_{(s^r)}(\vec x) \Big) =
    \chi^{\so(d)}_{(s^r)}(\vec x)\, .
  \end{equation}
  For instance, for $r=3$:
  \begin{eqnarray}
    \fW_{\so(d)}\Big( (1-qx^{-1}) (1-qx_2)(1-qx_2^{-1})\,
    \C^{\so(d)}_{(s^r)}(\vec x) \Big) & = & (1+q^2)
    \chi^{\so(d)}_{(s,s,s)}(\vec x) -q \big(
    \chi^{\so(d)}_{(s,s+1,s)}(\vec x) + \chi^{\so(d)}_{(s,s-1,s)}(\vec
    x) \big) \nonumber \\ && \qquad -q (1+q^2)
    \chi^{\so(d)}_{(s-1,s,s)}(\vec x) \nonumber \\ && \qquad \quad
    +q^2 \big( \chi^{\so(d)}_{(s-1,s+1,s)}(\vec x) +
    \chi^{\so(d)}_{(s-1,s-1,s)}(\vec x) \big)\, . \nonumber
  \end{eqnarray}
  The symmetry property \eqref{sym_prop} implies that:
  \begin{equation}
    \chi^{\so(d)}_{(s,s+1,s)}(\vec x) = \chi^{\so(d)}_{(s,s-1,s)}(\vec
    x) = \chi^{\so(d)}_{(s-1,s-1,s)}(\vec x) =
    \chi^{\so(d)}_{(s-1,s,s)}(\vec x) = 0\, , \quad \text{and} \quad
    \chi^{\so(d)}_{(s-1,s+1,s)}(\vec x) = -
    \chi^{\so(d)}_{(s,s,s)}(\vec x)\, , \nonumber
  \end{equation}
  hence we indeed obtain:
  \begin{equation}
    \fW_{\so(d)}\Big( (1-qx^{-1}) (1-qx_2)(1-qx_2^{-1})\,
    \C^{\so(d)}_{(s^r)}(\vec x) \Big) = \chi^{\so(d)}_{(s,s,s)}(\vec
    x)\, .
  \end{equation}
\item The second piece, proportional to $(1-q^{2t})$, also reduces
  to a single contribution upon using the same symmetry property
  \eqref{sym_prop}:
  \begin{eqnarray}
    \fW_{\so(d)}\Big( qx_r (1-qx^{-1}) & \prod_{i=2}^{r-1} &
    (1-qx_i)(1-qx_i^{-1})\, \C^{\so(d)}_{(s^r)}(\vec x) \Big) \\ & = &
    \sum_{\scriptstyle 0 \leqslant n \leqslant 2r-3 \atop \scriptstyle
      n := n_{1,-} + \sum_{i=2}^{r-1} n_{i,+} + n_{i,-}} (-)^{n}
    q^{n+1} \chi^{\so(d)}_{(s-n_{1,-},s+n_{2,+}-n_{2,-}, \dots,
      s+n_{r-1,+}-n_{r-1,-},s+1)}(\vec x) \nonumber \\ & = & q^2\,
    \chi^{\so(d)}_{(s^r)}(\vec x)
  \end{eqnarray}
\item Finally, the third piece (proportional to $(1-q^2)$) contains
  more contributions:
  \begin{eqnarray}
    \fW_{\so(d)}\Big( q^tx_r^{-t} (1-qx^{-1}) & \prod_{i=2}^{r-1} &
    (1-qx_i)(1-qx_i^{-1})\, \C^{\so(d)}_{(s^r)}(\vec x) \Big) \\ & = &
    \sum_{\scriptstyle 0 \leqslant n \leqslant 2r-3 \atop \scriptstyle
      n := n_{1,-} + \sum_{i=2}^{r-1} n_{i,+} + n_{i,-}} (-1)^{n}
    q^{t+n} \chi^{\so(d)}_{(s-n_{1,-},s+n_{2,+}-n_{2,-}, \dots,
      s+n_{r-1,+}-n_{r-1,-},s-t)}(\vec x) \nonumber \\ & = &
    \sum_{k=0}^{r-1} (-1)^k q^{t+k}
    \chi^{\so(d)}_{(s^{r-1-k},(s-1)^{k},s-t)}(\vec x)\, .
  \end{eqnarray}
\end{itemize}
Putting those three piece together yields:
\begin{eqnarray}
  \sum_{\ell=0}^{t-1} \sum_{n=0}^{t-1-\ell} \sum_{\sigma=\ell}^\infty
  q^{s+r-t+2n+\sigma} \chi^{\so(d)}_{(s+\sigma-\ell, s^{r-2},
    s-\ell)}(\vec x) & = & \frac{q^{s+r-t}}{1-q^2} \Big(
  (1-q^{2t+2})\, \chi^{\so(d)}_{(s^r)}(\vec x) - (1-q^{2t})\, q^2\,
  \chi^{\so(d)}_{(s^r)}(\vec x) \\ & & - (1-q^2) \sum_{k=0}^{r-1}
  (-)^k q^{t+k}\, \chi^{\so(d)}_{(s^{r-1-k},(s-1)^k, s-t)}(\vec x)
  \Big) \Pd d (q, \vec x) \nonumber \\ & = & q^{s+r-t}\, \Pd d (q,
  \vec x)\, \Big( \chi^{\so(d)}_{(s^r)}(\vec x) \\ && \qquad \qquad -
  \sum_{k=0}^{r-1} (-)^k q^{t+k}\,
  \chi^{\so(d)}_{(s^{r-1-k},(s-1)^k,s-t)}(\vec x) \Big) \\ & = &
  \chi^{\so(2,d)}_{[s+r-t;(s^r)]}(q, \vec x)\, ,
\end{eqnarray}
thereby proving the $\so(2)\oplus\so(d)$ decomposition
\eqref{decompotoprove} of the spin-$s$ and depth-$t$ RPM module.\\

Finally, the $\so(2)\oplus\so(d-1)$ of this module reads:
\begin{eqnarray}
  \D\big( s+r-t; \blb s^r \brb \big) & \cong &
  \bigoplus_{\ell=0}^{t-1} \bigoplus_{n=0}^{t-1-\ell}
  \bigoplus_{\sigma=\ell}^\infty \D_{\so(2)\oplus\so(d)}\big(
  s+r-t+\sigma+2n; \blb s-\ell+\sigma, s^{r-2}, s-\ell \brb \big) \\ &
  \branch & \bigoplus_{\ell=0}^{t-1} \bigoplus_{n=0}^{t-1-\ell}
  \bigoplus_{\sigma=\ell}^\infty \bigoplus_{\tau=0}^{\sigma-\ell}
  \bigoplus_{k=0}^\ell \D_{\so(2)\oplus\so(d-1)}\big( s+r-t+\sigma+2n;
  \blb s+\tau, s^{r-3}, s-k \brb \big) \\ & \cong &
  \bigoplus_{\tau=1}^t \bigoplus_{k=0}^{\tau-1}
  \bigoplus_{p=0}^{\tau-1-k} \bigoplus_{n=0}^\infty
  \bigoplus_{\sigma=k}^\infty \D_{\so(2) \oplus \so(d-1)}\big(
  s+r-\tau+\sigma+n+2p; \blb s+\sigma-k, s^{r-3}, s-k \brb \big) \, ,
  \nonumber
\end{eqnarray}
thereby proving the branching rule:
\begin{equation}
  \D\big( s+r-t\,;\, \blb s^r \brb \big) \quad \branch \quad
  \bigoplus_{\tau=1}^t \D\big( s+r-\tau\,;\, \blb s^{r-1} \brb \big)\,
  .
\end{equation}

\newpage
\bibliographystyle{utphys.bst} 
\bibliography{biblio}

\end{document}